\documentclass[10pt,pre, aps, amsmath, amssymb, twocolumn, superscriptaddress,nofootinbib]{revtex4-2}





\usepackage[T1]{fontenc}
\usepackage[english]{babel}
\usepackage{varwidth}
\usepackage{amsmath}
\usepackage{amsthm}
\usepackage{amssymb}
\usepackage{graphicx}
\usepackage{hyperref}
\usepackage{orcidlink} 
\usepackage{comment}
\usepackage{colortbl}
\usepackage{multirow}
\makeatletter

\pdfpageheight\paperheight
\pdfpagewidth\paperwidth

\providecommand{\tabularnewline}{\\}
\newenvironment{cellvarwidth}[1][t]
    {\begin{varwidth}[#1]{\linewidth}}
    {\@finalstrut\@arstrutbox\end{varwidth}}

\theoremstyle{plain}
\newtheorem{thm}{\protect\theoremname}
\theoremstyle{definition}
\newtheorem{defn}{\protect\definitionname}
\theoremstyle{plain}
\newtheorem*{lem*}{\protect\lemmaname}


\usepackage{times}

\renewcommand\[{\begin{equation}}
\renewcommand\]{\end{equation}}

\DeclareMathOperator{\tr}{tr}


\newcommand{\1}{\leavevmode{\mathrm{1\ifmmode\mkern  -4.8mu\else\kern -.3em\fi I}}}

\makeatother

\providecommand{\definitionname}{Definition}
\providecommand{\lemmaname}{Lemma}
\providecommand{\theoremname}{Theorem}

\begin{document}

\title{Integrability and Chaos via fractal analysis of Spectral Form Factors:\\
 Gaussian approximations and exact results}

\author{Lorenzo Campos Venuti\orcidlink{0000-0002-0217-6101}}
\email[Corresponding author: ]{lorenzo.camposvenuti@unina.it}
\affiliation{Dipartimento di Fisica `Ettore Pancini', Universit\`a degli Studi di Napoli Federico II, Via Cintia 80126,  Napoli, Italy}
\affiliation{Department of Physics and Astronomy, University of Southern California, Los Angeles, USA}
\author{Jovan Odavi\'c\orcidlink{0000-0002-0217-61013-2729-8284}}
\affiliation{Dipartimento di Fisica `Ettore Pancini', Universit\`a degli Studi di Napoli Federico II, Via Cintia 80126,  Napoli, Italy}
\affiliation{Istituto Nazionale di Fisica Nucleare (INFN), Sezione di Napoli, Italy}
\author{Alioscia Hamma\orcidlink{0000-0003-0662-719X}}
\affiliation{Dipartimento di Fisica `Ettore Pancini', Universit\`a degli Studi di Napoli Federico II, Via Cintia 80126,  Napoli, Italy}
\affiliation{Istituto Nazionale di Fisica Nucleare (INFN), Sezione di Napoli, Italy}
\affiliation{Scuola Superiore Meridionale, Largo S. Marcellino 10, 80138 Napoli, Italy}

\keywords{Spectral Form Factor $|$ Fractals $|$ Quantum integrability $|$}

\begin{abstract}
It is well known that the spectral form factor (SFF) of a possibly degenerate many-body Hamiltonian can be identified with a planar random walk taking steps of unequal length. In this paper we push this identification further and propose to study the chaotic content of a Hamiltonian $H$ via its associated random walk seen as a fractal, using the tools of fractal geometry. In particular we conjecture that for chaotic Hamiltonians the Hausdorff dimension of the frontier of the corresponding random walk approaches the universal value $d_F=4/3$ -- the same value obtained when the random walk describes a Wiener process.
Our numerical simulations for non-integrable models confirm this expectation while for quasi-free integrable models we obtain a value $d_F = 1$. Additionally, we
numerically show that ``Bethe Ansatz walkers'' fall into a category
similar to the non-integrable walkers.
To motivate this conjecture we consider many-body Hamiltonians with degenerate but rationally independent eigenvalues. We prove that if the degeneracies satisfy certain Lyapunov conditions, the random walk becomes a Wiener process, $d_F=4/3$, and the distribution of the SFF becomes Gaussian. This is the familiar Gaussian approximation for the SFF which we show to be violated at very low temperature. We also compute the moments of the SFF exactly under milder hypotheses thus solving the classical problem of determining the moments of a random walker taking steps of unequal lengths. Finally, we consider quasi-free Fermionic models with possibly degenerate but rationally independent one-particle spectra. 
We show that in this case the distribution of the SFF becomes log-Normal and also give the exact form of the moments under milder hypotheses. 
\if 0 
We establish the mathematical equivalence between
the spectral form factor (SFF), a quantity used to identify the onset of
quantum chaos and scrambling in quantum many-body systems, and the
classical problem of statistical characterization of  planar random
walks. We thus associate to any quantum Hamiltonian a random process
on the plane. We set down rigorously the conditions under which such
random process becomes a Wiener process in the thermodynamic limit and the associated distribution
of the distance from the origin becomes Gaussian. This leads to the
well known Gaussian behavior of the spectral form factor for quantum
chaotic (non-integrable) models, which we show to be violated at low
temperature. For systems with quasi-free spectrum (integrable), instead,
the distribution of the SFF is Log-Normal. We compute all the moments of the
spectral form factor exactly without resorting to the Gaussian approximation. 
Assuming degeneracies in the 
quantum chaotic spectrum we solve the classical problem of random
walker taking steps of unequal lengths. 
Furthermore, we demonstrate
that the Hausdorff dimension of the frontier of the random walk, defined as the boundary of the
unbounded component of the complement, approaches 1 for the `integrable' Brownian motion, while the non-integrable walk approaches that obtained by the Schramm-Loewner
Evolution (SLE) with the fractal dimension $4/3$. 
Additionally, we
numerically show that `Bethe Ansatz' walkers fall into a category
similar to the non-integrable walkers. 
\fi
\end{abstract}


\maketitle


\section{Introduction}
Unitarity of quantum mechanics implies that the solutions of the Schr\"{o}dinger equation with nearby initial conditions remain close throughout the evolution, meaning that chaos in quantum mechanics cannot be defined via the sensitivity to the initial conditions -- the familiar \emph{butterfly effect} of chaotic classical mechanics~\cite{Goldstein_Safko_2002}. Instead, in order to characterize and understand quantum chaos, one has to resort to other quantities and ideas, such as the energy level statistics \cite{berry_level_1977,bohigas_characterization_1984} or the out-of-time-ordered correlators (OTOCs) \cite{maldacena_bound_2016,cotler_out--time-order_2018}. Another object used to probe quantum chaos is the spectral form factor (SFF)
 $S(t)$~\cite{Haake_2001}. The SFF is a sort of simplified version of a quantum expectation value and depends only on the Hamiltonian $H$ under consideration and a possibly not normalized quantum state $\rho$. It is precisely the SFF that will be at the center of our investigations. It has the form (we use $\hbar=1$ throughout the text)
\begin{equation}
S(t)=\left|\chi(t)\right|^{2},\,\,\chi(t)=\tr\left(\rho e^{-itH}\right)\,.\label{eq:start}
\end{equation}
The number of interesting physical problems --either quantum or classical-- that can be written, often exactly, in this form,  is  quite amazing. Lagrange showed that the study of the motion of perihelia is connected to the argument of an expression like $\chi(t)$ with $\rho=\1$ \citep{lagrange_theorie_1781}.
For a pure quantum state $\rho=|\psi\rangle\langle\psi|$,
$S(t)$ is known as Loschmidt echo or survival probability and is
connected to the theory of Fermi-edge singularity in metals \citep{schotte_tomonagas_1969},
the statistics of the work done after a quench \citep{silva_statistics_2008},
the approach to thermodynamic equilibrium \citep{campos_venuti_unitary_2010,campos_venuti_equilibration_2013,campos_venuti_universal_2014}
and (especially for $\rho =\1$) quantum chaos~ \citep{berry_semiclassical_1985,brezin_spectral_1997,prosen_time_1998,prosen_ergodic_1999,kunz_probability_1999,cotler_chaos_2017,roy_random_2020,suntajs_spectral_2021,roy_spectral_2022,
altland_statistics_2025}.
More recently, the SFF has found applications in quantum gravity \citep{cotler_black_2017,CotlerErratum2018,caceres_spectral_2022,choi_supersymmetric_2023, delCampo_Molina-Vilaplana_Sonner_2017} and non-unitary dynamics~\cite{Xu_Chenu_Prosen_delCampo_2021,Cornelius_Xu_Saxena_Chenu_delCampo_2022,MatsoukasRoubeas_Beau_Santos_delCampo_2023}. Using the Sachdev-Ye-Kitaev (SYK) Hamiltonian~\cite{sachdev1993gapless, KITP2015} as a model for a black hole, the late-time behavior
of the SFF probes the discreteness of the black hole spectrum. Three
regimes have been identified for the SFF. A short-time regime (the
``slope'') where $S(t)$ is observed to be self-averaging, a long
time ``plateau'' where $S(t)$ is approximately constant but subject
to large fluctuations, and an intermediate regime connecting the
two (the ``ramp''). It was recognized already in \citep{campos_venuti_unitary_2010,liu_spectral_2018}
that a cumulant expansion implies $S(t)\simeq e^{-\left(\left\langle H^{2}\right\rangle -\left\langle H\right\rangle ^{2}\right)t^{2}}$
for short times, which, due to the self-averaging property of the
SYK moments $\left\langle H^{2p}\right\rangle $, implies the observed
universal short-time regime. For larger time the cumulant expansion
breaks down and $S(t)$ start oscillating erratically. It is precisely this challenging regime upon which we shall put more emphasis in this work. 
It has been
known for quite some time that the value of the plateau is related
to a particular Gaussian approximation \cite{prange_spectral_1997,suntajs_spectral_2021} (we will come back to this point below). 

Yet another way to look at Eq.~(\ref{eq:start})
is as the position of a random walker on the complex plane. This analogy was
first utilized by Rayleigh \citep{rayleigh_problem_1905,pearson_problem_1905}
and the celebrated Rayleigh distribution is precisely the distribution
of the distance of the walker from the origin $\left|\chi(t)\right|$,
under the above mentioned Gaussian approximation. 

In this paper, we further exploit the random walk analogy. The entire
path of the walker is seen as a random process of discrete time and can be studied with the tools of fractal geometry~\cite{mandelbrot_fractal_1982}. We propose to study the chaotic content of the many-body Hamiltonian $H$ through features of its random walk, seen as a fractal, in particular the Hausdorff dimension of its frontier.  We conjecture that for chaotic Hamiltonians the dimension of the frontier approaches the value $4/3$ 
in the thermodynamic limit, 
which is the value famously obtained in \cite{lawler_dimension_2001} when the walker describes a Wiener process.  This conjecture is similar in spirit to conjecture A of Ref.~\cite{aurich_temporal_1999} formulated for one-body systems, which amounts to say that the (renormalized) distribution of the distance of the walker $\left|\chi(t)\right|$ has  the universal Rayleigh form for quantum systems whose classical dynamics is chaotic. Indeed, both our conjecture and conjecture A of Ref.~\cite{aurich_temporal_1999} are implied by a Wiener walker. 

Our numerics
for non-integrable models show that the dimension of the
frontier of the random walk $d_F$ is close to $4/3$. 
Instead,   for quasi-free systems we obtain a value close to one. An illustration of the fractals that can be obtained in the two different phases is shown  in Fig.~\ref{fig:mainplot} whereas our results for the dimensions of the associated frontiers are summarized in Table \ref{tab:Expectation-of-different}. Our results for Bethe-Ansatz integrable models are consistent with a value of $d_F < 4/3$ but are at the moment inconclusive. 
A potential benefit of this approach, is that the number of steps performed by the walker is given by the exponentially large Hilbert's space dimension, and as such, relatively small sizes are sufficient to obtain manifestly fractal behavior. The main shortcoming of this approach is the complexity of determining the fractal frontier, which at the moment cannot be fully automated (see text for details).

To motivate our conjecture we consider many-body systems with possibly degenerate spectra, as is the case for most physical models, chaotic or not. We show that, if the energy eigenvalues are linearly independent over the rationals, and certain Lyapunov conditions for the degeneracies are satisfied, the walker describes a Wiener process, the probability distribution of the SFF becomes Gaussian and the dimension of the frontier is $4/3$. This is the familiar ``Gaussian approximation'' well known in the context of the spectral form factor \cite{prange_spectral_1997,suntajs_spectral_2021} generalized to the case of degenerate spectra. In particular we show that at sufficiently low temperature and for $\rho = |\psi \rangle \langle \psi |$ 
for a small quench at a critical point, the Lyapunov condition is never satisfied and the Gaussian approximation breaks down. 

We also
give the exact expression for the moments of the SFF beyond the
Gaussian approximation as  a closed-form recursion and in full generality accounting for possible degeneracies in the spectrum. 
These exact results extend, in the plateau region, the beyond-Gaussian approximation obtained in the ramp-plateau regime of \cite{altland_statistics_2025} (see also \cite{Haake_2001,flack_statistics_2020,suntajs_quantum_2020,chan_spectral_2021,winer_hydrodynamic_2022,legramandi_moments_2024,kumar_leading_2025}), and can serve as a check for further approximations on the time dependence of the SFF. 

Finally, we consider  quasi-free systems with Fermionic rationally independent one-particle spectra. 
While, in the generic non-integrable case,
$\left|\chi(t)\right|$ is a sum of random variables, for quasi-free
models, $\left|\chi(t)\right|$ is rather a product of 
variables, and in case of rationally independent spectra an analogous central limit theorem applies for $\log \left|\chi(t)\right|$ and the distribution of the SFF becomes log-normal. We also give the exact form of the moments of the SFF in this case without resorting to the Gaussian approximation. 

\if 0

$\left|\chi(t)\right|$ is a product of random variables rather we show that the 
While, in the generic non-integrable case,
$\left|\chi(t)\right|$ is a sum of random variables, for quasi-free
models, $\left|\chi(t)\right|$ is rather a product of 
variables and its distribution is log-normal.

of
which $\left|\chi(t)\right|$ represents the modulus of the last position.
We set down rigorously the conditions under which the Gaussian approximation
for $\left|\chi(t)\right|$ is legitimate. 
We show that these hypotheses are generally satisfied in the high-temperature phase, but fail at low temperature and for $\rho = |\psi \rangle \langle \psi |$ 
for a small quench at a critical point. Under similar hypotheses,
the entire path becomes a 2D Wiener process in a precise scaling limit.
Similar results have been discussed for single-body quantum models with or without chaotic classical counterpart. For example in Ref.~\cite{aurich_temporal_1999} it was conjectured that for quantum systems whose classical dynamics is chaotic, the distribution of (the renormalized) $\left|\chi(t)\right|$ obeys the Gaussian approximation and it is therefore given by a universal Rayleigh distribution. Our results agree with the expectation of \cite{aurich_temporal_1999} extended to the many-body setting and in presence of degeneracies in the spectrum. Most notably we propose to investigate the integrability
or lack thereof of the Hamiltonian $H$ via properties of the random walk seen as a fractal, in particular studying the fractal dimension of its \emph{frontier}. Our numerics
for non-integrable models, show that the dimension of the
frontier of the random walk is close to $4/3$. This is the value famously obtained by means of Schramm–Loewner evolution (SLE) in \cite{lawler_dimension_2001} for the Wiener process. Instead,   for quasi-free systems we obtain a value close to one. An illustration of the fractals that can be obtained in the two different phases is shown  in Fig.~\ref{fig:mainplot} whereas our results for the dimensions of the associated frontiers are summarized in Table \ref{tab:Expectation-of-different}. 

While, in the generic non-integrable case,
$\left|\chi(t)\right|$ is a sum of independent variables, for quasi-free
models, $\left|\chi(t)\right|$ is rather a product of independent
variables and its distribution is log-normal. Finally, we also
give the exact expression for the moments of the SFF beyond the
Gaussian approximation as  a closed-form recursion and in full generality accounting for possible degeneracies in the spectrum. 
These exact results extend, in the plateau region, the beyond-Gaussian approximation obtained in the ramp-plateau regime of \cite{altland_statistics_2025} (see also \cite{Haake_2001,flack_statistics_2020,suntajs_quantum_2020,chan_spectral_2021,winer_hydrodynamic_2022,legramandi_moments_2024,kumar_leading_2025}), and can serve as a check for further approximations on the time dependence of the SFF. 

\fi

The paper is organized as follows. In Section II we connect time and ensemble averages (Theorem 1) and provide the standard theorems relating $\chi(t)$ to a Brownian motion. In Section III we give the exact moments of the SFF of any order, without resorting to any approximation. In Section IV we consider integrable quadratic Fermions and provide the limiting distribution of the SFF in this case, which turns out to be a LogNormal (Theorem 5), and also give the exact form of the moments without any approximation (Theorem 6). In Section V we provide numerical results based on the XXZ chain with next nearest neighbor. We extract the Hausdorff dimension of the frontier related to quadratic, Bethe-Ansatz integrable, and non-integrable cases. The results are summarized in Table \ref{tab:Expectation-of-different}.

\begin{table}
\begin{centering}
\begin{tabular}{|c|c|c|c|}
\hline 
 & Quadratic & \begin{cellvarwidth}[t]
\centering
Bethe-\\
Ansatz
\end{cellvarwidth} & \begin{cellvarwidth}[t]
\centering
Non-\\
Integrable
\end{cellvarwidth}\tabularnewline
\hline 
\hline 
\begin{cellvarwidth}[t]
\centering
Level spacing\\
distribution
\end{cellvarwidth} & Poisson & Poisson & Wigner-Dyson \tabularnewline
\hline 
\begin{cellvarwidth}[t]
\centering
Normalized $\left|\chi(t)\right|^{2}$\\
 distribution
\end{cellvarwidth} & \textsf{$\mathsf{LogNormal}$} & $\mathsf{Exp}(1)$ & $\mathsf{Exp}(1)$\tabularnewline
\hline 
\begin{cellvarwidth}[t]
\centering
Dimension\\
 of frontier
\end{cellvarwidth} & $1.01\pm 0.04$ & $1.24 \pm0.08$ & $1.32\pm 0.08$\tabularnewline
\hline 
\end{tabular}
\par\end{centering}
\caption{Expectation of different metrics of chaos for increasingly more chaotic
models (from left to right).  See e.g.~\cite{Pozsgay_2021} for the first row. \label{tab:Expectation-of-different}}
\end{table}


\section{The Fractal Geometry of Spectral Statistics}

\begin{figure*}
\begin{centering}
\hspace{14mm}
\begin{minipage}{80mm}
\includegraphics[width=80mm]{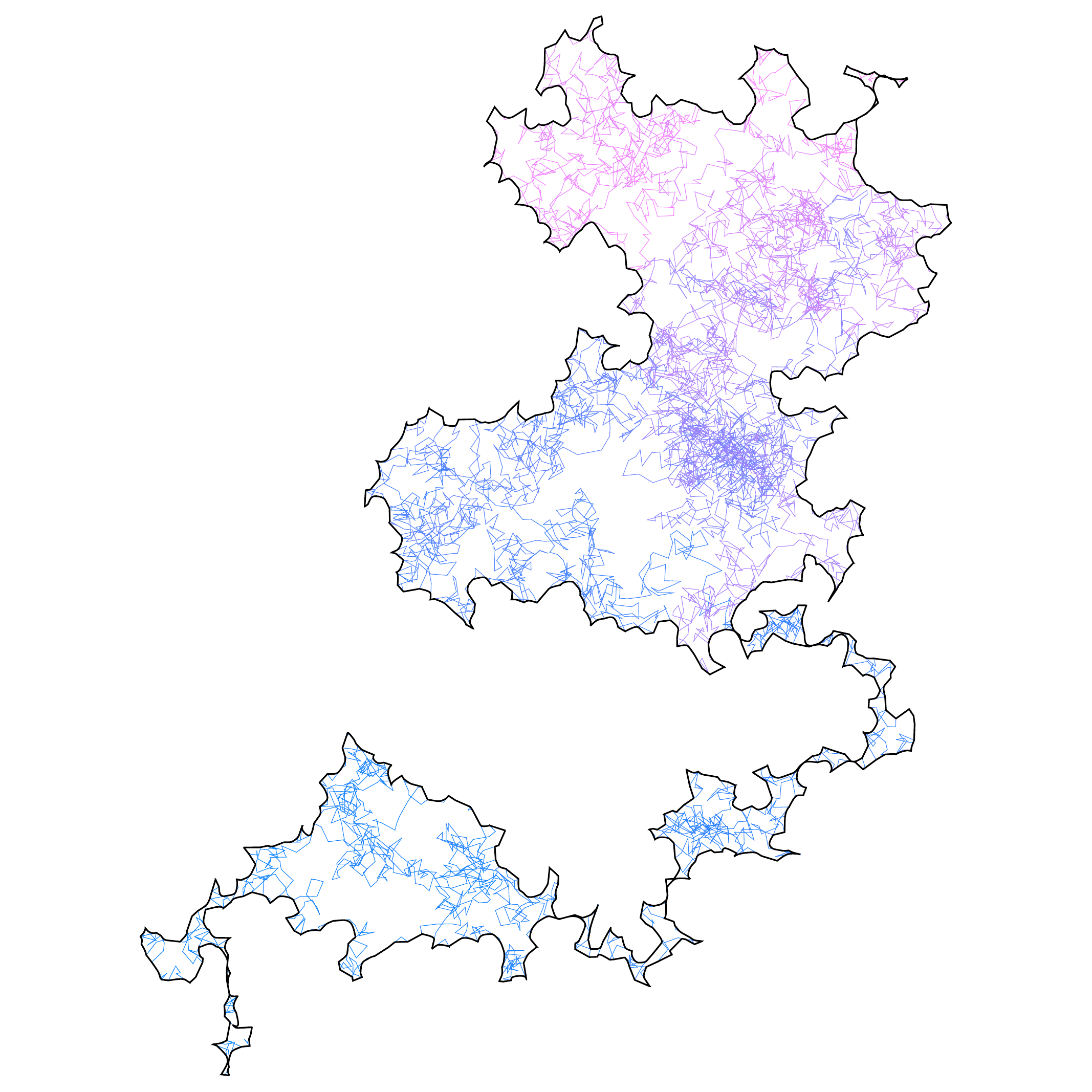}
\end{minipage}
\vspace{11pt}
\hspace{7mm}
\begin{minipage}{70mm}
\includegraphics[width=55mm]{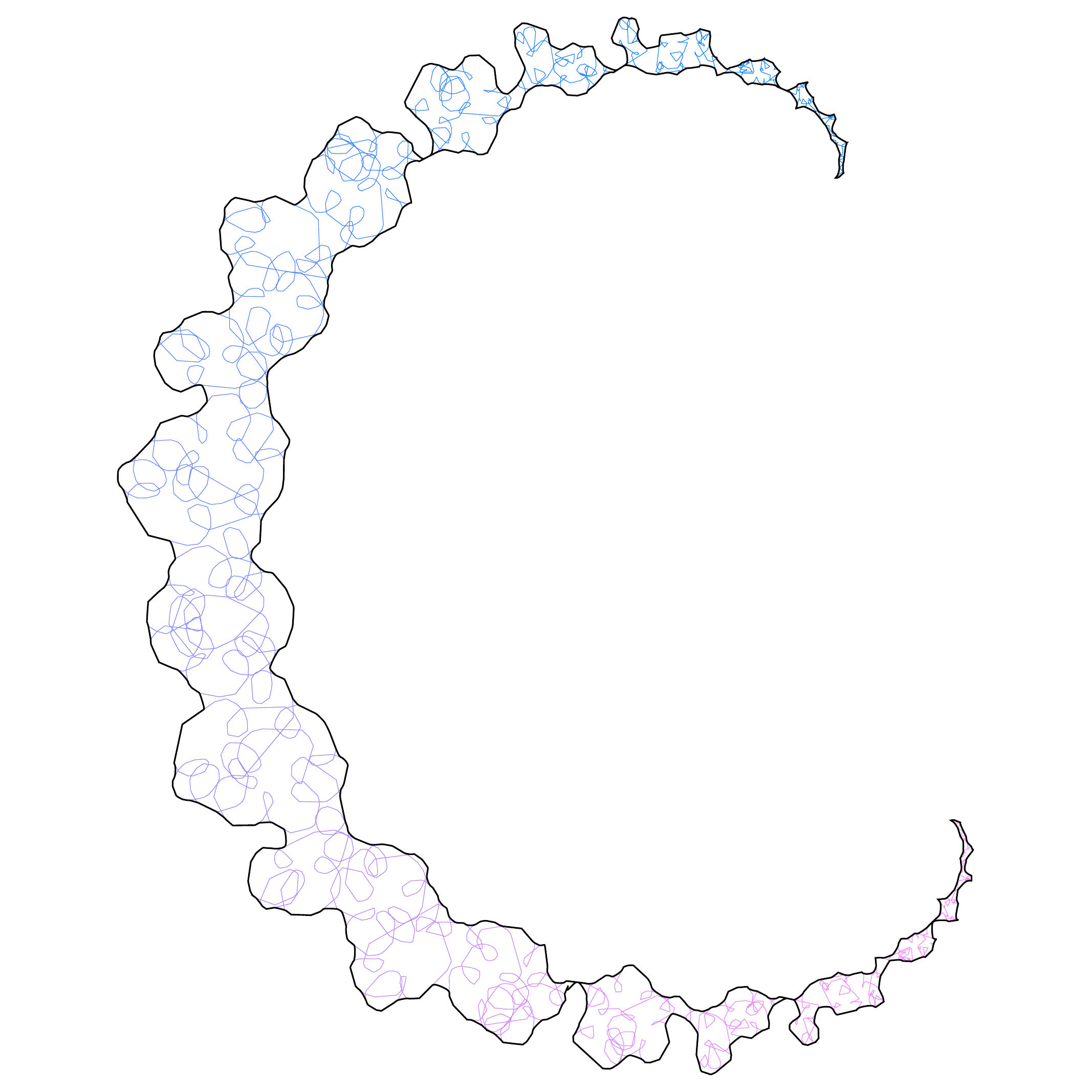}
\end{minipage}
\vspace{-7mm}
\end{centering}
\caption{Fractals and their frontiers, in black, corresponding to physical models: non-integrable (left) vs integrable (right). Color from blue (start) to pink (end) corresponds to increasing time-steps of the random walks. The frontier is essentially the boundary of the fractal without the inner islands, see text for details. The non-integrable Hamiltonian is the XXZ model with next nearest neighbor interactions (see Eq.~(\ref{XXZNNNHamiltonian})), with $(\Delta,\alpha) = (0.4,0.5)$. The integrable Hamiltonian is the XY model with parameters $(h,\gamma)=(0.2,0.3)$; see Appendix~\ref{sec:appfractral}. 
\label{fig:mainplot}}
\end{figure*}

For each ``system size'' $L$, we assume the Hamiltonian $H$ lives in a Hilbert space
of dimension $D$ e.g.~for qubits $D=2^L$. We allow the Hamiltonian to have degenerate eigenvalues. Indeed, general physical models, even non-integrable, chaotic ones, possess several symmetries and degeneracies. Using the spectral resolution of the Hamiltonian,
$H=\sum_{j=1}^{N_{B}}E_{j}\Pi_{j}$ ($E_{j}$ eigenenergies, $\Pi_{j}$
eigenprojectors, $N_{B}$ the number of blocks) we have, with $d_{j}=\tr\left(\rho\Pi_{j}\right)>0$,
\begin{equation}
\chi(t)=\sum_{j=1}^{N_{B}}d_{j}e^{-itE_{j}}.\label{eq:chit}
\end{equation}

We now show how to associate to any pair $\left(H,\rho\right)$ a
random walk on the plane. In this way, we encode information about
the pair $\left(H,\rho\right)$ into a single function, more precisely
into a stochastic process. This analogy has already been used in \citep{prange_spectral_1997}
but we are going to follow its consequences further. For each system
size $L$ we diagonalize the Hamiltonian $H$ and obtain the sets
$\left(\{ E_{j}\}^\uparrow ,\left\{ d_{j}\right\} \right)$ of ascending eigenvalues
$E_{1}<E_{2}<\cdots<E_{N_{B}}$. Time $t$ is a random variable acted
upon by the following expectation value, 
\[
\overline{f(t)}:=\lim_{T\to\infty}\frac{1}{T}\int_{0}^{T}dtf(t)\,,
\]
in other words, in the spirit of ergodic theory and statistical mechanics,
time is subject to infinite time average. We thus obtain an ensemble
of walks on the plane, indexed by a random variable $t$, i.e.,~a
random walk. At step $n$ the walker is at position $Z_{n}(t)=\sum_{j=1}^{n}d_{j}e^{-itE_{j}}$
on the complex plane ($Z_{0}(t)=0$), $n=1,2,\ldots,N_{B}$. At step
$n+1$ the walker rotates clockwise by an angle $tE_{n+1}$ and takes
a step of length $d_{n+1}$. For each $L$ we only have a walker with
a finite (albeit exponentially large) number of steps, and we are
naturally interested in what happens in the thermodynamic limit when
$L\to\infty$. Note that the path of the walker is strongly dependent
on the ordering of the energies. Our natural ordering is consistent
with the analogous choice in random matrix theory (RMT). 

Unlike
the more familiar random walks defined on a lattice, this walker can
occupy any point on the complex plane $\simeq\mathbb{R}^{2}$.\footnote{We move from $\mathbb{C}$ to $\mathbb{R}^{2}$ whichever is more
convenient.} A useful quantity to describe a random walk is the probability that
the walker is at position $\boldsymbol{z}=(z_{1},z_{2})$ after $n$
steps. Since the points $\boldsymbol{z}$ form a continuum we have
the following probability density 
\begin{equation}
P_{Z_{n}}(\boldsymbol{z})=\overline{\delta\left(z_{1}-\mathrm{Re}(Z_{n}(t))\right)\delta\left(z_{2}-\mathrm{Im}(Z_{n}(t))\right)}\,.\label{eq:density}
\end{equation}
In particular, the even moments of the distance of the walker from
the origin 
\begin{equation}
I_{m}:=\overline{\left|\chi(t)\right|^{2m}}\label{eq:moments}
\end{equation}
have been the subject of intense research efforts~\cite{borwein_arithmetic_2011,Borwein_Straub_Wan_Zudilin_Zagier_2012,garcia-garcia_exact_2018,legramandi_moments_2024}. As we will see, these quantities
correspond to the long time plateau studied in random matrix theory.
More precisely, in the context of RMT the following quantity has been
studied: $\mathsf{E}_{x}\left[\left|\chi_{x}(t)\right|^{2m}\right]$,
where we indicate with $x$ the collection of random variables, $\chi_{x}$
denotes that $\chi$ depends on the particular realization $x$, and
$\mathsf{E}_{x}\left[\bullet\right]$ is the ensemble average. The
connection between $I_{m}$ of Eq.~(\ref{eq:moments}) and $\mathsf{E}_{x}\left[\left|\chi_{x}(t)\right|^{2m}\right]$
studied in RMT is given by the following result (see Appendix \ref{sec:Proof-of-commutativity}
for a proof and Sec.~4.5 of \citep{oliviero_random_2021} for a similar
result):
\begin{thm}
\label{thm:commutativity}Let $H(x)$ be a Hamiltonian dependent on
the random variables $x$. If the induced distribution $\mu_{\boldsymbol{E}}(d\boldsymbol{E})$
of the eigenvalues of $H(x)$ is absolutely continuous,  and $\tr\left(\Pi_{j}(x)\right)$
and $d_{j}(x)$ do not depend on $x$, the ensemble average and time
average commute and one has:
\begin{align}
\lim_{t\to\infty}\mathsf{E}_{x}\left[\left|\chi_{x}(t)\right|^{2m}\right] & =\overline{\mathsf{E}_{x}\left[\left|\chi_{x}(t)\right|^{2m}\right]}\nonumber \\
= \mathsf{E}_{x}\left[\overline{\left|\chi_{x}(t)\right|^{2m}}\right] & =\overline{\left|\chi_{x}(t)\right|^{2m}}.
\end{align}
\end{thm}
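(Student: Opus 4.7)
The strategy is to reduce every equality in the chain to a statement about the finite collection of phase terms appearing in the expansion of $|\chi_x(t)|^{2m}$. Using Eq.~(\ref{eq:chit}) and the hypothesis that $d_j$ is $x$-independent, I would write
\[
|\chi_x(t)|^{2m}=\sum_{\vec{j},\vec{k}}D_{\vec{j},\vec{k}}\,e^{-it\Omega_{\vec{j},\vec{k}}(x)},
\]
with $D_{\vec{j},\vec{k}}:=\prod_{i=1}^{m}d_{j_i}d_{k_i}$, multi-indices $\vec{j},\vec{k}\in\{1,\ldots,N_B\}^m$, and $\Omega_{\vec{j},\vec{k}}(x):=\sum_{i}E_{j_i}(x)-\sum_{i}E_{k_i}(x)$. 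Since $|\chi_x(t)|^{2m}\le(\tr\rho)^{2m}$ uniformly in $t$ and $x$, every interchange of limits in the theorem will reduce to controlling the single scalar quantity $\mathsf{E}_x[e^{-it\Omega_{\vec{j},\vec{k}}(x)}]$ for each $(\vec{j},\vec{k})$.

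The middle equality $\overline{\mathsf{E}_x[|\chi_x(t)|^{2m}]}=\mathsf{E}_x[\overline{|\chi_x(t)|^{2m}}]$ I would establish via Fubini and dominated convergence: the uniform bound allows
\[
\tfrac{1}{T}\int_{0}^{T}\mathsf{E}_x\bigl[|\chi_x(t)|^{2m}\bigr]dt=\mathsf{E}_x\!\left[\tfrac{1}{T}\int_{0}^{T}|\chi_x(t)|^{2m}dt\right],
\]
after which I would send $T\to\infty$ on both sides and pass the limit inside $\mathsf{E}_x$ using the same bound. The last equality $\mathsf{E}_x[\overline{|\chi_x(t)|^{2m}}]=\overline{|\chi_x(t)|^{2m}}$ will follow once I show that $\overline{|\chi_x(t)|^{2m}}$ is almost surely independent of $x$.

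The key input is the absolute continuity of $\mu_{\boldsymbol{E}}$. Any nontrivial rational relation $\sum c_jE_j=0$ with $c_j\in\mathbb{Q}$ cuts out an affine hyperplane of $\mathbb{R}^{N_B}$, hence a Lebesgue-null set, and the countable union of all such hyperplanes remains null. Thus, for $\mu_{\boldsymbol{E}}$-almost every $x$ the eigenvalues $E_1(x),\ldots,E_{N_B}(x)$ are rationally independent, so Weyl's equidistribution theorem delivers
\[
\overline{|\chi_x(t)|^{2m}}=\sum_{(\vec{j},\vec{k}):\,\{j_i\}=\{k_i\}}D_{\vec{j},\vec{k}},
\]
manifestly independent of $x$, which secures the last equality. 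For the first equality, I would apply Riemann--Lebesgue term by term: absolute continuity of the joint law of $\boldsymbol{E}$ on $\mathbb{R}^{N_B}$ implies that every nontrivial linear combination $\Omega_{\vec{j},\vec{k}}$ has an absolutely continuous marginal (its density is obtained by integrating the joint density against the orthogonal complement), so $\mathsf{E}_x[e^{-it\Omega_{\vec{j},\vec{k}}(x)}]\to0$. Only diagonal contributions survive, matching the time-averaged value, and because this limit exists its Cesàro mean coincides with it.

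The main technical obstacle is the claim that every off-diagonal marginal of $\mu_{\boldsymbol{E}}$ is itself absolutely continuous. This is routine when $\mu_{\boldsymbol{E}}$ admits a density with respect to $N_B$-dimensional Lebesgue measure, but subtler when the distribution is effectively supported on a lower-dimensional manifold (e.g.\ after fixing a conserved quantity or symmetry sector); there one must reformulate absolute continuity with respect to the natural Hausdorff measure on the support and reverify Riemann--Lebesgue for the induced marginals. Modulo this caveat, Fubini, Weyl equidistribution, and Riemann--Lebesgue combine directly to prove the theorem.
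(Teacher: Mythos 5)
Your proposal is correct and follows essentially the same route as the paper's proof in Appendix~\ref{sec:Proof-of-commutativity}: expand $|\chi_x(t)|^{2m}$ into phase terms, use absolute continuity of $\mu_{\boldsymbol{E}}$ to get rational independence of the eigenvalues almost surely (so the time average reduces to the diagonal/permutation terms and is $x$-independent), and apply the Riemann--Lebesgue lemma to kill the off-diagonal terms in the $t\to\infty$ limit of the ensemble average. Your explicit treatment of the Fubini/dominated-convergence step and of the one-dimensional marginals is a slightly more careful rendering of the same argument, and your closing caveat concerns a weakening of the hypothesis rather than a gap under the stated one.
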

The last equation means that infinite time moments are independent
of the specific realization $x$. All the hypotheses are very natural
for random models, in particular if degeneracies exist, they should
not depend on the random variables. This result applies, for example,
to the SYK model at infinite temperature ($\rho=\1$) \footnote{In principle we do not know if the eigenvalue distribution is absolutely
continuous. Several numerical calculations and the computations in \citep{garcia-garcia_exact_2018}
suggest that this is the case.} and means that the value of the moments of the SFF, the plateaus,
can be obtained also taking the infinite time average and even over
a single realization. 

So far, the random-walk interpretation of the spectral form factor
provides a perhaps amusing physical analogy, but
there are other tools that can be used to analyze a random walk. In
general, the trajectory formed by a random walk is a fractal: a geometrical
object whose Hausdorff dimension can assume non-integer values. To
gain some insight let us consider the familiar random walk on a two-dimensional
square lattice. At each step, the walker chooses one out of the four
possible directions uniformly at random. In the scaling limit, where
the number of steps  goes to infinity and the lattice spacing goes to zero (in a precise way\footnote{Changing variables to $s=n\tau$, $\boldsymbol{x}(s)=aZ_{n}$,
($s$ has units of time and $\boldsymbol{x}$ of space) in the scaling
limit $a,\tau\to0$ with $a^{2}/\tau=2D$ constant, $P_{Z_{n}}\left(\boldsymbol{z}\right)$
becomes $P(\boldsymbol{x},s)=\left(4\pi Ds\right)^{-1}\exp(-\left\Vert \boldsymbol{x}\right\Vert ^{2}/(4Ds))$,
the familiar Green's function of diffusion (or heat) equation. See e.g.~\cite{parisi_statistical_1998}}),
the random walk becomes a Wiener process
\footnote{Random walk, Brownian motion and Wiener process are sometimes all
considered synonym. In 1827 Brownian motion was observed by botanist
R.~Brown in minute particles suspended in liquids. Wiener constructed
the ``Wiener process'' in 1923 as a mathematical model of Brownian
motion.}, for which it can be proven that the fractal dimension
is two with probability one \citep{falconer_geometry_1985}. Another
interesting geometrical feature of a fractal is its frontier, which
is defined as follows. Let $K$ be a compact, connected set of the
plane. The complement of $K$ has one unbounded component and its
boundary is the frontier of $K$. Intuitively the frontier of a set
is the boundary without the inner islands. Mandelbrot conjectured
in 1982 that the fractal dimension of the frontier of the Wiener process
is 4/3 \citep{mandelbrot_fractal_1982}. This was rigorously proved
in 2001 using methods of stochastic Loewner evolution \citep{lawler_dimension_2001}.
Our aim here is to study the pair $\left(H,\rho\right)$ using methods
of fractal geometry, in particular studying the dimension of the frontier
of its corresponding random walk. As we will show, for independent
spectra $\left\{ E_{j}\right\} $ and if the weights $d_{j}$ satisfy a certain \emph{Lyapunov} condition
to be defined momentarily, the scaling limit of its associated random
walk becomes a Wiener process in the thermodynamic limit. We begin
by computing the probability distribution of the position Eq.~(\ref{eq:density}),
or rather its renormalized form $Y_{n}=Z_{n}/\Delta Z_{n}$ where
$\left(\Delta Z_{n}\right)^{2}=\overline{\left|Z_{n}(t)\right|^{2}}$,
which can be readily shown to be equal to $\left(\Delta Z_{n}\right)^{2}=\sum_{j=1}^{n}d_{j}^{2}$. 

We say that the spectrum of $H$ is \emph{independent} if the energies
$\left\{ E_{j}\right\} $ are linearly independent over the rationals.
Proving that a certain set of numbers is independent can be a mathematically
daunting task. As an extreme example, it is not yet known whether the single-qubit spectrum, $\left\{ 1,\gamma_C \right\} $, where $\gamma_C=0.5772\ldots $ is the 
Euler-Mascheroni constant~\cite{Weisstein_EulerMascheroni}, is independent or not. However, in general,
adding a tiny amount of randomness makes the spectrum independent
with probability one \cite{deutsch_quantum_1991}. For example, the classical ensembles of random
matrices, such as GUE, GOE and so on, have independent spectrum almost surely. If the energies
are independent, as a consequence of the Kronecker-Weyl ergodic theorem (see \cite{arnold_mathematical_1989}),  as $t$ increases, the phases $tE_{j}$ in Eq.~(\ref{eq:chit})
fill the torus $\mathbb{T}^{n}$ densely and uniformly, so $Z_{n}$
becomes a sum of $n$ complex independent but not necessarily identically distributed random variables whose phases
are distributed uniformly in $[0,2\pi)$. 
Under a certain \emph{Lyapunov} condition that ensures that a sufficient number of weights
$d_{j}$ are of the same order of magnitude,  the central limit
theorem (CLT)~\cite{billingsley_probability_2012} can be established, and the random variable $Y_n$ becomes Gaussian in the limit $n\to \infty$. This is precisely the content of the standard Lyapunov CLT for independent but not identically distributed random variables, that fits particularly well to our situation (see e.g.~\cite{billingsley_probability_2012}). The proof for our specific case is simple enough that we prefer to provide it here in order to keep the exposition self-contained (see  Appendix \ref{sec:Proof-of-CLT}).  
Defining 
\begin{equation}
R_{q}^{n}:=\frac{\sum_{j=1}^{n}d_{j}^{2q}}{\left(\sum_{j=1}^{n}d_{j}^{2}\right)^{q}}\,,
\end{equation}
we say that the weights $\left\{ d_{j}\right\} $ satisfy the \emph{Lyapunov} condition, if there exist a $q>1$ such that
\begin{equation}
\lim_{n\to\infty}R_{q}^{n}=0\,. \label{eq:generic_weights}
\end{equation}
Note that, for example, for $\rho=\1$ and if the Hamiltonian spectrum
is non-degenerate $d_{j}=1\,\forall j$, one gets $R_{q}^{n}=n^{-(q-1)}$
and Eq.~(\ref{eq:generic_weights}) holds trivially for all $q>1$.
For generic spectra, the weights $d_{j}$ control the variance of
each independent variable. Each variance, however, depends (possibly)
also on $n$ through $L$. This introduces a sort of ``correlation''
among the random variables and one cannot simply use the standard CLT for independent, identically distributed random variables.
However, one can show that the Lyapunov condition Eq.~(\eqref{eq:generic_weights}) implies $R_{p}^{n}\to0$
for all $p>1$, which in turn implies that all the cumulants of $Y_{n}$
beyond the second tend to zero implying Gaussianity. 
The details are shown in Appendix \ref{sec:Proof-of-CLT}.  
\begin{thm}
\label{thm:CLT_lyapunovplus} For independent spectrum $\left\{ E_{j}\right\} $
if the weights $\left\{ d_{j}\right\} $ satisfy the Lyapunov condition Eq.~\eqref{eq:generic_weights}, the multivariate
probability distribution of the rescaled variable $Y_{n}$ tends to
a standard 2D Gaussian in the thermodynamic limit, i.e.,
\[
\lim_{L\to\infty}P_{Y_{n}}(\boldsymbol{y})=\frac{1}{\pi}e^{-\left\Vert \boldsymbol{y}\right\Vert ^{2}}.
\]
The converse also holds: if the weights do not satisfy the Lyapunov condition Eq.~(\ref{eq:generic_weights}) the limiting
distribution is not Gaussian. 
\end{thm}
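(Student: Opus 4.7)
The plan is to convert the infinite time average to an independent-random-variable ensemble average via Kronecker--Weyl, and then establish the Gaussian limit through a cumulant expansion whose higher orders are controlled by the Lyapunov ratios $R_q^n$. For the first step, I would use the rational independence of $\{E_j\}$: by Kronecker--Weyl, the orbit $(tE_1,\ldots,tE_n)\bmod 2\pi$ is uniformly equidistributed on $\mathbb{T}^n$, so the time average equals the ensemble average obtained by replacing each phase factor $e^{-itE_j}$ by an independent random variable $U_j=e^{-i\phi_j}$ with $\phi_j$ uniform on $[0,2\pi)$. Consequently $Z_n$ becomes the sum $\sum_j d_j U_j$ of independent, zero-mean, $U(1)$-symmetric complex summands, and $Y_n=Z_n/\Delta Z_n$ is their normalized sum; its $2\times 2$ covariance matrix is already $\tfrac12 I$, the covariance of the target density $(1/\pi)e^{-\|\boldsymbol{y}\|^2}$.

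Next I would track cumulants. Global $U(1)$ invariance of the joint distribution kills all cumulants of $(Z_n,\bar Z_n)$ except those of ``balanced'' type $(m,m)$; by independence of the $U_j$ each such cumulant is additive, and since $\kappa^{(m,m)}(d_j U_j)=d_j^{2m}\,\kappa^{(m,m)}(U_1)$, the $(m,m)$-cumulant of $Y_n$ reduces to a universal constant times $\sum_j d_j^{2m}/(\sum_j d_j^2)^m=R_m^n$. The second cumulant is pinned to $1$ by construction, so the CLT is reduced to showing $R_m^n\to 0$ for every integer $m\ge 2$, after which L\'evy's continuity theorem yields the stated Gaussian.

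The technical heart of the proof is the implication ``$R_q^n\to 0$ for some $q>1\ \Rightarrow\ R_p^n\to 0$ for every $p>1$''. For $1<p\le q$, log-convexity of the $L^p$-norms (Lyapunov's inequality) gives $\|d\|_{2p}\le \|d\|_2^{1-\theta}\|d\|_{2q}^{\theta}$ with $1/(2p)=(1-\theta)/2+\theta/(2q)$, which after raising to the $2p$-th power and dividing by $\|d\|_2^{2p}$ yields $R_p^n\le (R_q^n)^{(p-1)/(q-1)}$. For $p>q$, I would use the elementary sup-bound $\max_j d_j^{2(p-q)}\le (\sum_k d_k^{2q})^{(p-q)/q}$ to factor one copy of $\max_j d_j^{2(p-q)}$ out of $\sum_j d_j^{2p}$, giving $R_p^n\le (R_q^n)^{p/q}$. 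Both cases send $R_p^n$ to zero, so all higher cumulants of $Y_n$ vanish in the limit.

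For the converse, if the Lyapunov condition fails then by the same interpolation $R_q^n\not\to 0$ for any $q>1$; in particular, along a subsequence $R_2^n$ stays bounded away from $0$, so the $(2,2)$-cumulant of $Y_n$ survives and the limit along that subsequence cannot be Gaussian. The main obstacle, in my view, is the $p>q$ branch of the interpolation: standard $L^p$-log-convexity only interpolates between two finite exponents, so reaching exponents beyond $q$ requires the separate sup-bound above, which is short but not automatic.
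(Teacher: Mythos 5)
Your proposal is correct and follows essentially the same route as the paper's Appendix~\ref{sec:Proof-of-CLT}: after Kronecker--Weyl reduces the time average to independent uniform phases, your balanced $(m,m)$-cumulants are exactly the coefficients $b_pR_p^n$ of the paper's $\ln J_0$ expansion of the characteristic function, and your interpolation bounds $R_p^n\le (R_q^n)^{(p-1)/(q-1)}$ for $p\le q$ and $R_p^n\le (R_q^n)^{p/q}$ for $p>q$ reproduce the paper's H\"older/norm-monotonicity step. The only point left implicit is the passage from ``each higher cumulant vanishes'' to convergence of the full characteristic function, which needs the uniform tail control $R_p^n\le 1$ (the paper invokes Tannery's theorem for this); alternatively your cumulant convergence yields the conclusion via the method of moments, since the Gaussian is moment-determinate.
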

Theorem~\ref{thm:CLT_lyapunovplus} has several consequences: i) The moments of the squared
renormalized distance $\overline{\left|Y_{n}\right|^{2q}}$ approach
$q!$ in the thermodynamic limit, and, which is the same: ii) The
distribution of the square distance tends to $P_{\left|Y_{n}\right|^{2}}(r)=\vartheta(r)e^{-r}$,
i.e.,~$\left|Y_{n}\right|^{2}\stackrel{L\to\infty}{\longrightarrow}\mathsf{Exp}(1)$.
iii) The distance itself follows a Rayleigh distribution $P_{\left|Y_{n}\right|}(\rho)\to\vartheta(\rho)2\rho e^{-\rho^{2}}$
(with moments $\overline{\left|Y_{n}\right|^{q}}=\Gamma(1+q/2)$).
iv) At finite size the unscaled variable $Z_{n}$ is approximately
distributed as 
\[
P_{Z_{n}}\left(\boldsymbol{z}\right)\simeq\frac{1}{\pi\Delta Z_{n}^{2}}e^{-\left\Vert \boldsymbol{z}\right\Vert ^{2}/(\Delta Z_{n}^{2})}.
\]
v) If all the weights $d_{j}=1$, $\Delta Z_{n}^{2}=$number of steps$=n$.
This is the essential ingredient of the Brownian motion/Wiener process:
the average distance at time $n$ is proportional to $\sqrt{n}$. 

There are essentially two ways to violate the Lyapunov condition:
a) one can have $\left|\left\{ d_{j} > \epsilon \right\} \right|=M$ where $\epsilon$
is a (tiny) threshold and $M$ a ``small'' number. This implies that $Y_{n}$, in practice, is a sum of only $M$
independent variables and the CLT is violated; b) One can have the
variances of each random variable conspire such that the cumulants
of $Y_{n}$ tend to a non-zero value. Condition a) happens, for example, when $\rho=e^{-\beta H}/Z$ at very low temperature when only the lowest energy states are populated. We show evidence of this effect in the SYK model in Appendix~\ref{sub:SFF_SYK}. At high temperature the distribution of  $\vert Y_n \vert^2$ is $\mathsf{Exp}(1)$ in accordance with Theorem~\ref{thm:CLT_lyapunovplus} while at very low temperature the distribution becomes double peaked signaling a violation of the Lyapunov condition~\eqref{eq:generic_weights}.   
Condition b) happens, for example,
for $d_{j}=\left(j/n\right)^{-\alpha}$for $\alpha>1/2.$ In this
case the ratios converge to universal constants $R_{q}^{n}\to\zeta(2q\alpha)/\zeta(2\alpha)^{q}$
where $\zeta(z)$ is the Riemann zeta function (see \citep{campos_venuti_universal_2014}
for more details). This situation is (highly) unlikely when
$d_{j}$ is simply the degeneracy of level $E_{j}$ ($\rho=\1$),
but becomes possible at finite temperature $\rho=e^{-\beta H}$. In
case $\rho=|\psi\rangle\langle\psi|$, such violation of the CLT is obtained by a  {\it small quench} performed at
a quantum critical point of $H$ 
as was detailed in \citep{campos_venuti_universal_2014,campos_venuti_theory_2015,campos_venuti_theory_2015-1}. Very briefly, the small quench condition is the following. Let $H(\lambda)= H_0 + \lambda V$ be a Hamiltonian describing a system of finite linear size $L$ with a quantum critical point at $\lambda_c$. The ground state of $H(\lambda_c)$ is precisely $|\psi\rangle $ and the Hamiltonian under consideration is $H(\lambda)$ with $\delta \lambda = |\lambda - \lambda_c | \ll L^{-1/\nu }$ where $\nu $ is the correlation length critical exponent. More details can be found in \citep{campos_venuti_universal_2014,campos_venuti_theory_2015,campos_venuti_theory_2015-1}. 

Note also that when $d_j$ is the degeneracy of level $j$, both cases above are extremely unlikely in RMT. For example, $d_j = 1$ for the classical random matrix ensembles (GUE, GOE, etc.) with probability one, while for the SYK model $d_j =1,(2) $ depending on whether $ N \mod 8 =0 $ (or not) (see Appendix \ref{sec:numerics}).

Motivated by the remark v) above, we now investigate if and under what
conditions, the above random walk becomes a Wiener process in the
scaling limit. The fact that a sum of independent, identically distributed random variables, when properly rescaled, converges to a Wiener process is a standard result that goes under the name of Donsker's theorem or functional CLT. In our case, the variables are independent but not identically distributed. The extension of Donsker's theorem to the not identically distributed setting was first pursued in \cite{borovkov_estimates_1972,borovkov_estimates_1981,borovkov_rate_1983}. Using the known characterization of the Wiener process \cite{falconer_geometry_1985}, we provide a particularly simple proof for our case. 

We define the scaling limit by considering
the following random process:
\begin{equation}
W_{s}^{N}=\frac{1}{\Delta Z_{N}}\sum_{j=1}^{\left\lfloor Ns\right\rfloor }d_{j}e^{-itE_{j}}\,,\label{eq:wiener_N}
\end{equation}
where $s$ is the time variable of the random process. Note that if
we fix $N=\alpha N_{B}$ then $s$ is constrained to $[0,1/\alpha]$
but other choices are possible that result in $s$ being defined on
a larger set. To check whether $W_{s}^{N}$ defines a two dimensional
Wiener process $W_{s}$ in the limit $N\to\infty$, one should check
(see e.g.~\citep{falconer_geometry_1985}): a) That the paths have
\emph{independent increments}, that is, $W_{s_{2}}-W_{s_{1}}$ and
$W_{s_{4}}-W_{s_{3}}$ are independent if $s_{1}\le s_{2}\le s_{3}\le s_{4}$;
and b) That the distribution of $W_{s+h}-W_{s}$ is \emph{stationary}
(does not depend on $s$), \emph{isotropic }(does not depend on direction)
and is Gaussian with zero mean such that
\begin{equation}
\mathrm{Prob}\left(W:\left|W_{s+h}-W_{s}\right|\le\rho\right)=\frac{1}{h}\int_{0}^{\rho}dr\,r\exp\left(-r^{2}/(2h)\right).\label{eq:conditionb}
\end{equation}
We now define the "time dependent" version of $R^n_q$: 
\begin{equation}
    R_{1}^{N}(h,s):=\frac{\sum_{k=\left\lfloor Ns\right\rfloor +1}^{\left\lfloor N(s+h)\right\rfloor }d_{k}^{2}}{\sum_{k=1}^{N}d_{k}^{2}} \, .  
\end{equation}
We have the following result.
\begin{thm} \label{thm:Wiener}
If the spectrum $\left\{ E_{j}\right\} $ is independent and the weights
$\left\{ d_{j}\right\} $ satisfy the Lyapunov condition Eq.~\eqref{eq:generic_weights} and additionally satisfy 
\begin{equation}
\lim_{N\to\infty}R_{1}^{N}(h,s)=h \, ,\label{eq:Lyapunov_wiener}
\end{equation}
then $W_{s}^{N}$ converges to a Wiener process as $N\to\infty$. 
\end{thm}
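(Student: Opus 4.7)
The plan is to verify the two defining properties of a two-dimensional Wiener process recalled just before the theorem statement: (a) the increments $W_{s_2}^N-W_{s_1}^N$ and $W_{s_4}^N-W_{s_3}^N$ over disjoint intervals with $s_1\le s_2\le s_3\le s_4$ are independent, and (b) $W_{s+h}^N-W_s^N$ is stationary, isotropic and Gaussian, with the law specified by Eq.~\eqref{eq:conditionb} in the limit $N\to\infty$. Once both are established, the characterization of Brownian motion recalled in the discussion above identifies the limit as a two-dimensional Wiener process.

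For (a), I would observe that $W_{s_i+\delta}^N-W_{s_i}^N$ is, up to the deterministic prefactor $1/\Delta Z_N$, the sum $\sum_k d_k e^{-itE_k}$ extended over $k\in(\lfloor Ns_i\rfloor,\lfloor N(s_i+\delta)\rfloor]$. For disjoint index blocks the corresponding $E_k$'s form disjoint subsets of the rationally independent spectrum, so by the Kronecker--Weyl theorem (applied exactly as in the paragraph preceding Theorem~\ref{thm:CLT_lyapunovplus}) the joint time-average of the phases $\{tE_k\bmod 2\pi\}$ factorizes into the product of uniform distributions on the two subtori. This immediately yields independence of the two increments, and the argument extends verbatim to any finite collection of disjoint blocks.

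For (b), I would apply Theorem~\ref{thm:CLT_lyapunovplus} to the partial sum $\tilde Z_{s,h}^N:=\sum_{k=\lfloor Ns\rfloor+1}^{\lfloor N(s+h)\rfloor} d_k e^{-itE_k}$. The key identity is
\[
W_{s+h}^N-W_s^N=\sqrt{R_1^N(h,s)}\;\frac{\tilde Z_{s,h}^N}{\bigl(\sum_{k\in\mathrm{range}} d_k^2\bigr)^{1/2}},
\]
where the second factor has variance one and, under the Lyapunov condition for the subsequence of weights, converges to a standard complex Gaussian by Theorem~\ref{thm:CLT_lyapunovplus}. Isotropy is automatic because Kronecker--Weyl delivers phases uniform on $[0,2\pi)$, forcing the limiting covariance to be a multiple of the identity. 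Invoking Eq.~\eqref{eq:Lyapunov_wiener} gives $\sqrt{R_1^N(h,s)}\to\sqrt h$, so the increment converges in distribution to a centered isotropic Gaussian of variance $h$, matching Eq.~\eqref{eq:conditionb}; stationarity is manifest since $h$ is the only surviving parameter in the limit.

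The main technical step, and where I expect the real work to lie, is establishing the Lyapunov condition for the subsequence of weights appearing in $\tilde Z_{s,h}^N$. Using Eq.~\eqref{eq:Lyapunov_wiener} one has $\sum_{k\in\mathrm{range}} d_k^2=h(\Delta Z_N)^2(1+o(1))$, while the numerator is bounded above by the full numerator $\sum_{k=1}^N d_k^{2q}$, so the sub-block ratio is at most $h^{-q}R_q^N\to 0$ and Theorem~\ref{thm:CLT_lyapunovplus} applies as claimed. A secondary concern is that the above yields convergence only of the finite-dimensional distributions of $W_s^N$; upgrading to process-level convergence requires a tightness argument, which follows in a standard way from the second-moment bound provided by Eq.~\eqref{eq:Lyapunov_wiener}.
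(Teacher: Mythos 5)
Your argument is correct and follows essentially the same route as the paper's proof in Appendix~\ref{sec:proof_Wiener_process}: both rest on the two estimates $R_{q}^{N}(h,s)\le R_{q}^{N}\to 0$ for $q>1$ (your sub-block bound $h^{-q}R_{q}^{N}$ is the same observation after renormalizing the denominator) together with $R_{1}^{N}(h,s)\to h$, the only difference being that you invoke Theorem~\ref{thm:CLT_lyapunovplus} on the sub-block and rescale by $\sqrt{R_{1}^{N}(h,s)}$, whereas the paper recomputes the characteristic function of the increment directly. Your closing remark on tightness flags a point the paper itself leaves unaddressed (it only establishes the limiting increment distributions), though a genuine tightness argument would require more than the second-moment bound you cite.
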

Indeed, condition a) above is satisfied if the spectrum is generic
while condition b) is satisfied provided the weights satisfy Eqs.~(\ref{eq:generic_weights}) and (\ref{eq:Lyapunov_wiener}) (see Appendix \ref{sec:proof_Wiener_process}). 

Theorems \ref{thm:CLT_lyapunovplus} and \ref{thm:Wiener} are generalizations to the many-body setting of similar results and conjectures formulated for one-body problems with or without chaotic classical counterparts \cite{marklof_spectral_1998,aurich_temporal_1999,leboeuf_riemannium_2001}. For example, in Ref.~\cite{aurich_temporal_1999} it has been put forward the conjecture that for bound quantum systems with chaotic  classical dynamics, the normalized version of $|Z_n|$ has a Rayleigh distribution in the infinite size limit, while for integrable systems the distribution shows a non-universal behavior. Conversely, it has been shown in \cite{marklof_spectral_1998} that, for particular integrable systems, the rectangular billiards, the distribution of the renormalized random walk $Z_n$ tends to a non-Gaussian limit distribution. 

The main contribution of theorems \ref{thm:CLT_lyapunovplus} and \ref{thm:Wiener}, is the introduction of the non-constant weights $d_j$ and consequently of the Lyapunov conditions Eqs.~(\ref{eq:generic_weights}) and \eqref{eq:Lyapunov_wiener}. As we will show in Section \ref{sec:numerics}, at the hand of the XXZ spin chain with next-nearest neighbor interactions, even for non-integrable chaotic systems, these conditions cannot be neglected. In particular, as we have already commented, they are violated at low temperature and for particular initial states $\rho$. 

\section{Exact Moments of the Spectral Form Factor}
After having assessed the Gaussian regime, we now give the exact expression
of SFF's moments, which include deviation from Gaussianity. To compute
the even moment $I_{M}$ a weaker hypothesis on the spectrum $\left\{ E_{j}\right\} $
than independence is sufficient. We say that the numbers $\left\{ E_{k}\right\} _{k=1}^{D}$,
assumed to be all different, satisfy the non-degeneracy condition
at order $M\in\mathbb{N}$ ($M$-ND) if, for any pair of sequences
$\alpha_{i},\beta_{i}\in\left\{ 1,2,\ldots,D\right\} $, $i=1,2,\ldots,M$,
the equation
\begin{equation}
\sum_{j=1}^{M}E_{\alpha_{j}}=\sum_{j=1}^{M}E_{\beta_{j}}\label{eq:M-ND-1}
\end{equation}
 implies that $\{\alpha_{1},\alpha_{2},\ldots,\alpha_{M}\}$ is a
permutation of $\{\beta_{1},\beta_{2},\ldots,\beta_{M}\}$, i.e.,~there
exist a permutation $\pi\in\mathbb{S}_{M}$ such that $\alpha_{i}=\beta_{\pi(i)}$
for $i=1,\ldots,M$. In Appendix \ref{sec:Non-degeneracy-and-independence}
we show that the condition $M$-ND for all $M\in\mathbb{N}$ is equivalent
to that of independent spectrum. 

As a remark, as we will see in detail below, for non-degenerate spectra notice that one can approximate $I_{m}\simeq D^{m}m!$. However, often in literature this is seemingly considered an exact result instead of an approximation \cite{legramandi_moments_2024,winer_hydrodynamic_2022}. This is manifestly wrong already
for $m=2$ even in the case where $d_{j}=1$, as an explicit calculation
assuming $2$-ND gives $I_{2}=2D^{2}-D$. However, if the weights satisfy the Lyapunov condition Eq.~\eqref{eq:generic_weights}, Theorem \ref{thm:CLT_lyapunovplus} shows that, indeed, at
leading order in $D$, $I_{m}\simeq D^{m}m!$. Again, this approximation fails if 
the weights do not satisfy the Lyapunov condition~\eqref{eq:generic_weights}, for example, in the case $d_{j}=\left(j/n\right)^{-\alpha}$
for $\alpha>1/2$ or $d_{j}=\tr\left(e^{-\beta H}\Pi_{j}\right)$
at very low temperature $1/\beta$. 

A correct formula for the case $d_{j}=1$ (and independent spectrum) was
found in the mathematical literature \citep{borwein_arithmetic_2011}.
Here we give an exact formula valid for general $d_{j}$ (see Appendix
\ref{sec:SFF_moments}):
\begin{thm}\label{thm:moments}
Assuming the energies satisfy non-degeneracy at order $p$, then,
for $k=1,\ldots,p$
\[
I_{k}=\overline{\left|\chi(t)\right|^{2k}}=\left(k!\right)^{2}\sum_{\sum_{i=1}^{N_{B}}k_{i}=k}\prod_{i=1}^{N_{B}}\left(\frac{d_{i}^{k_{i}}}{k_{i}!}\right)^{2}. \label{eq:moments0}
\]
Moreover, defining the coefficients $a_{n}$ via the series $\ln(I_{0}(2\sqrt{z}))=\sum_{n=1}^{\infty}a_{n}z^{n}/n!$
and $X_{n}:=\sum_{j=1}^{N_{B}}(d_{j})^{2n}$ one has the following
recursion
\begin{equation}
I_{p}=\sum_{q=1}^{p-1}\left(\begin{array}{c}
p-1\\
q-1
\end{array}\right)\frac{p!}{(p-q)!}a_{q}X_{q}I_{p-q}+p!a_{p}X_{p}.\label{eq:recursion}
\end{equation}
\end{thm}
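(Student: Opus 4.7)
The plan is to reduce the theorem to a moment computation followed by a generating-function manipulation. First I would start from the expansion
\[
|\chi(t)|^{2k} = \sum_{\vec\alpha,\vec\beta} \prod_{i=1}^{k} d_{\alpha_i} d_{\beta_i}\, e^{-it\left(\sum_i E_{\alpha_i}-\sum_i E_{\beta_i}\right)},
\]
with $\vec\alpha,\vec\beta \in \{1,\dots,N_B\}^k$, and take the infinite-time average. Under the $M$-ND hypothesis with $M\geq k$, the oscillating exponential averages to $1$ iff $\sum_i E_{\alpha_i}=\sum_i E_{\beta_i}$, which in turn forces $\vec\beta$ to be a permutation of $\vec\alpha$. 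The remaining task is purely combinatorial: group the surviving pairs $(\vec\alpha,\vec\beta)$ according to their common multiset type $(k_1,\ldots,k_{N_B})$ with $\sum_i k_i=k$; there are $k!/\prod_i k_i!$ orderings of such a multiset, the product of weights is $\prod_i d_i^{2k_i}$, and summing over pairs gives the closed form $(k!)^2\sum \prod_i (d_i^{k_i}/k_i!)^2$ of the first assertion.

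For the recursion, I would package these moments into the generating function $f(z):=\sum_{n\ge 0} I_n z^n/(n!)^2$. Substituting the closed form and exchanging the sums, the constraint $\sum_i k_i=n$ disappears and the series factorizes:
\[
f(z)=\sum_{k_1,\dots,k_{N_B}\ge 0}\prod_{i=1}^{N_B}\frac{(d_i^2 z)^{k_i}}{(k_i!)^2}=\prod_{i=1}^{N_B} I_0\!\left(2d_i\sqrt{z}\right),
\]
using the well-known series $I_0(2\sqrt{w})=\sum_{n\ge 0} w^n/(n!)^2$. Taking logarithms and invoking the defining expansion of $a_n$ yields
\[
\log f(z)=\sum_{i=1}^{N_B}\log I_0(2d_i\sqrt{z})=\sum_{n\ge 1}\frac{a_n X_n}{n!}z^n,
\]
which is the crucial step that turns the combinatorial identity into an ODE-type relation.

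The rest is bookkeeping: writing $g(z)=\log f(z)$ and differentiating the identity $\log f=g$ gives $f'(z)=g'(z)f(z)$. Extracting the coefficient of $z^{p-1}$ on both sides, using $f(z)=\sum c_n z^n$ with $c_n=I_n/(n!)^2$ and $g'(z)=\sum_{q\ge 1} a_q X_q z^{q-1}/(q-1)!$, one obtains
\[
p\,\frac{I_p}{(p!)^2}=\sum_{q=1}^{p}\frac{a_q X_q}{(q-1)!}\,\frac{I_{p-q}}{((p-q)!)^2},\qquad I_0=1.
\]
Multiplying through by $(p-1)!\,p!$, isolating the $q=p$ term, and recognizing $\binom{p-1}{q-1}=(p-1)!/((q-1)!(p-q)!)$ produces exactly the stated recursion~\eqref{eq:recursion}.

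The only nontrivial step, and where I would spend most of the verification effort, is the passage from the time-average constraint $\sum E_{\alpha_i}=\sum E_{\beta_i}$ to the permutation statement: this is precisely the content of $k$-ND, and it is essential that the theorem's hypothesis $p$-ND matches the order $k\le p$ of the moment being computed, so that no spurious resonances contribute. Once this combinatorial reduction is in place the generating-function computation is automatic, since $\prod_i I_0(2d_i\sqrt{z})$ is a product whose logarithmic derivative produces a linear recursion with coefficients determined by the power sums $X_n=\sum_j d_j^{2n}$.
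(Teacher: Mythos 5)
Your proposal is correct and follows essentially the same route as the paper: the same reduction of the time average to permutation-equivalent index strings under $p$-ND, the same multinomial counting giving the closed form, and the same generating function $\sum_n I_n z^n/(n!)^2=\prod_i I_0(2d_i\sqrt z)$ whose logarithm is $\sum_n a_nX_nz^n/n!$. Your explicit coefficient extraction from $f'=g'f$ is just the standard moment--cumulant recursion that the paper invokes directly, so the two arguments coincide.
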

This theorem offers an exact, iterative, closed-form expression for arbitrary SFF moments beyond the Gaussian approximation. For example, assuming $3$-ND, using Eq.~(\ref{eq:recursion}), one
easily obtains
\begin{align}
I_{1} & =X_{1},I_{2}=2(X_{1})^{2}-X_{2}\label{eq:moments1}\\
I_{3} & =6\left(X_{1}\right)^{3}-9X_{1}X_{2}+4X_{3}\,.\label{eq:moments2}
\end{align}
One recognizes a term $m!(X_{1})^{m}$ appearing in $I_{m}$ corresponding
to the Gaussian approximation. If the weights satisfy the Lyapunov condition Eq.~\eqref{eq:generic_weights}, this is the leading
term in the sense that $\lim_{L\to\infty}I_{m}/\left(X_{1}\right)^{m}=m!$
(and so $I_{m}=m!\left(X_{1}\right)^{m}+o\left(\left(X_{1}\right)^{m}\right)$),
but this is no longer true if the weights do not satisfy  Eq.~\eqref{eq:generic_weights}. 
In Appendix~\ref{sub:SFF_SYK} we show that formula Eq.~(\ref{eq:recursion}) always gives the correct moments for the SYK model, while the Gaussian approximation fails at low temperature. 

Lastly, we would like to mention that the sequence in Eq.~(\ref{eq:moments0}) arises in the study of families of Calabi-Yau varieties and the periods of these varieties can be obtained computing recursions for the $I_m$'s, see \cite{verrill_sums_2004,verrill_root_1996}. 

\section{Integrable quadratic models}
What happens when the Hamiltonian is integrable? For integrable quadratic
Hamiltonians, that is, when $H$ is quadratic in either Bosonic or
Fermionic creation/annihilation operators, the hypothesis of independent
spectrum cannot hold. In fact, the energies of these models have the
form $E_{\boldsymbol{n}}=\sum_{k=1}^{L}n_{k}\epsilon_{k}$ with $n_{k}=0,1$
($n_{k}=0,1,2,\ldots$) for Fermions (Bosons) so there can be at most
$L$ independent energies (e.g.~the one-particle energies). For concreteness,
we will stick to the Fermionic case in the following. The value of
$\chi(t)$ is completely specified by the vector $\varphi(t):=(E_{1}t,E_{2}t,\ldots,E_{N_{B}}t)\mod{2\pi}$
($\chi(t)=\sum_{j=1}^{N_{B}}d_{j}e^{-i\varphi_{j}(t)}$). For models
with free Fermionic spectra, the closure of the trajectory $\varphi(t)$
as $t$ increases is a torus $\mathbb{T}^{M}$ where $M\le L$ is
the number of independent energies. A relation among the energies
is an equation of the form $\sum_{j}a_{j}E_{j}=0$ with integer $a_{j}$.
So, quadratic models have at least $D-L$ relations among the energies.
Such a large number of relations implies that the steps of the walker
are not even approximately independent. In this case we could not
compute the probability distribution of the walker Eq.~(\ref{eq:density})
but with mild assumptions we obtained the distribution of $\left|\chi(t)\right|^{2}$ (note that $\chi(t)=Z_{N_{B}}(t)$).
We have the following result:
\begin{thm}
\label{thm:free_fermion}Let $H$ be defined on a $2^{L}$ dimensional
Hilbert space, have spectrum $E_{\boldsymbol{n}}=\sum_{k=1}^{L}n_{k}\Lambda_{k}+E_{0}$
with $n_{k}=0,1$. We allow the one-particle spectrum $\left\{ \Lambda_{k}\right\} _{k=1}^{L}$
to have degeneracies, let $\left\{ \epsilon_{j}\right\} _{j=1}^{L_{B}}$
be the different energies with degeneracy $g_{j}$. For simplicity
we consider the infinite temperature case, $\beta=0$, i.e.,~$\chi(t)=\tr\left(e^{-itH}\right)$.
If the set $\left\{ \epsilon_{j}\right\} _{j=1}^{L_{B}}$ is independent
and the degeneracies are generic, i.e.,~there exist a $q>2$ such
that
\begin{equation}
\lim_{L\to\infty}S_{q}^{L_{B}}=0,\,\,\,\mathrm{with}\,\,\,S_{q}^{L_{B}}:=\frac{\sum_{j=1}^{L_{B}}g_{j}^{q}}{\left(\sum_{j=1}^{L_{B}}g_{j}^{2}\right)^{q/2}},\label{eq:generic_integrable}
\end{equation}
then for $L\to\infty$ the random variable $Y:=\ln\left|\chi(t)\right|^{2}/\sqrt{\sum_{j=1}^{L_{B}}g_{j}^{2}}$
converges in distribution to a Gaussian, more precisely
\[
\frac{\log\left(\left|\chi(t)\right|^{2}\right)}{\sqrt{\sum_{j=1}^{L_{B}}g_{j}^{2}}}\to\mathsf{N}\left(0,\frac{\pi^{2}}{3}\right).
\]
Alternatively $\left|\chi(t)\right|^{2/\sqrt{\sum_{j=1}^{L_{B}}g_{j}^{2}}}\to\mathsf{LogNormal}\left(0,\pi^{2}/3\right)$.
Note that $\pi^{2}/3$ is the variance of $\mathsf{Unif}\left[0,2\pi\right]$. 
\end{thm}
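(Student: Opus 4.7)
The plan is to reduce $\log\!\bigl|\chi(t)\bigr|^{2}$ to a sum of independent, identically distributed bounded-variance random variables with deterministic weights $g_{j}$, and then apply a Lyapunov central limit theorem in the same spirit as Theorem~\ref{thm:CLT_lyapunovplus}. The starting point is the product/determinant structure of free Fermions,
\begin{equation*}
\chi(t) \;=\; e^{-itE_{0}}\prod_{k=1}^{L}\bigl(1+e^{-it\Lambda_{k}}\bigr) \;=\; e^{-itE_{0}}\prod_{j=1}^{L_{B}}\bigl(1+e^{-it\epsilon_{j}}\bigr)^{g_{j}},
\end{equation*}
so that the log of the squared modulus converts the product into a sum,
\begin{equation*}
\log\!\bigl|\chi(t)\bigr|^{2} \;=\; \sum_{j=1}^{L_{B}} g_{j}\,X_{j}(t),\qquad X_{j}(t) := 2\log\!\bigl|2\cos(t\epsilon_{j}/2)\bigr|.
\end{equation*}
Because $\{\epsilon_{j}\}_{j=1}^{L_{B}}$ is independent over $\mathbb{Q}$, Kronecker--Weyl equidistribution (already invoked for Theorem~\ref{thm:CLT_lyapunovplus}) implies that under the infinite time average $\overline{\,\cdot\,}$ the phases $t\epsilon_{j}\bmod 2\pi$ are an i.i.d.\ vector of $\mathsf{Unif}[0,2\pi)$ random variables; hence so are the $X_{j}$.

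Next I would extract the elementary statistics of a single $X_{j}$. The Jensen integral $\int_{0}^{\pi}\log|2\sin u|\,du=0$ yields $\overline{X_{j}}=0$, and the classical evaluation $\int_{0}^{\pi/2}\log^{2}(2\sin u)\,du=\pi^{3}/24$ gives
\begin{equation*}
\overline{X_{j}^{2}} \;=\; \frac{8}{\pi}\int_{0}^{\pi/2}\log^{2}(2\sin u)\,du \;=\; \frac{\pi^{2}}{3},
\end{equation*}
which is exactly the variance of $\mathsf{Unif}[0,2\pi]$ and accounts for the numerical coefficient in the theorem. Since $\log|\cos(\cdot)|$ only has an integrable logarithmic singularity, every absolute moment $\overline{|X_{j}|^{q}}$ equals a universal constant $C_{q}<\infty$ depending only on $q$ (the distribution is $j$-independent).

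With these ingredients, the triangular array $\{g_{j}X_{j}\}_{j=1}^{L_{B}}$ satisfies a Lyapunov condition: for the $q>2$ provided by hypothesis~(\ref{eq:generic_integrable}),
\begin{equation*}
\frac{\sum_{j}g_{j}^{q}\,\overline{|X_{j}|^{q}}}{\bigl(\sum_{j}g_{j}^{2}\,\overline{X_{j}^{2}}\bigr)^{q/2}} \;=\; \frac{C_{q}}{(\pi^{2}/3)^{q/2}}\,S_{q}^{L_{B}} \;\xrightarrow[L\to\infty]{}\; 0 .
\end{equation*}
The Lyapunov CLT for independent non-identically weighted summands then gives $\log|\chi(t)|^{2}/\sqrt{\sum_{j}g_{j}^{2}}\to\mathsf{N}(0,\pi^{2}/3)$, and exponentiating delivers the claimed $\mathsf{LogNormal}$ statement.

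The arithmetic steps (the $\pi^{2}/3$ integral, the finiteness of all $C_{q}$, and Lyapunov once the $X_{j}$ are seen as i.i.d.) are routine. The substantive obstacle, exactly as in Theorem~\ref{thm:CLT_lyapunovplus}, is the rigorous passage from a \emph{deterministic} time $t$ to a vector of genuine independent random phases in the $L\to\infty$ limit: Kronecker--Weyl equidistribution lives on a torus $\mathbb{T}^{L_{B}}$ whose dimension is itself growing with $L$, so one must control the order of limits. The standard workaround, patterned on Appendix~\ref{sec:Proof-of-CLT}, is to prove convergence of the characteristic function (equivalently, all mixed cumulants) of $Y$ to that of the Gaussian, verifying that the time average annihilates every non-diagonal resonance contribution up to vanishing corrections. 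Once this is in place, the rest of the argument is a direct application of the ingredients above.
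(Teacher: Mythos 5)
Your proposal is correct and follows essentially the same route as the paper's Appendix on free Fermionic spectra: write $|\chi(t)|^{2}=\prod_{j}\bigl(2\cos(t\epsilon_{j}/2)\bigr)^{2g_{j}}$, use rational independence of the $\epsilon_{j}$ to turn $\log|\chi(t)|^{2}$ into a sum of independent uniformly-phased terms weighted by $g_{j}$, and close with a Lyapunov-type CLT under condition~(\ref{eq:generic_integrable}). The only difference is cosmetic: the paper computes the full moment generating function $\overline{Y_{j}^{\lambda}}=4^{\lambda g_{j}}\Gamma(\tfrac12+\lambda g_{j})/\bigl(\sqrt{\pi}\,\Gamma(1+\lambda g_{j})\bigr)$ and kills all higher cumulants via Tannery's theorem, whereas you extract only the mean, the variance $\pi^{2}/3$ from the classical log-sine integral, and one $q$-th absolute moment before invoking the standard Lyapunov CLT -- both are valid and your variance computation checks out.
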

Remark. For $\beta\neq0$ the result continues to hold provided Eq.~(\ref{eq:generic_integrable})
is satisfied. Namely $Y$ still converges in distribution to $\mathsf{N}(0,\sigma)$
but  we could not evaluate the variance $\sigma$ (which now depends
on $\beta$). However, as for the non-integrable case, we can anticipate
that for sufficiently low temperature (large $\beta$) Eq.~(\ref{eq:generic_integrable})
will break down. The proof is yet another form of central limit theorem
but in this case $Y$ is a product of independent random variables,
see Appendix \ref{sec:distribution_free}. 

For completeness, we also give the expression for the moments in this
case. 
\begin{thm}
With the same setting and hypotheses as in theorem \ref{thm:free_fermion},
one has
\[
\overline{\left|\chi(t)\right|^{2M}}=\prod_{j=1}^{L_{B}}\left(\begin{array}{c}
2g_{j}M\\
g_{j}M
\end{array}\right). \label{eq:moments_free}
\]
\end{thm}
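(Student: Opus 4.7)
The plan is to exploit the fundamental factorization of $\chi(t)$ for quadratic fermionic Hamiltonians, which turns the time average into a product of independent one-dimensional integrals via Kronecker--Weyl.

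First, I would use the standard free-fermion determinant identity. Since $H$ has single-particle spectrum $\{\Lambda_k\}_{k=1}^L$ with $n_k\in\{0,1\}$, the trace factorizes as
\[
\chi(t)=e^{-itE_{0}}\prod_{k=1}^{L}\bigl(1+e^{-it\Lambda_{k}}\bigr)=e^{-itE_{0}}\prod_{j=1}^{L_{B}}\bigl(1+e^{-it\epsilon_{j}}\bigr)^{g_{j}}.
\]
Taking the modulus squared and using $|1+e^{-i\varphi}|^{2}=2+2\cos\varphi=4\cos^{2}(\varphi/2)$ cleanly removes the $e^{-itE_{0}}$ prefactor and gives, for any $M$,
\[
\bigl|\chi(t)\bigr|^{2M}=\prod_{j=1}^{L_{B}}\bigl(2+2\cos(t\epsilon_{j})\bigr)^{g_{j}M}.
\]

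Next I would invoke the independence hypothesis on $\{\epsilon_{j}\}_{j=1}^{L_{B}}$. By the Kronecker--Weyl equidistribution theorem (the same ingredient already used in the proof of Theorem~\ref{thm:CLT_lyapunovplus}), the trajectory $t\mapsto(t\epsilon_{1},\ldots,t\epsilon_{L_{B}})\bmod 2\pi$ is equidistributed on $\mathbb{T}^{L_{B}}$. Since the integrand is a bounded continuous function of these phases, the infinite-time average converts into a product of normalized circle integrals:
\[
\overline{\bigl|\chi(t)\bigr|^{2M}}=\prod_{j=1}^{L_{B}}\frac{1}{2\pi}\int_{0}^{2\pi}\!\!\bigl(2+2\cos\theta\bigr)^{g_{j}M}\,d\theta.
\]
It is worth noting that the degeneracy condition Eq.~\eqref{eq:generic_integrable} is not needed for this step; independence of the $\epsilon_{j}$ alone delivers the factorization.

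Finally, each one-variable integral is a classical constant-term computation. Writing $(2+2\cos\theta)^{n}=\bigl(e^{i\theta/2}+e^{-i\theta/2}\bigr)^{2n}$ and expanding with the binomial theorem,
\[
\bigl(e^{i\theta/2}+e^{-i\theta/2}\bigr)^{2n}=\sum_{k=0}^{2n}\binom{2n}{k}e^{i(n-k)\theta},
\]
so the time average selects the unique $\theta$-independent term, namely $\binom{2n}{n}$. Setting $n=g_{j}M$ for each factor and taking the product yields exactly $\prod_{j=1}^{L_{B}}\binom{2g_{j}M}{g_{j}M}$, as claimed. There is essentially no hard step in this argument; the only subtlety is confirming that the boundedness and continuity of the integrand allow a direct application of Kronecker--Weyl, which is immediate because $(2+2\cos\theta)^{g_{j}M}$ is a trigonometric polynomial.
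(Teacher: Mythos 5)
Your proposal is correct and follows essentially the same route as the paper's Appendix~\ref{sec:Moments_free}: factorize $|\chi(t)|^{2M}$ over the one-particle modes, expand each factor binomially, and use rational independence of $\{\epsilon_j\}$ to retain only the terms with $k_j=g_jM$, yielding $\prod_j\binom{2g_jM}{g_jM}$. Your side remark that only independence of the $\epsilon_j$ (and not the Lyapunov-type condition Eq.~\eqref{eq:generic_integrable}) is needed here is accurate and consistent with what the paper's proof actually uses.
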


Consider for simplicity the non-degenerate case ($g_j=1, L_B=L$). From Eq.~(\ref{eq:moments_free}) we obtain $\overline{|\chi(t)|^2}=2^L=D$ which is the same result as for generic, non-integrable walkers. One may then be led to conjecture that quasi-free fractals are of the same size as non-integrable ones. However,  analytically continuing Eq.~(\ref{eq:moments_free}) to $M=1/2$ we obtain that the average distance of the quasi-free walker is $\overline{|\chi(t)|}=(4/\pi)^L=D^{0.348}$. This number is exponentially smaller in $L$ than in the non-integrable case. In fact, from point iii) after Theorem~\ref{thm:CLT_lyapunovplus} we get $\overline{|\chi(t)|} \simeq D^{1/2}\Gamma(3/2) = \sqrt{2}^L \sqrt{\pi}/2$. This means that the average size of quasi-free fractals is smaller than that of non-integrable ones. This feature is also apparent in Fig.~\ref{fig:mainplot} where, on the same scale, the fractal on the right (quasi-free) appears smaller in extension than the one on the left (non-integrable). 

\section{Numerical estimates of the dimension of the fractal frontier}
\label{sec:numerics}
We now come to the comparison of the previous theorems to realistic
physical models for the case $\rho=e^{-\beta H}$. It is natural to
expect that non-integrable systems have independent spectrum and that
the weights satisfy the Lyapunov condition Eq.~\eqref{eq:generic_weights} at infinite or sufficiently high temperature
$1/\beta$. Instead, when the temperature is very low, the ground
state $d_{1}=e^{-\beta E_{1}}\tr\left(\Pi_{1}\right)$ dominates and
one can violate the CLT. On the other hand, quasi-free Fermionic systems
have only $L_{B}\le L=O\left(\ln D\right)$ independent energies (the
one-particle energies). So, for high temperature, looking at the distribution
$\left|\chi(t)\right|^{2}$ we expect $\mathsf{Exp}(1)$ in the non-integrable
case and \textsf{$\mathsf{LogNormal}$} for quasi-free Fermionic spectra.
What about Bethe-Ansatz (BA) solvable models? One can make an argument
for both cases. Indeed, say we consider the XXZ chain for concreteness.
BA integrability implies that the spectrum (in each conserved sector)
depends on $O\left(\ln D\right)$ quantities, the rapidities. However
it is not clear if these quantities are combined to form the many
body-spectrum in a way that does not introduces relations among the
energies. 
We perform our numerics on the following XXZ chain next-to-nearest-neighbor coupling $\alpha$
\begin{align}
H & =\sum_{j=1}^{L}\left(\sigma_{j}^{x}\sigma_{j+1}^{x}+\sigma_{j}^{y}\sigma_{j+1}^{y}+\Delta\sigma_{j}^{z}\sigma_{j+1}^{z}\right) \nonumber \\
 & +\alpha\sum_{j=1}^{L}\left(\sigma_{j}^{x}\sigma_{j+2}^{x}+\sigma_{j}^{y}\sigma_{j+2}^{y}+\Delta\sigma_{j}^{z}\sigma_{j+2}^{z}\right). \label{XXZNNNHamiltonian}
\end{align}
The model admits quasi free (for $\alpha = \Delta = 0$), Bethe-Ansatz ($\alpha  = 0$), and  non-integrable ($\alpha \neq 0$)
phases. The model has several conserved quantities, and the pattern of degeneracies $d_j = \tr (\Pi_j )$ is highly non-trivial, leading to $d_j$ as high as 8 for a sensible fraction of $j$ for $L=10$. Our results confirm that, 
at infinite temperature, the distribution of the normalized $\left|\chi(t)\right|^{2}$
is $\mathsf{Exp}(1)$ in the non-integrable phase and \textsf{$\mathsf{LogNormal}$}
in the quasi-free one. We also observe an $\mathsf{Exp}(1)$ distribution
for the BA case (see App.~\ref{appedix:numericalcheck}).

Regarding the fractal frontier dimension, the box-counting procedure~\cite{falconer_geometry_1985, Sokolovic_Mali_Odavic_Radosevic_Medvedeva_Botha_Shukrinov_Tekic_2017} gives
\(d_F = 1.32 \pm 0.08\) for the non-integrable XXZ model, \(d_F = 1.01 \pm 0.04\)
for the XX free (quasi-free) spin chain, and \(d_F = 1.24 \pm 0.08\) for the
Bethe–Ansatz (BA) solvable XXZ model (see App.~\ref{sec:appfractral} for
methodological details).  The quoted uncertainties are standard errors of the
mean computed over the sample ensemble. To assess parameter dependence we
estimated \(d_F\) on ensembles of 20 independent random-walk realizations for
each phase (quasi-free, BA, and non-integrable (NI)). For BA and NI the
results are displayed at two representative values of the control parameters
(\(\Delta\) for BA, \(\alpha\) for NI). The quasi-free chain is clearly
distinct (\(d_F\approx 1\)), while the NI estimates show negligible dependence
on \(\alpha\) and are consistent, within statistical uncertainty, with the
Brownian expectation \(d_F = 4/3\). The BA estimates are largely
\(\Delta\)-independent but lie below \(4/3\); with the current
ensemble size (20 realizations per point) we cannot conclusively determine
whether the BA case approaches \(4/3\) in the asymptotic limit or rather
converges to a different universal value slightly smaller than \(4/3\). The data obtained are shown in Fig.~\ref{fig:numerics_d_F}.

At this point, we cannot claim
for sure if, in
the BA case,  the discrepancy from $4/3$ is due to the complexity
of estimating the fractal frontier or genuinely $d_{F}\neq4/3$. The latter scenario would mean that the number of relations
among the energies for BA integrable models is not sufficient to violate
the CLT for the random variable $\left|\chi(t)\right|^{2}$ but can be
detected by looking at the frontier of the corresponding random walk.
The results are summarized in Table \ref{tab:Expectation-of-different}.

\begin{figure}[]
\begin{centering}
\includegraphics[width=8.5cm]{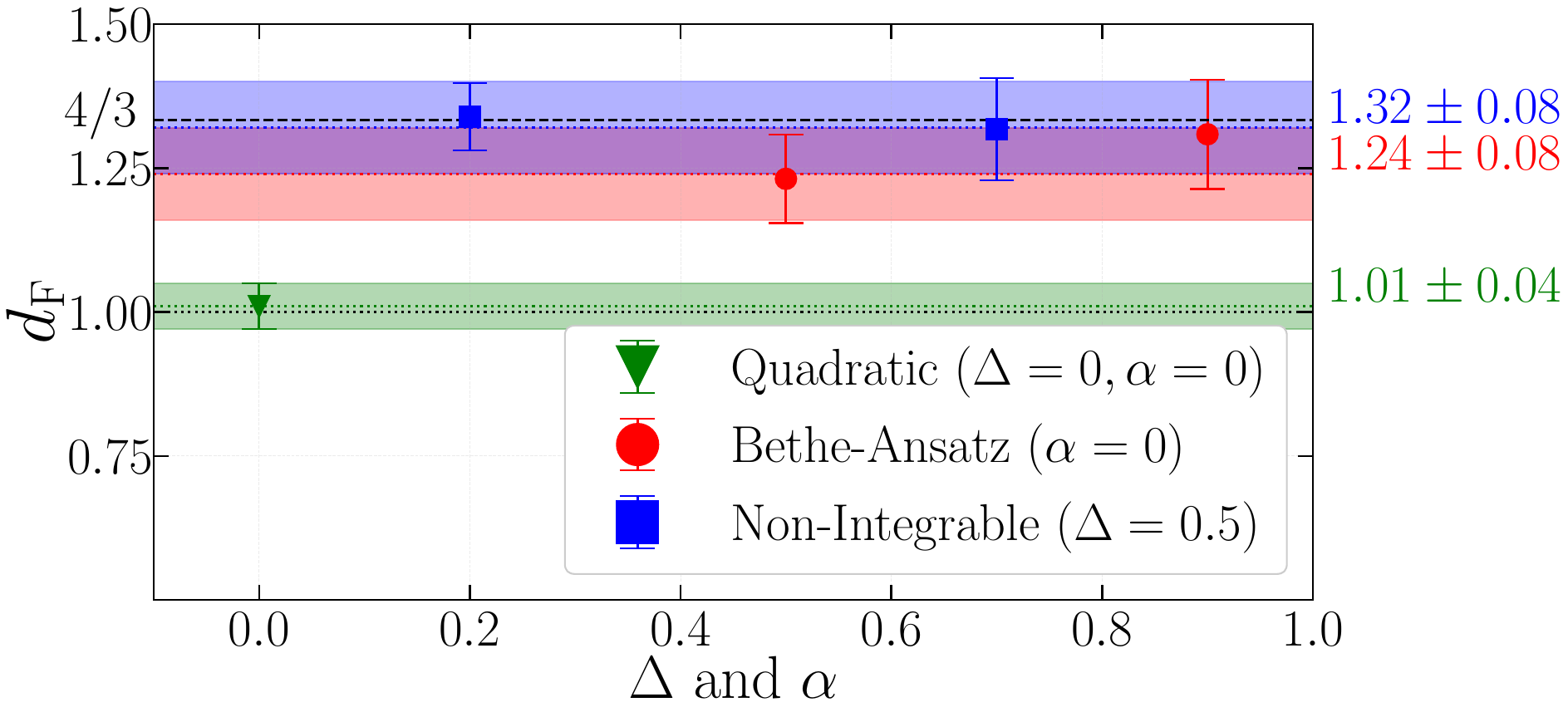}
\par\end{centering}
\caption{Numerical estimates of the Hausdorff dimension of the fractal frontier for quadratic, Bethe-Ansatz and non-integrable models. The green, red, and blue stripes are the average values within one standard deviation confidence of the numbers $d_F$ obtained for the quadratic, Bethe-Ansatz and non-integrable phases, respectively. In each phase we used $20$ samples to obtain the statistics. To allow for comparisons all data are obtained for size $L=12$. More details can be found in Appendix \ref{sec:appfractral}. } \label{fig:numerics_d_F}
\end{figure}

\section{Discussion and conclusions}

When we are interested in understanding a quantum model, it is customary to consider functions of the model Hamiltonian $H$, which, depending on context, carry particular information on the problem at hand. One of the most versatile of such functions is $\tr e^{-z H}$. 
When $z$ is real, $\tr e^{-z H}$  is the partition function and has information on the thermodynamics of the system. For complex $z=\beta+i t$, its modulus square is known as the spectral form factor (SFF), and is important in a variety of fields ranging from quantum chaos to black-hole dynamics \cite{prosen_time_1998,prosen_ergodic_1999,cotler_black_2017,CotlerErratum2018,caceres_spectral_2022}. In this context, usually an average is performed further on the SFF over an ensemble of Hamiltonians. Typical examples are the (random) SYK model or the classical matrix ensembles such as GUE or GOE.  Alternatively, in the spirit of statistical mechanics, where one observes the system for a sufficiently long time, one can average uniformly over time. It turns out that the two averages are related and using the latter, as already noted in \cite{prange_spectral_1997}, the SFF can be understood as a random walk. 

In this paper, we push this analogy further. We generalize the SFF to $\chi(t) = \tr \rho e^{-itH}$ for any positive operator $\rho$, and we show how to associate any pair $(\rho,H)$ with a random walk on the plane. For a Hamiltonian with spectral resolution $H=\sum_{j=1}^{N_B} E_j \Pi_j$ ($E_j$ eigenenergies, $\Pi_j$ spectral projectors, $N_B$ number of blocks), at step $n$ the walker rotates clockwise by an angle $tE_n$ and steps forward by an amount $d_j=\tr(\rho \Pi_n)$. When $t$ is a  random variable uniformly distributed on the half-line, this is a random walk, i.e., ~a fractal, and we study the pair $(\rho,H)$ through properties of this fractal geometry, in particular, we compute the Hausdorff dimension of its frontier. From this point of view, the moments of the SFF which correspond to the plateaus of \cite{cotler_black_2017,CotlerErratum2018} and have been studied extensively, correspond to moments of the distance of the walker from the origin at step $N_B$. 

We show that, under two assumptions, the random walk becomes a Wiener process in the thermodynamic limit, for which the dimension of its frontier has been famously proved to be $d_F = 4/3$ using Schramm-Loewner evolutions methods \cite{lawler_dimension_2001}. The assumptions are i) that the energy levels are linearly independent over the set of rationals, and ii) a technical, Lyapunov-like condition. For $\rho = \1$ and non-degenerate spectra, the Lyapunov condition is automatically satisfied. Via numerical simulations, we show that, for $\rho = e^{-\beta H}$ and non-integrable $H$, the Lyapunov condition is satisfied at sufficiently high temperature $1/\beta$ (but notably fails at low temperature and in other documented cases). 
In these cases, we observe numerically that the dimension of the fractal frontier is approximately $d_F \simeq 4/3$, thus confirming the hypothesis of independence of the energy spectrum. 
The analogous result for the SFF is that the distribution of $\vert\chi(t)\vert^2$ properly normalized becomes \textsf{Exp(1)}. This corresponds to $\overline{\vert\chi(t)\vert^{2m}} \simeq m!
\left ( \overline{\vert\chi(t)\vert^{2}} \right)^m$ a result widely known in the literature and sometimes called Gaussian approximation. Even if the hypothesis of independence of the energy levels can be proven, like in random matrix or SYK models, the Gaussian approximation hinges on the Lyapunov condition. This condition is violated at low temperature where only the lowest energy levels get effectively populated\footnote{To be precise, this feature is stable as the size approaches the thermodynamic limit for models with a gap above the ground state. For models with gapless excitations this feature appears in a crossover region for finite size}. On the other hand, we compute exactly the moments of the SFF without the Gaussian approximation solving the classical problem of a random walker taking steps of unequal lengths. The results agree with numerical experiments on the SYK model.

For free, integrable models, the condition of independent spectrum fails spectacularly as the number of independent energies is at most $N$ where $N$ is the number of free modes, in spite of a Hilbert space dimension $D$ exponential in $N$. This means that there are at least $D-N$ relations among the energies of the form $\sum_{j=1}^D a_j E_j =0 $ with $a_j \in \mathbb{Z}$. In this case, even when the Lyapunov condition is satisfied, the random walk is no longer a sum of independent variables. The exponentially many relations among the energies imply a correlation among a large fraction of the random variables that are no longer independent. As a result, the central limit theorem breaks down, and the random walk is not a Wiener process in the thermodynamic limit. 
Summarizing, when the Lyapunov conditions are satisfied, a value for the dimension of the fractal frontier close to $4/3$ signals that the random variables that form the random walk are sufficiently independent, while a value away from $4/3$ signals the breakdown of the central limit theorem and dependent variables. This result provides a novel method to inquire the difficult problem of independence of random variables. For instance, a long standing problem is that of the 
Bethe-Ansatz (BA) solvable case. The study of the BA fractal shows that is likely to be smaller than $4/3$ and therefore consistent with 
a breakdown of the CLT and a novel, universal value, but this analysis requires further investigation. 

There are several directions that open up for future studies. On the one hand, one can consider other kind of quantum mechanical objects, such as entanglement Hamiltonians (obtained as minus the logarithm of a partial trace of a quantum state),  quantum states themselves, or more general evolutions which include decoherence and dissipation. The study of the interplay between the SFF and the operator entaglement spectrum \cite{mcdonough2025bridgingclassicalquantuminformation} is indeed particularly promising.
Additionally, other properties of fractals can be used or even fractals embedded in a larger dimensional space. One intriguing possibility is to study a particular class of quantum many-body Hamiltonians whose eigenstates can be characterized as doped stabilizer states~\cite{PhysRevA.110.062427}. The doping refers to a very gentle way of perturbing otherwise integrable Hamiltonians. These Hamiltonians, although non-integrable, feature classical algorithms efficient in computing their time evolution, thermal states, and low-energy eigenstates. The study of their SFF and the associated fractal can thus shed some light in the direction of understanding how integrable features may be not  immediately lost and give insight towards a quantum KAM theorem, sewing together the chaotic features of the spectrum and those of the eigenvectors of a Hamiltonian \cite{e23081073}.

\section*{Acknowledgments}
This research was funded by the Research Fund for the Italian Electrical System under the Contract Agreement "Accordo di Programma 2022–2024" between ENEA and the Ministry of the Environment and Energy Safety (MASE)- Project 2.1 "Cybersecurity of energy systems". AH, JO acknowledge support from the PNRR MUR project PE0000023-NQSTI. AH acknowledges support from the PNRR MUR project CN 00000013-ICSC. JO acknowledges ISCRA for awarding this project access to the LEONARDO super-computer, owned by the EuroHPC Joint Undertaking, hosted by CINECA (Italy) under the project ID: PQC - HP10CQQ3SR.

\section*{Data Availability}
The numerical codes, datasets, and plotting scripts used in this work are available at Zenodo~\cite{zenodo}.

\appendix

%
\section{Proof of Theorem \ref{thm:commutativity}}

\label{sec:Proof-of-commutativity}We consider a random Hamiltonian
$H(x)\!=\!\sum_{j=1}^{N_{B}}E_{j}(x)\Pi_{j}(x)$ where $x$ is the collection
of (static) random variables. By assumption $d_{j}$ do not depend
on $x$. Starting from $\chi_{x}(t)=\sum_{j=1}^{N_{B}}d_{j}e^{-itE_{j}(x)}$
, for any realization $x$, we get
\[
\left|\chi_{x}(t)\right|^{2}=\sum_{i,j=1}^{N_{B}}F_{i,j}e^{-it(E_{i}(x)-E_{j}(x))}
\]
having defined, for convenience, $F_{i,j}=d_{i}d_{j}$. Hence 
\begin{align}
I_{m}:=\left|\chi_{x}(t)\right|^{2m} & =\sum_{i_{1}\cdots i_{m}}\sum_{j_{1}\cdots j_{m}}F_{i_{1},j_{1}}\cdots F_{i_{m},j_{m}} \nonumber \\
 & \times\exp\left(-it\sum_{l=1}^{m}\left(E_{i_{l}}(x)-E_{j_{l}}(x)\right)\right)
\end{align}
which we write compactly as
\begin{equation}
\left|\chi_{x}(t)\right|^{2m}=\sum_{s,r}f(s,r)e^{-it(\boldsymbol{E}_{s}(x)-\boldsymbol{E}_{r}(x))}\label{eq:commut_start}
\end{equation}
having defined the multi indices $s=\left(s_{1},\ldots,s_{m}\right)$
corresponding to $\boldsymbol{E}_{s}(x)=E_{s_{1}}(x)+E_{s_{2}}(x)+\cdots E_{s_{m}}(x)$
and similarly for $r$ and $\boldsymbol{E}_{r}(x)$ together with
the coefficients $f(s,r)$. Note that, if the degeneracies $\tr \Pi_j(x)$ do not depend on $x$, the sequences $\left\{ s_{j}\right\} $
depend only on $N_{B}$ and not on $x$. It is useful to go to the
occupation number representation. For any sequence $\left\{ s_{i}\right\} _{i=1}^{m}$
we define the (Bosonic) occupation numbers:
\begin{equation}
n_{k}^{s}:=\sum_{i=1}^{m}\delta_{E_{k}(x),E_{s_{i}}(x)},\label{eq:occupation_mapping-1}
\end{equation}
so $n_{k}^{s}$ counts how many terms in $\left\{ E_{s_{i}}(x)\right\} _{i=1}^{M}$
have energy $E_{k}(x)$. We also define $\boldsymbol{n}^{s}=\left(n_{1}^{s},\ldots,n_{N_{B}}^{s}\right)$
and write 
\[
\boldsymbol{E}_{s}(x)=\sum_{k=1}^{N_{B}}n_{k}^{s}E_{k}(x)=\boldsymbol{n}^{s}\cdot\boldsymbol{E}(x).
\]
Taking the infinite time average we get $\overline{e^{-it(\boldsymbol{E}_{s}(x)-\boldsymbol{E}_{r}(x))}}=\delta_{\left(\boldsymbol{E}(x)\cdot\left(\boldsymbol{n}^{s}-\boldsymbol{n}^{r}\right)\right)}$
where $\delta_{X}$ is a Kronecker delta. If the distribution of the
eigenvalues $\boldsymbol{E}$ is absolutely continuous the set $\left\{ E(x)\right\} _{k=1}^{N_{B}}$
is rationally independent almost everywhere. So, for almost any $x$,
$\delta_{\left(\boldsymbol{E}(x)\cdot\left(\boldsymbol{n}^{s}-\boldsymbol{n}^{r}\right)\right)}=\delta_{\left(\boldsymbol{n}^{s}-\boldsymbol{n}^{r}\right)}$.
So we obtain
\[
\overline{\left|\chi_{x}(t)\right|^{2m}}=\sum_{s,r}f(s,r)\delta_{\left(\boldsymbol{n}^{s}-\boldsymbol{n}^{r}\right)}.
\]
Incidentally the above equation does not depend on $x$ anymore and
so we obtain also $\overline{\left|\chi_{x}(t)\right|^{2m}}=\mathsf{E}_{x}\left[\overline{\left|\chi_{x}(t)\right|^{2m}}\right]$.
We now take the average over $x$ of Eq.~(\ref{eq:commut_start}):
\begin{align}
\mathsf{E}_{x}\left[\left|\chi_{x}(t)\right|^{2m}\right] & =\sum_{s,r}f(s,r)\mathsf{E}_{x}\left[e^{-it(\boldsymbol{E}(x)\cdot(\boldsymbol{n}^{s}-\boldsymbol{n}^{r}))}\right]\nonumber \\
 & =\sum_{s,r}f(s,r)\delta_{\left(\boldsymbol{n}^{s}-\boldsymbol{n}^{r}\right)}\nonumber \\
 & +\sum_{s,r}f(s,r)\mathsf{E}_{x}\left[e^{-it(\boldsymbol{E}(x)\cdot(\boldsymbol{n}^{s}-\boldsymbol{n}^{r}))}\right]\delta_{\left(\boldsymbol{n}^{s}\neq\boldsymbol{n}^{r}\right)}.
\end{align}
Let us call $O(t)$ the last term above. If the distribution of $\boldsymbol{E}$
is absolutely continuous: $\mu\left(d\boldsymbol{E}\right)=P_{\boldsymbol{E}}(\boldsymbol{E})d\boldsymbol{E}$,
$\mathsf{E}_{x}\left[e^{-it(\boldsymbol{E}(x)\cdot(\boldsymbol{n}^{s}-\boldsymbol{n}^{r}))}\right]=\hat{P}_{\boldsymbol{E}}\left(t(\boldsymbol{n}^{s}-\boldsymbol{n}^{r})\right)$
is the Fourier transform of a function in $L^{1}$. By Riemann-Lebesgue
lemma $\lim_{t\to\infty}\hat{P}_{\boldsymbol{E}}\left(t(\boldsymbol{n}^{s}-\boldsymbol{n}^{r})\right)=0$
whenever $\boldsymbol{n}^{s}\neq\boldsymbol{n}^{r}$. So we get $\lim_{t\to\infty}O(t)=0$
which implies that $\overline{O(t)}=0$ which proves the claim. 

\section{Proof of Theorem \ref{thm:CLT_lyapunovplus}}

\label{sec:Proof-of-CLT}
Here we prove for our specific setting the Lyapunov central limit theorem, which concerns the sum of independent variables
with different distributions. The variance of each random variable, proportional to  
$d_{j}^{2}$, may depend also $n$ and  this prevents direct application of the CLT for identically distributed random variables. The dependence on $n$ introduces a sort of correlation
between the different random variables. As we shall see, it turns out that
in our case the Lyapunov condition is still both necessary and sufficient
for the Gaussianity of the normalized sum as $n\to\infty$. Let us
compute the characteristic function of the random variable $Y_{n}=Z_{n}/\Delta Z_{n}$
($\boldsymbol{k}=(k_{1},k_{2})$):
\[
\overline{e^{i\boldsymbol{k}Y_{n}}}=\overline{\exp\left(i\sum_{j=1}^{n}\frac{d_{j}}{\Delta Z}_{n}\left(k_{1}\cos(tE_{j})+k_{2}\sin(tE_{j})\right)\right)}.
\]
Under assumption of independence of the energies $E_{j}$ the time
average becomes the uniform average over the torus $\mathbb{T}^{n}$
and we obtain
\begin{align*}
\overline{e^{i\boldsymbol{k}Y_{n}}}= & \prod_{j=1}^{n}\int\frac{d\vartheta_{j}}{2\pi}\exp\left(i\frac{d_{j}}{\Delta Z}_{n}\left(k_{1}\cos(\vartheta_{j})+k_{2}\sin(\vartheta_{j})\right)\right)\\
= & \prod_{j=1}^{n}J_{0}\left(\frac{d_{j}}{\Delta Z}_{n}\left\Vert \boldsymbol{k}\right\Vert \right)
\end{align*}
where $J_{0}$ is the Bessel function of the first kind, analytic
in a neighborhood of zero. Let us define $\ln J_{0}(x)=\sum_{p=1}^{\infty}b_{p}x^{2p}$
(for $x$ small enough $J_{0}(x)>0$). Note that $b_{1}=-1/4$. Then
\[
\overline{e^{i\boldsymbol{k}Y_{n}}}=\exp\left(\sum_{j=1}^{n}\sum_{p=1}^{\infty}b_{p}\left(\frac{d_{j}}{\Delta Z}_{n}\right)^{2p}\left\Vert \boldsymbol{k}\right\Vert ^{2p}\right).
\]
Defining $R_{p}^{n}=\sum_{j=1}^{n}\left(d_{j}/\Delta Z_{n}\right)^{2p}$
the above is written as
\begin{equation}
\overline{e^{i\boldsymbol{k}Y_{n}}}=\exp\left(\sum_{p=1}^{\infty}b_{p}R_{p}^{n}\left\Vert \boldsymbol{k}\right\Vert ^{2p}\right).\label{eq:series}
\end{equation}

The Lyapunov condition states that there is a $q>1$ such that $R_{q}^{n}\to0$
as $n\to\infty$. We will show that the Lyapunov condition implies
that $R_{p}^{n}\to0$ for all $p>1$. First note that $\lim_{n\to\infty}R_{q}^{n}=0$
iff 
\begin{equation}
\lim_{n\to\infty}\frac{\left\Vert \boldsymbol{d}_{n}\right\Vert _{2q}}{\left\Vert \boldsymbol{d}_{n}\right\Vert _{2}}=0\label{eq:d_cumulant}
\end{equation}
where $\left\Vert \boldsymbol{x}\right\Vert _{p}=\left(\sum_{j=1}^{n}\left|x_{j}\right|^{p}\right)^{1/p}$
and $\boldsymbol{d}_{n}=(d_{1},d_{2},\ldots,d_{n})$. Now for $p\ge2q$
$\left\Vert \boldsymbol{x}\right\Vert _{p}\le\left\Vert \boldsymbol{x}\right\Vert _{2q}$
so Eq.~(\ref{eq:d_cumulant}) implies $R_{p}^{n}\to0$ for all $p>2q$.
Consider now $p\in(2,2q)$ and define $\boldsymbol{x}_{n}=\boldsymbol{d}_{n}/\left\Vert \boldsymbol{d}_{n}\right\Vert _{2}$
so that $\left\Vert \boldsymbol{x}_{n}\right\Vert _{2}=1$. For all
$r\in(0,1)$ define $\boldsymbol{x}^{r}:=(\left|x(1)\right|^{r},\ldots,\left|x(n)\right|^{r})$.
Then
\begin{align*}
\left\Vert \boldsymbol{x}^{r}\right\Vert _{2/r} & =\left(\left\Vert \boldsymbol{x}\right\Vert _{2}\right)^{r}\\
\left\Vert \boldsymbol{x}^{1-r}\right\Vert _{2q/(1-r)} & =\left(\left\Vert \boldsymbol{x}\right\Vert _{2q}\right)^{1-r}.
\end{align*}
By H\"older inequality \cite{bhatia_MatrixAnalysis_1977}, for all $p\in(2,2q)$,
\begin{equation}
\left\Vert \boldsymbol{x}\right\Vert _{p}\le\left\Vert \boldsymbol{x}^{r}\right\Vert _{2/r}\left\Vert \boldsymbol{x}^{1-r}\right\Vert _{2q/(1-r)}\label{eq:holder}
\end{equation}
as long as 
\[
\frac{1}{p}=\frac{r}{2}+\frac{1-r}{2q}
\]
which implies $r=(2q-p)/(p(q-1))$. This is indeed in $(0,1)$ for
$p\in(2,2q)$. Applying Eq.~(\ref{eq:holder}) to our sequence of
sequences $\boldsymbol{x}_{n}$ we get 
\[
\left\Vert \boldsymbol{x}_{n}\right\Vert _{p}\le\left(\left\Vert \boldsymbol{x}_{n}\right\Vert _{2q}\right)^{1-r}
\]
which implies $\left\Vert \boldsymbol{x}_{n}\right\Vert _{p}\to0$
also for $p\in(2,2q)$. Hence we obtain that $R_{p}^{n}\to0$ for
all $p>1$ as claimed, i.e.,~$\lim_{n\to\infty}R_{p}^{n}=\delta_{p,1}$.
To pass the limit inside the series in Eq.~(\ref{eq:series}) note
that $R_{q}^{n}\le1$ for all $q$ so $\left|b_{p}R_{p}^{n}\left\Vert \boldsymbol{k}\right\Vert ^{2p}\right|\le\left|b_{p}\right|\left\Vert \boldsymbol{k}\right\Vert ^{2p}=:M_{p}$
and $\sum_{p}M_{p}=\left|\ln J_{0}(\left\Vert \boldsymbol{k}\right\Vert )\right|<\infty$
and by Tannery's theorem \cite{whittaker_course_2021} we can exchange the limit with the series
and obtain 
\[
\lim_{n\to\infty}\overline{e^{i\boldsymbol{k}Y_{n}}}=e^{-\left\Vert \boldsymbol{k}\right\Vert ^{2}/4}.
\]
Fourier transforming we get for the probability distribution of $Y=\lim_{n\to\infty}Y_{n}$
\begin{align}
P_{Y}(\boldsymbol{y}) & =\int_{\mathbb{R}^{2}}\frac{d\boldsymbol{k}}{(2\pi)^{2}}e^{i\boldsymbol{k}\cdot\boldsymbol{x}}e^{-\left\Vert \boldsymbol{k}\right\Vert ^{2}/4}\\
 & =\frac{1}{\pi}e^{-\left\Vert \boldsymbol{y}\right\Vert ^{2}}.
\end{align}

The converse statement also trivially holds, i.e.,~if Eq.~(\ref{eq:d_cumulant})
is not satisfied for any $q>1$ the cumulants of $Y$ are all non-zero
and its distribution is not Gaussian. Given the joint distribution $P_{Y}(\boldsymbol{y})$ it is straightforward
to obtain the distribution of the distance square from the origin
$P_{\left|Y\right|^{2}}(r)$:
\begin{align*}
P_{\left|Y\right|^{2}}(r) & =\int_{\mathbb{R}^{2}}d\boldsymbol{y}\delta(r-\left\Vert \boldsymbol{y}\right\Vert ^{2})P_{Y}(\boldsymbol{y})\\
 & =2\pi\int_{0}^{\infty}\rho d\rho\delta(r-\rho^{2})\frac{1}{\pi}e^{-\left\Vert \boldsymbol{y}\right\Vert ^{2}}\\
 & =\vartheta(r)e^{-r}.
\end{align*}
Analogously one obtains $P_{\left|Y\right|}(u)=\vartheta(u)2ue^{-u^{2}}$.
Going back to the unnormalized variables one obtains approximately
$P_{\left|Z_{n}\right|^{2}}(v)\simeq\vartheta(\nu)\exp\left(-\nu/\Delta Z_{n}^{2}\right)/\Delta Z_{n}^{2}$. 
Note that the distribution of $\left|Z_{n}\right|^{2}$ can be written
exactly under the assumption of independent spectrum.
Following \citep{campos_venuti_universal_2014} (see Eq.~(F2))
\[
\mathrm{Prob}\left(\left|Z_{n}\right|^{2}<r^{2}\right)=\int_{0}^{\infty}d\rho\,rJ_{1}(r\rho)\prod_{j=1}^{n}J_{0}\left(d_{j}\rho\right).
\]
 Substituting $r^{2}=s$ and differentiating with respect to $s$
we obtain
\[
P_{\left|Z_{n}\right|^{2}}(r)=\frac{1}{2}\int_{0}^{\infty}d\rho\,\rho J_{0}(\sqrt{s}\rho)\prod_{j=1}^{n}J_{0}\left(d_{j}\rho\right),
\]
which is Eq.~(F3) of \citep{campos_venuti_universal_2014} after
simplification. 

\section{Convergence to the Wiener process}

\label{sec:proof_Wiener_process}Here we show that under certain conditions
the random walk defined by Eq.~(\ref{eq:wiener_N}) converges to
a Wiener process. The steps are similar to those of Appendix \ref{sec:Proof-of-CLT}.
As mentioned in the main text, if the energies are independent, the path
has independent increment. Let us compute the probability distribution
of the following random variable
\[
\Delta^{N}(s,h):=W_{s+h}^{N}-W_{s}^{N}=\frac{1}{\Delta Z_{N}}\sum_{j=\left\lfloor Ns\right\rfloor +1}^{\left\lfloor N(s+h)\right\rfloor }d_{j}e^{-itE_{j}},
\]

where $t$ is a random variable subject to infinite-time average.
Under assumption of independence of the energies its characteristic
function is
\[
\overline{e^{i\boldsymbol{k}\Delta^{N}(s,h)}}=\prod_{j=\left\lfloor Ns\right\rfloor +1}^{\left\lfloor N(s+h)\right\rfloor }J_{0}\left(\frac{d_{j}}{\Delta Z_{N}}\left\Vert \boldsymbol{k}\right\Vert \right).
\]
Let us, as in Appendix \ref{sec:Proof-of-CLT}, write $\ln J_{0}(x)=\sum_{q=1}^{\infty}b_{q}x^{2q}$.
Then
\[
\overline{e^{i\boldsymbol{k}\Delta^{N}(s,h)}}=\exp\left(\sum_{q=1}^{\infty}\sum_{j=\left\lfloor Ns\right\rfloor +1}^{\left\lfloor N(s+h)\right\rfloor }\left\Vert \boldsymbol{k}\right\Vert ^{2q}b_{q}\left(\frac{d_{j}}{\Delta Z_{N}}\right)^{2q}\right).
\]
Define 
\begin{equation}
R_{q}^{N}(h,s):=\frac{\sum_{k=\left\lfloor Ns\right\rfloor +1}^{\left\lfloor N(s+h)\right\rfloor }d_{k}^{2q}}{\left(\sum_{k=1}^{N}d_{k}^{2}\right)^{q}}.\label{eq:RN_hs}
\end{equation}
We now observe that $R_{q}^{N}(h,s)\le R_{q}^{N}$ for all allowed
$s,h$ and all $q>1$. Hence, the assumption that the weights are generic
implies that 
\[
\lim_{N\to\infty}R_{q}^{N}(h,s)=0,\,\,\mathrm{for}\,\,q>1.
\]
Using Eq.~(\ref{eq:Lyapunov_wiener}) and repeating the steps in
Appendix \ref{sec:Proof-of-CLT} to pass the limit inside the exponential
we obtain (remind that $b_{1}=-1/4$) 
\[
\lim_{N\to\infty}\overline{e^{i\boldsymbol{k}\Delta^{N}(s,h)}}=e^{-h\left\Vert \boldsymbol{k}\right\Vert ^{2}/4}.
\]
Fourier transforming,
we obtain the distribution of the random variable
$\Delta^{N}(s,t)$ in the limit $N\to\infty$:
\[
W_{s+h}^{N}-W_{s}^{N}\stackrel{d}{\longrightarrow}\frac{1}{h\pi}e^{-\left\Vert \boldsymbol{x}\right\Vert ^{2}/h}.
\]
Then we obtain
\[
\mathrm{Prob}\left(W:\left|W_{s+h}-W_{s}\right|\le\rho\right)=\frac{2\pi}{h\pi}\int_{0}^{\rho}dr\,re^{-r^{4}/h}.
\]
redefining times such that $s=2s'$ we obtain Eq.~(\ref{eq:conditionb})
of the main text, hence $W_{s'}^{N}$ converges to a Wiener process.

\section{Non-degeneracy and independence}

\label{sec:Non-degeneracy-and-independence}To compute the moments
of the SFF in the literature one often finds a non-degeneracy condition.
Here we show what is the relation between non-degeneracy and linear
independence over the rationals. First we need the following definition:
\begin{defn}
We say that the numbers $\left\{ E_{k}\right\} _{k=1}^{D}$, assumed
to be all different, satisfy the non-degeneracy condition at order
$M\in\mathbb{N}$ ($M$-ND) if, for any pair of sequences $\alpha_{i},\beta_{i}\in\left\{ 1,2,\ldots,D\right\} $,
$i=1,2,\ldots,M$, the equation
\begin{equation}
\sum_{j=1}^{M}E_{\alpha_{j}}=\sum_{j=1}^{M}E_{\beta_{j}}\label{eq:M-ND}
\end{equation}
 implies that $\{\alpha_{1},\alpha_{2},\ldots,\alpha_{M}\}$ is a
permutation of $\{\beta_{1},\beta_{2},\ldots,\beta_{M}\}$, i.e.,~there
exist a permutation $\pi\in\mathbb{S}_{M}$ such that $\alpha_{i}=\beta_{\pi(i)}$. 
\end{defn}
Note that $M$-ND implies $N$-ND for all $1\le N\le M$. It is useful
to go to the occupation number representation. For any $\left\{ \alpha_{i}\right\} _{i=1}^{M}$
we define the (Bosonic) occupation numbers:
\begin{equation}
n_{k}^{\alpha}:=\sum_{i=1}^{M}\delta_{E_{k},E_{\alpha_{i}}},\label{eq:occupation_mapping}
\end{equation}
so $n_{k}^{\alpha}$ counts how many terms in $\left\{ E_{\alpha_{i}}\right\} _{i=1}^{M}$
have energy $E_{k}$. Note that the occupation number representation
is unique modulo permutations, i.e.,~$n_{k}^{\alpha}=n_{k}^{\beta},\,\forall k\,\,\Leftrightarrow\alpha=\pi(\beta)$
for some permutation $\pi$. Moreover, clearly, 
\[
\sum_{j=1}^{M}E_{\alpha_{j}}=\sum_{k=1}^{D}n_{k}^{\alpha}E_{k}.
\]
We have the following result:
\begin{lem*}
The numbers $\left\{ E_{n}\right\} _{n=1}^{D}$ \textup{satisfy non-degeneracy
at order $M$ $\forall M\in\mathbb{N}$ ($\mathbb{N}$ND) if and only
if the energies }$\left\{ E_{n}\right\} _{n=1}^{D}$ are linearly
independent over the rationals.
\end{lem*}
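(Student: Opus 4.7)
The plan is to translate both sides of the equivalence into the integer-vector language set up just before the statement. Each $M$-sequence $\{\alpha_i\}_{i=1}^M$ modulo permutation corresponds bijectively to a non-negative integer vector $\boldsymbol{n}^\alpha \in \mathbb{Z}_{\ge 0}^D$ with total weight $|\boldsymbol{n}^\alpha|_1 := \sum_k n_k^\alpha = M$. The identity $\sum_j E_{\alpha_j} = \sum_j E_{\beta_j}$ becomes the single integer relation $\sum_k (n_k^\alpha - n_k^\beta) E_k = 0$, so $M$-ND is exactly the statement that the linear map $\phi : \boldsymbol{n} \mapsto \sum_k n_k E_k$ is injective on the slice of $\mathbb{Z}_{\ge 0}^D$ with $|\boldsymbol{n}|_1 = M$, and $\mathbb{N}$-ND is its injectivity on every such slice simultaneously.

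For the direction $\mathbb{Q}$-linear independence $\Rightarrow$ $\mathbb{N}$-ND, I would take any pair of $M$-sequences with equal energy sums, form the integer vector $\boldsymbol{m} := \boldsymbol{n}^\alpha - \boldsymbol{n}^\beta$, and use the independence hypothesis to conclude $\boldsymbol{m} = 0$ from the rational relation $\boldsymbol{m} \cdot \boldsymbol{E} = 0$. The uniqueness of the occupation-number representation recorded in the paper then forces $\alpha$ to be a permutation of $\beta$. Since $M$ was arbitrary this gives $\mathbb{N}$-ND at once.

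For the converse, starting from a non-trivial rational dependence, I would clear denominators to obtain $\boldsymbol{c} \in \mathbb{Z}^D \setminus \{0\}$ with $\sum_k c_k E_k = 0$, decompose $c_k = c_k^+ - c_k^-$ into its positive and negative parts, and define $\boldsymbol{n}^\alpha := \boldsymbol{c}^+$ and $\boldsymbol{n}^\beta := \boldsymbol{c}^-$. These are distinct non-negative integer vectors with $\phi(\boldsymbol{n}^\alpha) = \phi(\boldsymbol{n}^\beta)$ by construction, furnishing the skeleton of an $M$-ND violation.

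The main obstacle -- and the technical heart of the converse -- is that the definition of $M$-ND compares sequences of \emph{equal} length, whereas $|\boldsymbol{n}^\alpha|_1 - |\boldsymbol{n}^\beta|_1 = \sum_k c_k$ need not vanish. I would handle this with a rebalancing step: either add a common padding vector that preserves the energy identity while equalizing the total weights, or, equivalently, pass to the quotient by the all-ones vector (the natural gauge degree of freedom in the SFF setting, since only energy differences enter $|\chi(t)|^2$) and reformulate the statement as the equivalence of $\mathbb{N}$-ND with $\mathbb{Q}$-linear independence of the differences $\{E_k - E_1\}_{k=2}^D$. Spelling out that this quotient reformulation is the physically relevant one, and that any unbalanced integer relation descends to a bona fide $M$-ND violation at some $M$ in the reduced framework, is the only nontrivial step; everything else is a direct translation between the two languages.
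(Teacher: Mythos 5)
Your forward direction is exactly the paper's: pass to occupation numbers and read off $n_k^{\alpha}=n_k^{\beta}$ from rational independence. The converse is where you have put your finger on a real issue that the paper's own proof glosses over. The paper sets $n_k^{\alpha}=\max(m_k,0)$, $n_k^{\beta}=\max(-m_k,0)$ and invokes ``$M$-ND for some $M$,'' but these two occupation vectors have total weights $\sum_k m_k^{+}$ and $\sum_k m_k^{-}$, which differ whenever $\sum_k m_k\neq 0$; they then do not correspond to a pair of sequences of a common length $M$, and the $M$-ND hypothesis says nothing about them. This is not a removable technicality: the lemma as stated is false. Take $D=2$, $E_1=1$, $E_2=2$. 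Then $2E_1-E_2=0$ is a nontrivial rational relation, yet for any length-$M$ sequence $\alpha$ one has $\sum_j E_{\alpha_j}=M+\#\{j:\alpha_j=2\}$, so equality of sums forces equality of multisets and $M$-ND holds for every $M$. The statement that is actually true, and that your quotient reformulation captures, is that $\mathbb{N}$ND is equivalent to the nonexistence of a nonzero $\boldsymbol{m}\in\mathbb{Z}^{D}$ satisfying \emph{both} $\sum_k m_k E_k=0$ and $\sum_k m_k=0$, i.e.\ to rational independence of the differences $\left\{ E_k-E_1\right\}_{k=2}^{D}$. Since the moment computation only ever produces balanced relations ($|s|=|r|=m$), this weaker, balanced notion is precisely what Theorem \ref{thm:moments} needs, so nothing downstream is affected by the correction.

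One flaw in your plan: the first of your two ``equivalent'' rebalancing options --- adding a common padding vector to both sides --- cannot work, because adding the same vector to $\boldsymbol{n}^{\alpha}$ and $\boldsymbol{n}^{\beta}$ preserves the weight mismatch along with the energy identity, and the counterexample above shows that no rebalancing of any kind can succeed in general. The two options are therefore not equivalent: the padding route is an attempt to salvage the original statement, which is unsalvageable, while the quotient route proves a corrected statement. You should present the latter as a correction to the lemma rather than as an alternative proof of it.
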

\begin{proof}
First, the $\Leftarrow$ direction. Consider two sequences $\alpha_{i},\beta_{i}$.
Using the mapping Eq.~\eqref{eq:occupation_mapping}, $M$-ND Eq.~(\eqref{eq:M-ND})
can be written as
\begin{equation}
\sum_{k=1}^{D}\left(n_{k}^{\alpha}-n_{k}^{\beta}\right)E_{k}=0.
\end{equation}
 Now independence implies that $n_{k}^{\alpha}=n_{k}^{\beta}$ and
for what we have said this implies $\alpha=\pi(\beta)$ for some permutation
$\pi$. The statement holds for all possible $M$ so this direction
is proven. For the $\Rightarrow$ direction, consider the equation
for linear independence:
\[
\sum_{k=1}^{D}m_{k}E_{k}=0,
\]
with $m_{k}\in\mathbb{Z}$ for $k=1,\ldots,D$. We can always find
sequences $\alpha,\beta$ such that $m_{k}=n_{k}^{\alpha}-n_{k}^{\beta}$.
For example, for $k=1,\ldots,D$, if $m_{k}>0$ we can choose $n_{k}^{\alpha}=m_{k}$
and $n_{k}^{\beta}=0$ and conversely if $m_{k}<0$ we can choose $n_{k}^{\beta}=\left|m_{k}\right|$
and $n_{k}^{\alpha}=0$. By $M$-ND (for some $M$) this implies $m_{k}=n_{k}^{\alpha}-n_{k}^{\beta}=0$,
i.e.,~independence. 
\end{proof}

\section{Moments of the spectral form factor\label{sec:SFF_moments}}

In this section, we give the exact form of the infinite time moments
of the spectral form factor assuming that the energies $\left\{ E_{j}\right\} _{j=1}^{N_{B}}$
satisfy nondegeneracy at order $m$. We recall from Appendix~\ref{sec:Proof-of-commutativity}
\begin{align}
I_{m}:=\left|\chi(t)\right|^{2m} & =\sum_{i_{1}\cdots i_{m}}\sum_{j_{1}\cdots j_{m}}F_{i_{1},j_{1}}\cdots F_{i_{m},j_{m}} \nonumber \\
 & \times\exp\left(-it\sum_{l=1}^{m}\left(E_{i_{l}}-E_{j_{l}}\right)\right)
\end{align}
which we write compactly as
\[
\left|\chi(t)\right|^{2m}=\sum_{s,r}f(s,r)e^{-it(\boldsymbol{E}_{s}-\boldsymbol{E}_{r})}
\]
with multi indices $s=\left(s_{1},\ldots,s_{m}\right)$
corresponding to $\boldsymbol{E}_{s}=E_{s_{1}}+E_{s_{2}}+\cdots E_{s_{m}}$
and similarly for $r$ and $\boldsymbol{E}_{r}$ together with the
coefficients $f(s,r)$. Taking the infinite-time average we get $\overline{e^{-it(\boldsymbol{E}_{s}-\boldsymbol{E}_{r})}}=\delta\left(\boldsymbol{E}_{s}-\boldsymbol{E}_{r}\right)$.
If the energies satisfy $m$ order non-degeneracy, the only way to
fulfill the constraint $\delta\left(\boldsymbol{E}_{s}-\boldsymbol{E}_{r}\right)$
is that the string $r$ is a permutation of $s$. There are
multiplicities, however, that have to be accounted for. The result is 
\begin{align}
\overline{\left|\chi(t)\right|^{2m}} & =\sum_{s,r}f\left(s,r\right)\delta\left(\boldsymbol{E}_{s}-\boldsymbol{E}_{r}\right)\\
 & =\sum_{s}\sum_{\pi\in\mathbb{S}_{m}}f\left(s,\pi\left(s\right)\right)\frac{1}{c\left(s\right)}
\end{align}
the multiplicity $c\left(s\right)$ is the number of different permutations
of $s$ given that some member of $\boldsymbol{E}_{s}$ might be equal.
Having defined, for any string $\left\{ s_{i}\right\} _{i=1}^{m}$
and $k=1,\ldots,N_{B}$, the occupation number $n_{k}^{s}=\sum_{i=1}^{m}\delta_{E_{k},E_{s_{i}}}$
which counts how many energies $E_{k}$ are in $\boldsymbol{E}_{s}$,
clearly one has $\sum_{i=1}^{N_{B}}n_{i}^{s}=m$. Then the multiplicity
factor is given by
\[
c\left(s\right)=\prod_{i=1}^{N_{B}}n_{i}^{s}!
\]
 Now, starting from 
\begin{equation}
\overline{\left|\chi(t)\right|^{2m}}=\sum_{s_{1},\ldots,s_{m}}c\left(s\right)^{-1}\sum_{\pi\in\mathbb{S}_{m}}F_{s_{1},\pi\left(s_{1}\right)}\cdots F_{s_{m},\pi\left(s_{m}\right)},\label{eq:mu_k}
\end{equation}

we use $F_{i,j}=d_{i}d_{j}$ so $f\left(s,\pi\left(s\right)\right)=\prod_{i=1}^{N_{B}}d_{i}^{2n_{i}^{s}}$
and is independent of $\pi$. Then
\begin{eqnarray}
\overline{\left|\chi(t)\right|^{2m}} & = & \sum_{s}\frac{1}{\prod_{i}n_{i}!}\sum_{\pi}\prod_{i=1}^{N_{B}}d_{i}^{2n_{i}^{s}}\\
 & = & \sum_{s}\frac{m!}{\prod_{i}n_{i}!}\prod_{i=1}^{N_{B}}d_{i}^{2n_{i}^{s}}\\
 & = & \sum_{\sum_{i}k_{i}=k}\left[\frac{m!}{\prod_{i}n_{i}!}\right]^{2}\prod_{i=1}^{N_{B}}d_{i}^{2n_{i}}
\end{eqnarray}
where the last line comes from the multinomial formula that allows
to go from index labels $s=\left(s_{1},\ldots,s_{k}\right)$ to occupation
number labels $n_{i},\,i=1,\ldots,N_{B}$. Finally the $m$-th moment
of the spectral form factor is given by 
\begin{align}
\overline{\left|\chi(t)\right|^{2m}} & =\left(m!\right)^{2}\sum_{\sum_{i}n_{i}=m}\prod_{i=1}^{N_{B}}\left(\frac{d_{i}^{n_{i}}}{n_{i}!}\right)^{2}\label{eq:I_2m}\\
 & =\sum_{\sum_{i}n_{i}=m}\left(\begin{array}{c}
m\\
n_{1},\cdots,n_{N_{B}}
\end{array}\right)^{2}\prod_{i=1}^{N_{B}}d_{i}^{2n_{i}},
\end{align}
where in the last line we introduced the multinomial coefficient.
In case $d_{j}=1$ for all $j$ the above reduces to the expression
found in \citep{borwein_arithmetic_2011}. Equation (\ref{eq:I_2m}),
although exact, is difficult to manage. For example, it is not clear how
to estimate the order of magnitude or the leading contribution to
the moments. In order to obtain a more manageable expression let us
define the following generating function
\begin{align}
B\left(z\right) & :=\sum_{k=0}^{\infty}\frac{I_{k}}{\left(k!\right)^{2}}z^{k}\label{eq:B-def}\\
 & =\overline{I_{0}\left(2\left|\chi(t)\right|\sqrt{z}\right),}
\end{align}
where $I_{0}$ is the modified Bessel function of the first kind.
The above sum can be obtain in closed form, indeed,
\begin{eqnarray}
B\left(z\right) & = & \sum_{k=0}^{\infty}\sum_{\sum_{i}k_{i}=k}\prod_{i=1}^{N_{B}}\frac{\left(d_{i}^{2}z\right)^{k_{i}}}{\left(k_{i}!\right)^{2}}\\
 & = & \prod_{i=1}^{N_{B}}\sum_{n=0}^{\infty}\frac{\left(d_{i}^{2}z\right)^{n}}{\left(n!\right)^{2}}.
\end{eqnarray}
The last sum is again $I_{0}$, more precisely
\begin{equation}
B\left(z\right)=\prod_{i=1}^{N_{B}}I_{0}\left(2d_{i}\sqrt{z}\right) \, ,  \label{eq:B-res}
\end{equation}
In principle one can obtain the moments via
\[
I_{m}=m!\left.\frac{d^{m}B}{dz^{m}}\right|_{z=0},
\]
but we are proposing a better expression momentarily. 

For convenience we define $I\left(x\right)=I_{0}\left(2\sqrt{x}\right)$, and the coefficients $a_{n}$ via the following series ($I(x)\ge1$
for $x\ge0$ so the logarithm is well defined and analytic in a neighborhood of zero) 
\[
\ln I\left(x\right)=\sum_{n=0}^{\infty}a_{n}\frac{x^{n}}{n!}.
\]
then 
\begin{eqnarray}
\ln\left(B\left(z\right)\right) & = & \sum_{i=1}^{N_B}\ln I\left(d_{i}^{2}z\right)\nonumber \\
 & = & \sum_{i=1}^{N_B}\sum_{n=0}^{\infty}a_{n}d_{i}^{2n}\frac{z^{n}}{n!}\nonumber \\
 & = & \sum_{n=0}^{\infty}a_{n}X_{n}\frac{z^{n}}{n!},\label{eq:B_LE}
\end{eqnarray}
where we defined $X_{n}=\sum_{i=1}^{N_{B}}d_{i}^{2n}$. The coefficients
$a_{n}$ are the cumulants of a random variable with moments $\mu_{k}=1/k!$.
So $a_{1}=1$, $a_{2}=1/2-1^{2}=-1/2$, $a_{3}=\ldots=2/3$ and so
on. Hence the moments $I_{k}/k!$ have cumulants $a_{n}X_{n}$. We
can then use the standard recursion between moments and cumulants:
\[
\frac{I_{p}}{p!}=\sum_{q=1}^{p-1}\left(\begin{array}{c}
p-1\\
q-1
\end{array}\right)a_{q}X_{q}\frac{I_{p-q}}{(p-q)!}+a_{p}X_{p}.
\]
 which is Eq.~(\ref{eq:recursion}) from the main text. With this
recursion one easily obtains Eqs.~(\ref{eq:moments1}-\ref{eq:moments2})
of the main text and also, for example, assuming $5$-ND,
\begin{align}
I_{4} & =24\left(X_{1}\right)^{4}-72\left(X_{1}\right)^{2}X_{2}+18\left(X_{2}\right)^{2}\nonumber \\
 & +64X_{1}X_{3}-33X_{4}\\
I_{5} & =120\left(X_{1}\right)^{5}-600X_{2}\left(X_{1}\right)^{3}+800X_{3}\left(X_{1}\right)^{2}\nonumber \\
 & +450\left(X_{2}\right)^{2}X_{1}-825X_{4}X_{1}\nonumber \\
 & -400X_{2}X_{3}+456X_{5}\,.
\end{align}

\section{Limit distribution for free Fermionic spectra\label{sec:distribution_free}}

Consider $H$ with spectrum $E_{\boldsymbol{n}}=\sum_{k=1}^{L}n_{k}\Lambda_{k}+E_{0}$
with $n_{k}=0,1$. We first compute ($z=\beta+it$)
\begin{align*}
\tr\left(e^{-zH}\right) & =e^{-zE_{0}}\sum_{\left\{ n_{k}\right\} }\exp\left(-z\sum_{k=1}^{L}n_{k}\Lambda_{k}\right)\\
 & =e^{-zE_{0}}\sum_{\left\{ n_{k}\right\} }e^{-zn_{1}\Lambda_{1}}\cdots e^{-zn_{L}\Lambda_{L}}\\
 & =e^{-zE_{0}}\prod_{k=1}^{L}\left(1+e^{-z\Lambda_{k}}\right).
\end{align*}
From here on we stick to infinite temperature i.e.,~$z=it$. If some
levels are degenerate we get
\begin{align}
\left|\tr\left(e^{-itH}\right)\right|^{2} & =\prod_{j=1}^{L_{B}}\left|1+e^{-it\epsilon_{j}}\right|^{2g_{j}} \nonumber \\
 & =\prod_{j=1}^{L_{B}}\left(2\cos\left(t\epsilon_{j}/2\right)\right)^{2g_{j}}.
\end{align}
Assuming that the levels $\left\{ \epsilon_{j}\right\} _{j=1}^{L_{B}}$are
independent implies that (upon considering $t$ a random variable)
$\left|\tr\left(e^{-itH}\right)\right|^{2}$ is a product of independent
random variables $Y_{j}$. So the logarithm of the product is a sum
of independent random variables $\ln Y_{j}$. The moment generating
function of $\ln Y_{j}$ is

\begin{align}
\overline{\left(Y_{j}\right)^{\lambda}} & =\int_{0}^{2\pi}\frac{d\vartheta}{2\pi}\left(2\cos\left(\vartheta\right)\right)^{2g_{j}\lambda}=\frac{4^{\lambda g_{j}}\Gamma\left(\frac{1}{2}+\lambda g_{j}\right)}{\sqrt{\pi}\Gamma\left(1+\lambda g_{j}\right)}\label{eq:frac_moments}\\
 & =G(\lambda g_{j})=\exp\left[\sum_{n=2}^{\infty}c_{n}\left(g_{j}\lambda\right)^{n}\right],
\end{align}
where in the last equation we defined the function $G$ and the
coefficients $c_{n}$ (since the function in (\ref{eq:frac_moments})
is analytic and positive for $\lambda>-1/2$ its logarithm is analytic
in the same region). Explicitly $c_{2}=\pi^{2}/6$. 

Let us define the following random variable
\[
Q^{L}=\frac{\ln\left|\chi(t)\right|^{2}}{\sqrt{\sum_{j=1}^{L_{B}}g_{j}^{2}}},
\]
and the ratios 
\[
S_{p}^{L_{B}}=\frac{\sum_{j=1}^{L_{B}}\left(g_{j}\right)^{2}}{\left(\sum_{j=1}^{L_{B}}g_{j}^{2}\right)^{p/2}}.
\]
By the same arguments as in Appendix \ref{sec:Proof-of-CLT} the assumption
that the one-particle degeneracies $g_{j}$  satisfy the Lyapunov condition Eq.~\eqref{eq:generic_integrable} implies that
\[
\lim_{L\to\infty}S_{p}^{L}=0,\,\,\forall p>2.
\]
The random variable $Q^{L}$ has the following characteristic function
\[
\overline{e^{ikQ^{L}}}=\exp\sum_{j=1}^{L_{B}}\sum_{p=2}^{\infty}c_{p}S_{p}^{L_{B}}\left(ik\right)^{p}.
\]

To show that we can pass the limit inside the exponential note that
$\left|S_{p}^{L_{B}}\right|\le1$ for all $p$ and since $\ln(G(\lambda))$
is analytic, the coefficients are bounded by $\left|c_{p}\right|\le C^{p+1}$
for some constant $C$ and $\sum_{p=2}^{\infty}C^{p+1}\left|k\right|^{p}<\infty$
for $\left|k\right|$ small enough so that the limit and the sum can
be swapped by Tannery's theorem. Finally, we obtain, in the thermodynamic
limit, 
\[
\lim_{L\to\infty}\overline{e^{i\lambda Q^{L}}}=e^{-(\pi^{2}/6)\lambda^{2}}.
\]
After Fourier transform, we get
\[
\frac{\ln\left|\chi(t)\right|^{2}}{\sqrt{\sum_{j=1}^{L_{B}}g_{j}^{2}}}\stackrel{d}{\longrightarrow}\mathsf{N}\left(0,\frac{\pi^{2}}{3}\right).
\]

For finite temperature, $\beta \neq 0$, the steps are the same, but we could not evaluate the variance in closed form, which now depends on $\beta$. 

\section{Moments for free Fermionic spectra\label{sec:Moments_free}}

Here we want to compute $I_{M}=\overline{\left|\chi(t)\right|^{2M}}$
when $H$ has the free Fermionic spectrum, under the assumption that
the $\left\{ \epsilon_{j}\right\} _{j=1}^{L_{B}}$ are independent.
Following the previous section we have
\begin{align*}
\left|\chi(t)\right|^{2M} & =\prod_{j=1}^{L_{B}}\left(e^{it\epsilon_{j}/2}+e^{-it\epsilon_{j}/2}\right)^{2g_{j}M}\\
 & =\prod_{j=1}^{L_{B}}\sum_{k=0}^{2g_{j}M}\left(\begin{array}{c}
2g_{j}M\\
k
\end{array}\right)e^{-it\epsilon_{j}(g_{j}M-k)}\\
 & =\sum_{k_{1}=0}^{2g_{j}M}\cdots\sum_{k_{L_{B}}=0}^{2g_{j}M}\left(\begin{array}{c}
2g_{j}M\\
k_{1}
\end{array}\right)\cdots\left(\begin{array}{c}
2g_{j}M\\
k_{L_{B}}
\end{array}\right)\\
 & \times\exp\left(-it\sum_{j=1}^{L_{B}}\epsilon_{j}(g_{j}M-k_{j})\right).
\end{align*}
Taking the infinite time average independence of the $\epsilon_{j}$
implies that only the terms with $k_{j}=g_{j}M$ remain and we get
\[
\overline{\left|\chi(t)\right|^{2M}}=\prod_{j=1}^{L_{B}}\left(\begin{array}{c}
2g_{j}M\\
g_{j}M
\end{array}\right).
\]

\section{Numerics on physical models}

\subsection{Check of the Lyapunov condition in Eq.(\ref{eq:generic_weights}) at high temperature}
\label{sec:numerics} 

We show that at infinite temperature $T = \infty$ (with $\beta = 1/T$) implying $\rho=\mathbb{I} $,  in case of physical models (local Hamiltonians), with spectral resolution $H=\sum_{j=1}^{N_B} E_j \Pi_j $, the weights $d_j =\tr (\Pi_j)$, satisfy the Lyapunov condition Eq.~(\ref{eq:generic_weights}), i.e.,~are {\it generic}. For a given Hamiltonian size, we define
\begin{equation}
R_{q}^{n}=\frac{\sum_{j=1}^{n}d_{j}^{2q}}{\left(\sum_{j=1}^{n}d_{j}^{2}\right)^{q}},\label{eq:RN}
\end{equation}
where $n\in\{1,2,\ldots.N_{B}\}$. We investigate this behavior using the following XXZ Hamiltonian with next-to-nearest-neighbor (NNN) interactions (Eq.~\eqref{XXZNNNHamiltonian} of the main text) with periodic boundary conditions, $\sigma_{L+j}^\alpha = \sigma_j^\alpha$: 
\begin{align}
H & =\sum_{j=1}^{L}\left(\sigma_{j}^{x}\sigma_{j+1}^{x}+\sigma_{j}^{y}\sigma_{j+1}^{y}+\Delta\sigma_{j}^{z}\sigma_{j+1}^{z}\right) \nonumber \\
 & +\alpha\sum_{j=1}^{L}\left(\sigma_{j}^{x}\sigma_{j+2}^{x}+\sigma_{j}^{y}\sigma_{j+2}^{y}+\Delta\sigma_{j}^{z}\sigma_{j+2}^{z}\right). 
\end{align}

This model interpolates between three regimes:
\begin{itemize}
    \item \textit{non-integrable} case ($\alpha \neq 0$),
    \item \textit{Bethe Ansatz (BA) integrable} case ($\alpha = 0$),
    \item \textit{quadratic quasi-free integrable} case ($\Delta = 0$, $\alpha = 0$).
\end{itemize}
We numerically diagonalize the Hamiltonian for finite system sizes $N$ and identify eigenvalue degeneracies using a numerical threshold of $\epsilon = 10^{-8}$.

In Fig.~\ref{fig:RN_check}, we present numerical evidence that the weights $d_j$ behave generically in both the integrable (Bethe Ansatz) and non-integrable regimes, i.e., the ratio decays exponentially with the system size $N$ (upper panels), and as a power-law with the eigenvalue index $n$ (lower panels). Note that, as shown in the proof of Theorem 2,  if, for $M\to\infty$, $R_{q}^{M}\to0$ for a certain value of $q>1$, then $R_{q}^{M}\to0$ for all $q>1$, and so it is sufficient to check the statement for a single $q$.

\begin{figure}[t!]
\begin{centering}
\includegraphics[width=8.5cm]{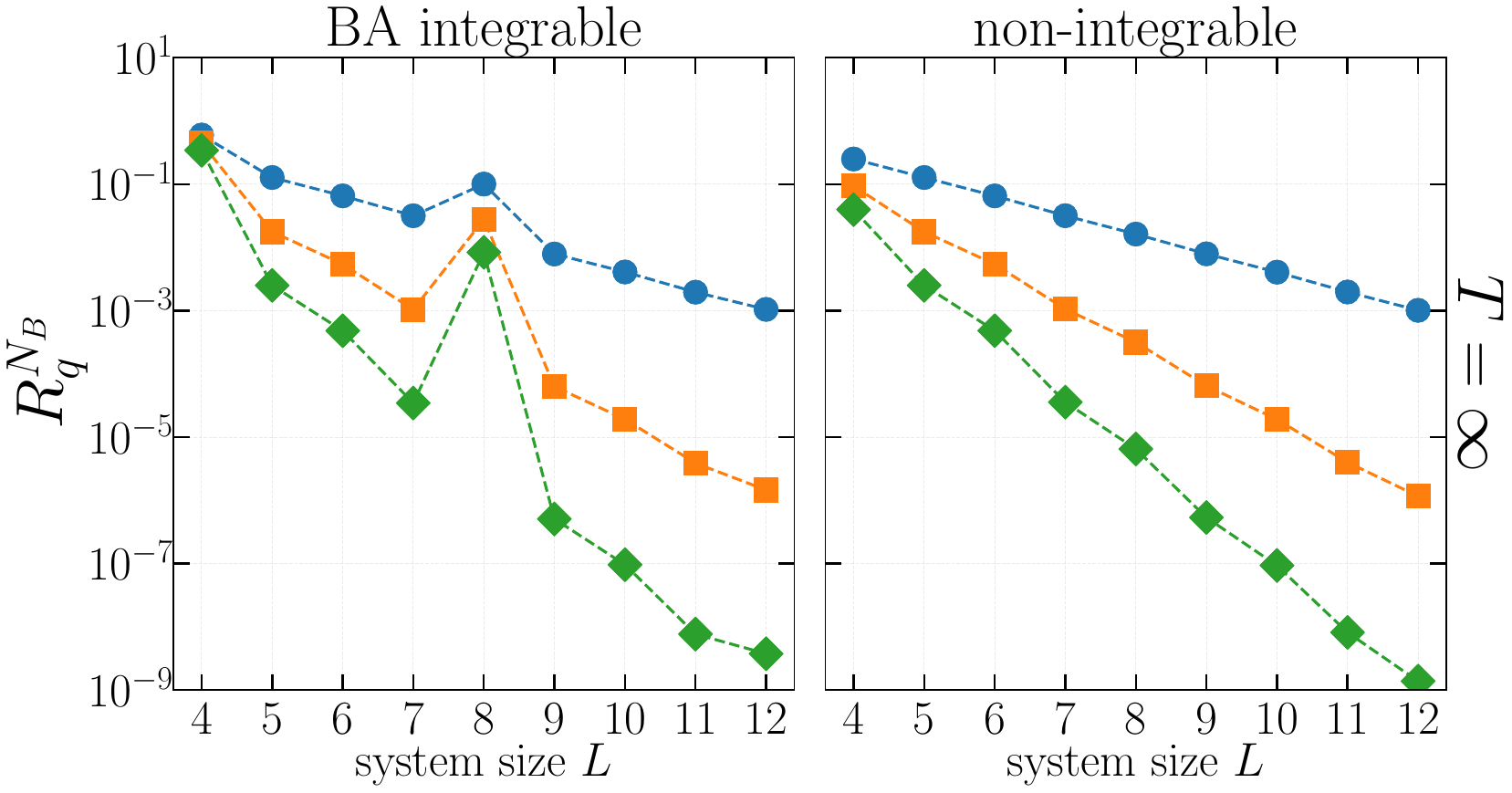}
\par\end{centering}
\begin{centering}
\includegraphics[width=8.5cm]{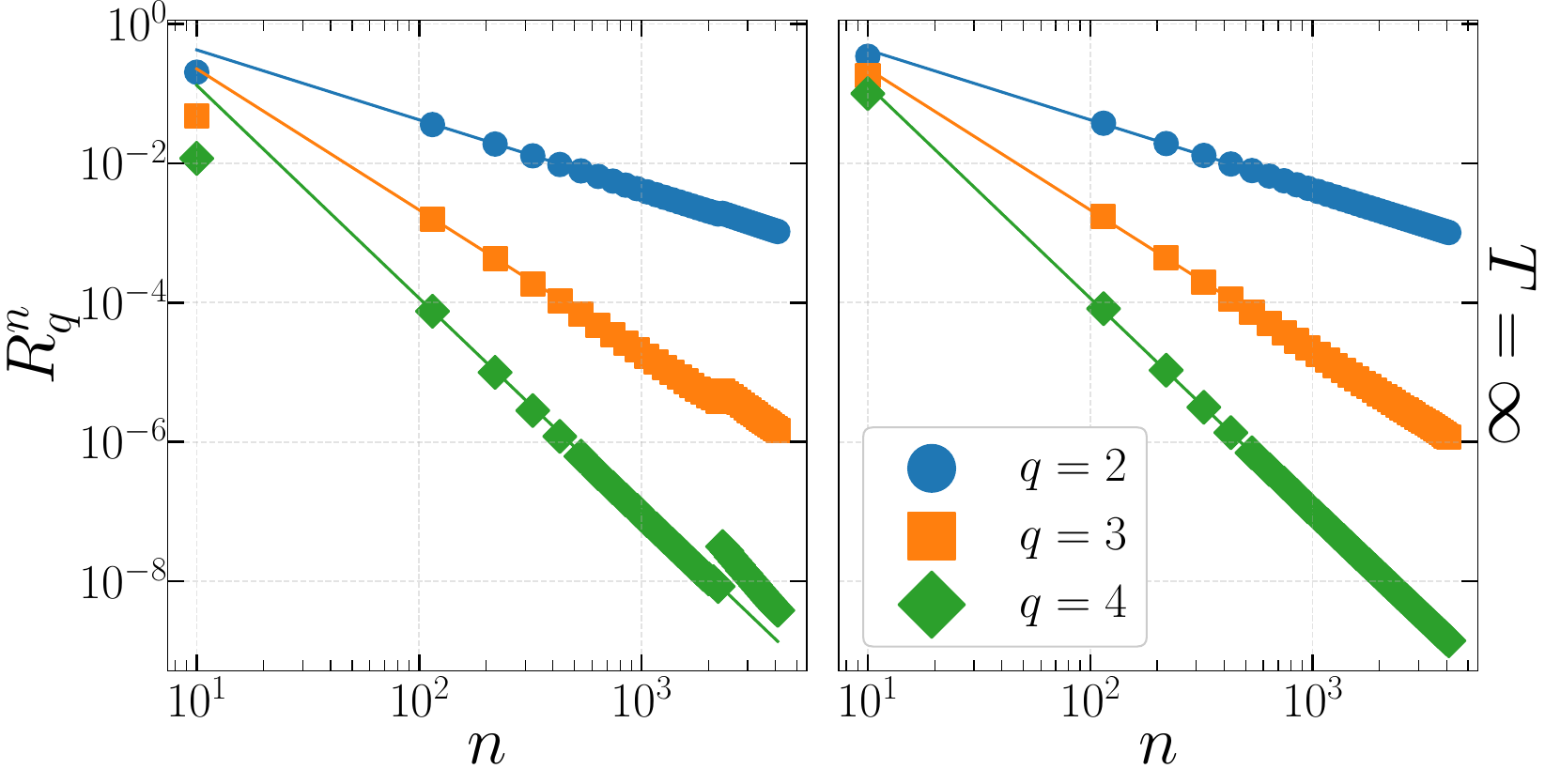}
\par\end{centering}
\caption{Numerical check of~\eqref{eq:generic_weights} for the paradigmatic example of a physical (local) Hamiltonian given by XXZ+NNN chain in~\eqref{XXZNNNHamiltonian} at infinite temperature.  \textit{Upper panels}: Decay of the ratios $R_q^{N_B}$ for various system sizes $L$, where $N_B$ denotes the total number of blocks in the Hamiltonians (corresponding to the number of $d_j$'s). Dashed lines are included as guides for the eye. The bump appearing at $L=8$ (and to a lesser extent at $L=4$) is  due to the fact that for $L=2^{\mathrm{integer}}$ the degeneracies are larger and this gives rise to an increase in $R^{N_B}_q$. 
\textit{Lower panels}: Decay of the ratios $R_q^{n}$ for fixed system size $L=12$, the largest size for which we computed the full spectrum. Solid lines are fits to the power-law function $f(n) = b/n^{a-1}$, with the extracted exponent satisfying $a \approx q$ up to small deviations. Panels on the left correspond to the Bethe Ansatz (BA) integrable case with $\alpha=0$ and  $\Delta=0.1$, while panels on the right show the non-integrable (NI) case with $\alpha=0.1$ and $\Delta=0.5$. The degeneracies on the left (BA) belong to the set $\{1,2,3,4,6,10\}$ while those on the right (NI) to $\{1,2,3,4\}$. The bump on the left panel corresponds to the first occurrence of $d_j=10$ and is understandably more pronounced for large $q$. 
 \label{fig:RN_check}
}

\end{figure}

\begin{figure}[ht!]
\begin{centering}
\includegraphics[width=8.5cm]{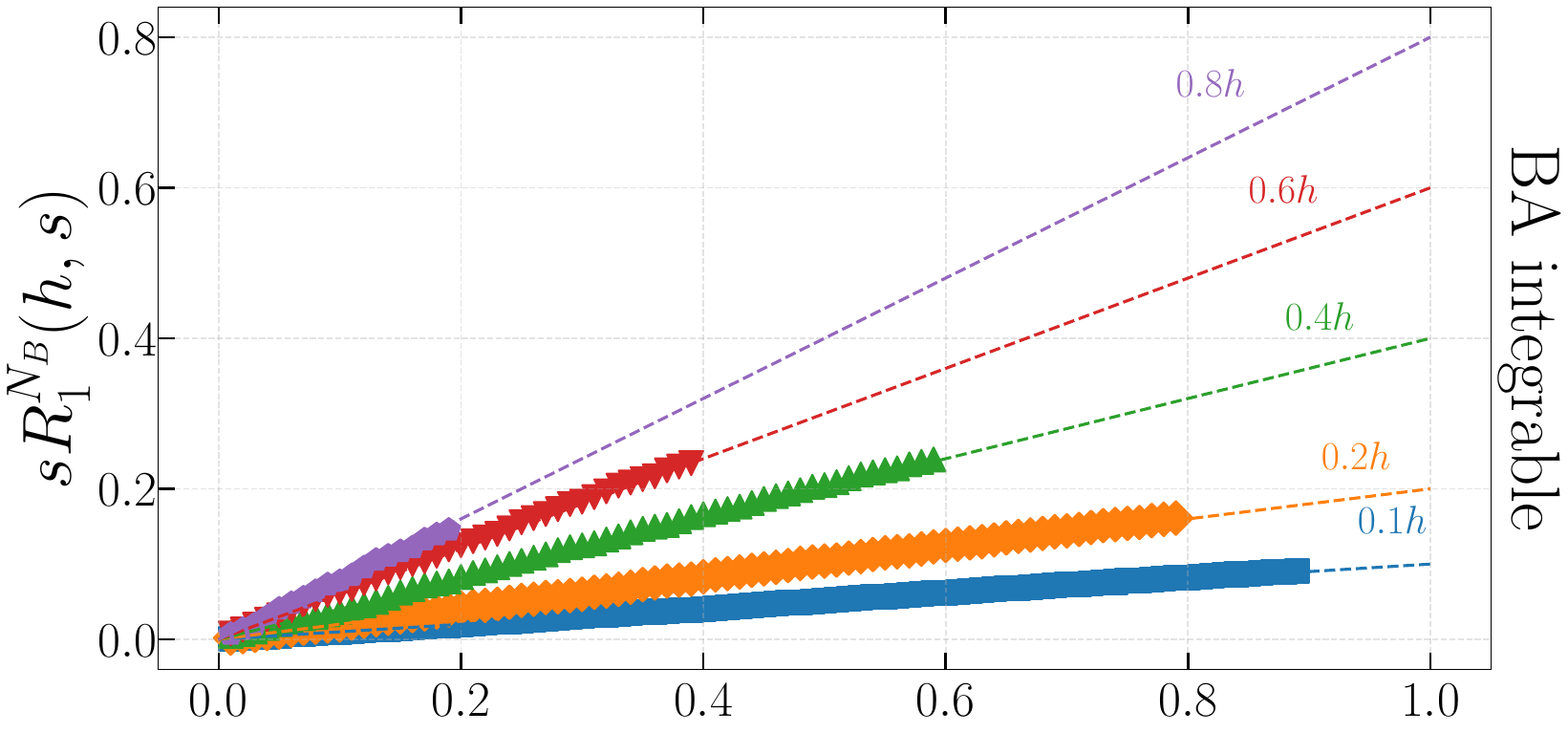}
\par\end{centering}
\begin{centering}
\includegraphics[width=8.5cm]{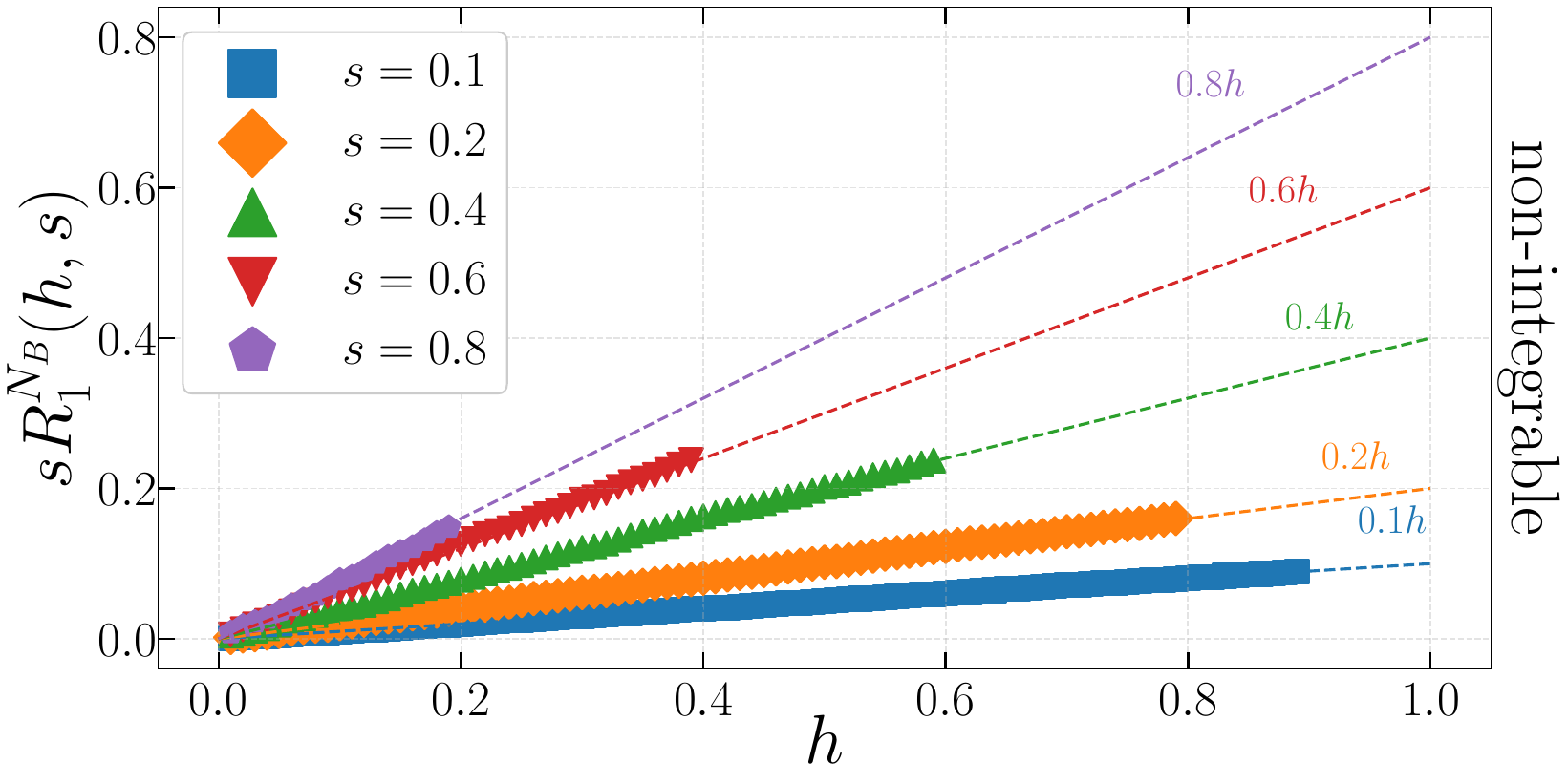}
\par\end{centering}
\caption{Numerical check of~\eqref{eq:Lyapunov_wiener} for the paradigmatic example of a physical (local) Hamiltonian given by XXZ+NNN chain in~\eqref{XXZNNNHamiltonian} at infinite temperature $T=\infty$. We fix the system size
$L=12$, show the result of the quantity $sR_{1}^{N}(h,s)$ which is expected to be $sh$. \textit{Upper panel}: Bethe Ansatz integrable case with $\alpha=0$ and  $\Delta=0.1$. \textit{Lower panel}: non-integrable case with parameters $\alpha=0.1$ and $\Delta=0.1$. } \label{fig:RN_hs_check}
\end{figure}

%


\subsection{Check of Eq.~(\ref{eq:Lyapunov_wiener}) at high temperature}
We now turn to verifying condition~\eqref{eq:Lyapunov_wiener} from the main text. By Theorem~3, this condition, together with the assumption of independence of the energy levels, guarantees that the associated random walk converges to a Wiener process in the thermodynamic limit. To recall, the condition requires that
\begin{equation}
    \lim_{N \to \infty} R_1^N(h, s) = h \, , \label{eq:check-wiener}
\end{equation}
where we recall Eq.~(\ref{eq:RN_hs}):
\begin{equation}
R_{q}^{N}(h,s):=\frac{\sum_{k=\left\lfloor Ns\right\rfloor +1}^{\left\lfloor N(s+h)\right\rfloor }d_{k}^{2q}}{\left(\sum_{k=1}^{N}d_{k}^{2}\right)^{q}}.
\end{equation}
In Fig.~\ref{fig:RN_hs_check}, we numerically plot $s R_1^N(h, s)$, which should asymptotically approach $s h$ for large system sizes $N$, for various values of $h$ and $s$. The results show excellent agreement with the theoretical prediction (dotted lines), confirming the expected behavior.

In summary, for the XXZ+NNN chain, we have verified that both Lyapunov-type conditions for the spectral weights---Eqs.~(\ref{eq:generic_weights}) and~(\ref{eq:Lyapunov_wiener})---are satisfied. These are necessary prerequisites for Theorems~\ref{thm:CLT_lyapunovplus} and~\ref{thm:Wiener} to apply. What remains is the verification of linear independence of the energy spectrum over the rationals---a requirement that is, in practice, numerically intractable. Nevertheless, for non-integrable systems, spectral independence is generally expected. In contrast, the Bethe-Ansatz integrable case is more subtle: it may yield spectra that are either rationally independent or contain many rational relations.

\subsection{Violation of the Lyapunov condition Eq.~(\ref{eq:generic_weights}) at low temperature}

In Fig.~\ref{fig:finitetemp}, we demonstrate the breakdown of the Lyapunov condition Eq.~(\ref{eq:generic_weights}) at low temperature, thus signaling the inapplicability theorems~2 and 3. In particular, with the decrease of the temperature parameter $T$, we observe (from the top to bottom panels) that the ratio $R_q^{N_B}$ changes its exponential decay with the system size and flattens off as the zero temperature limit is approached. 

\begin{figure}[t!]
\begin{centering}
\includegraphics[width=8cm]{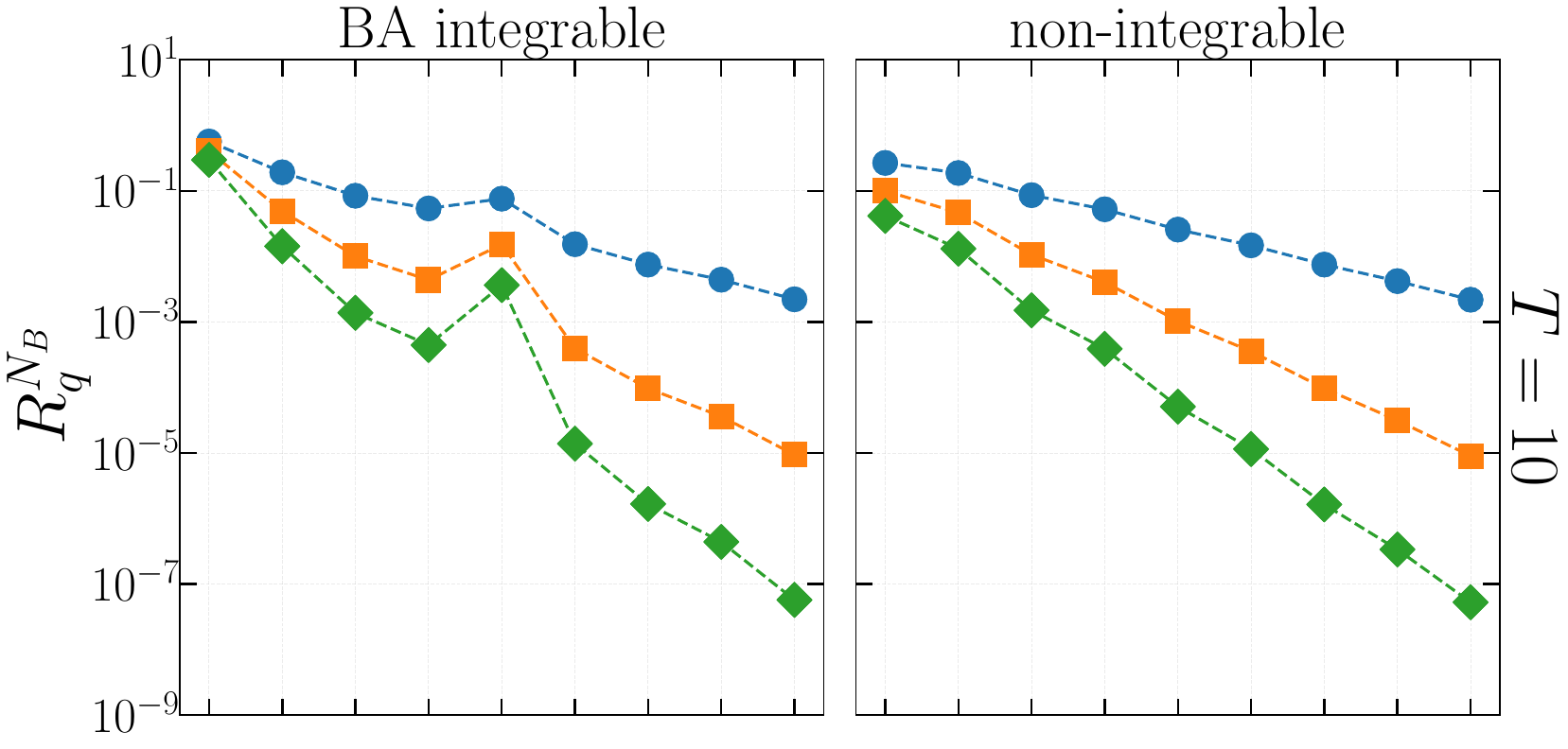}
\par\end{centering}
\begin{centering}
\includegraphics[width=8cm]{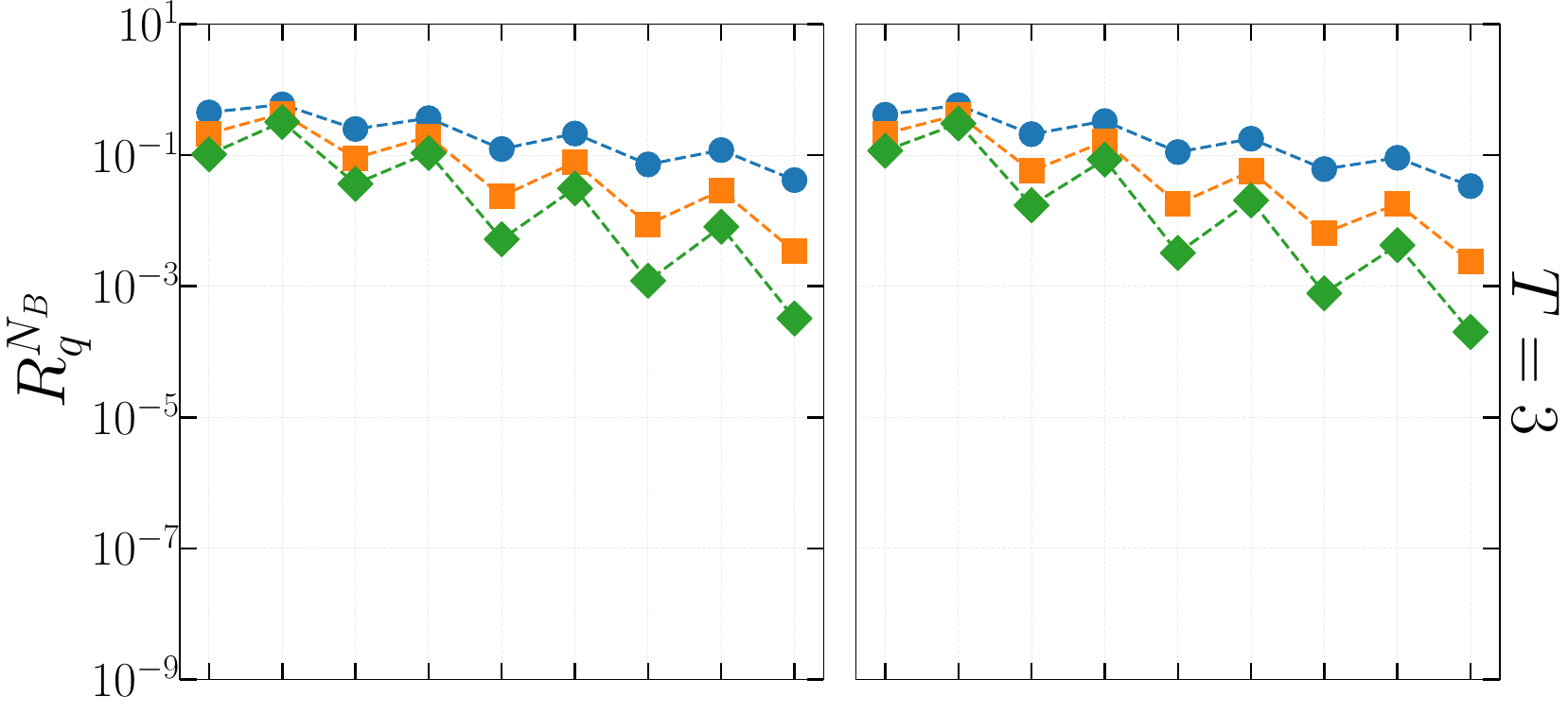}
\par\end{centering}
\begin{centering}
\includegraphics[width=8cm]{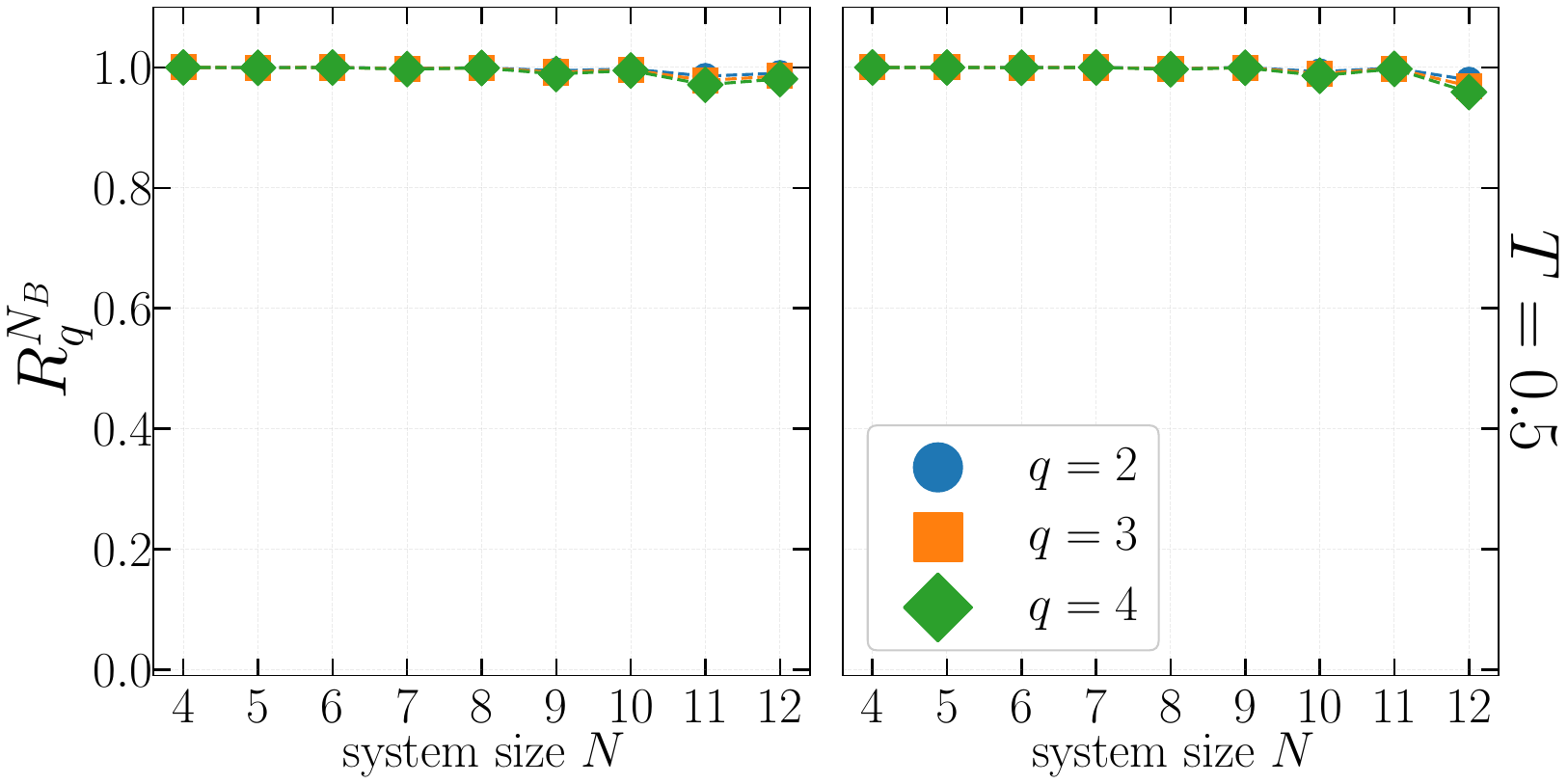}
\par\end{centering}
\caption{ Numerical check of~\eqref{eq:generic_weights} for the paradigmatic example of a physical (local) Hamiltonian given by XXZ+NNN chain in~\eqref{XXZNNNHamiltonian} at low and different temperatures $T$ and breakdown of the Lyapunov condition. Decay of the ratios $R_q^{N_B}$ for various system sizes $L$, where $N_B$ denotes the total number of blocks in the Hamiltonians (corresponding to the number of $d_j$'s). Dashed lines are included as guides for the eye. Significant deviation from an exponential decay of the ratios signals the breakdown of the Lyapunov condition Eq.~\eqref{eq:generic_weights}. Decrease of the temperature  from the top to the bottom panel shows how exactly the condition is not longer satisfied in the low temperature limit. }\label{fig:finitetemp}
\end{figure}

\begin{figure}[ht!]
\begin{centering}
\includegraphics[width=8.5cm]{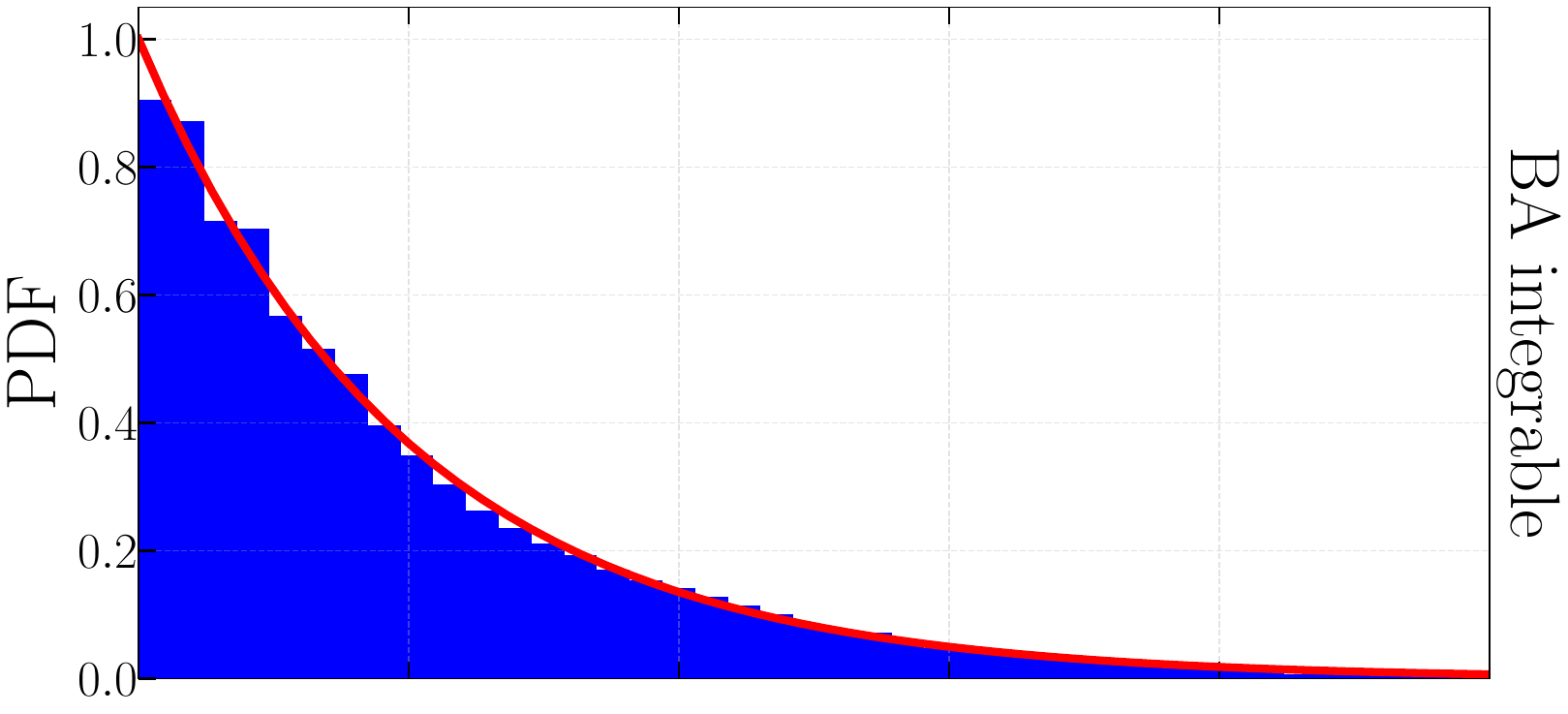}
\par\end{centering}
\begin{centering}
\includegraphics[width=8.5cm]{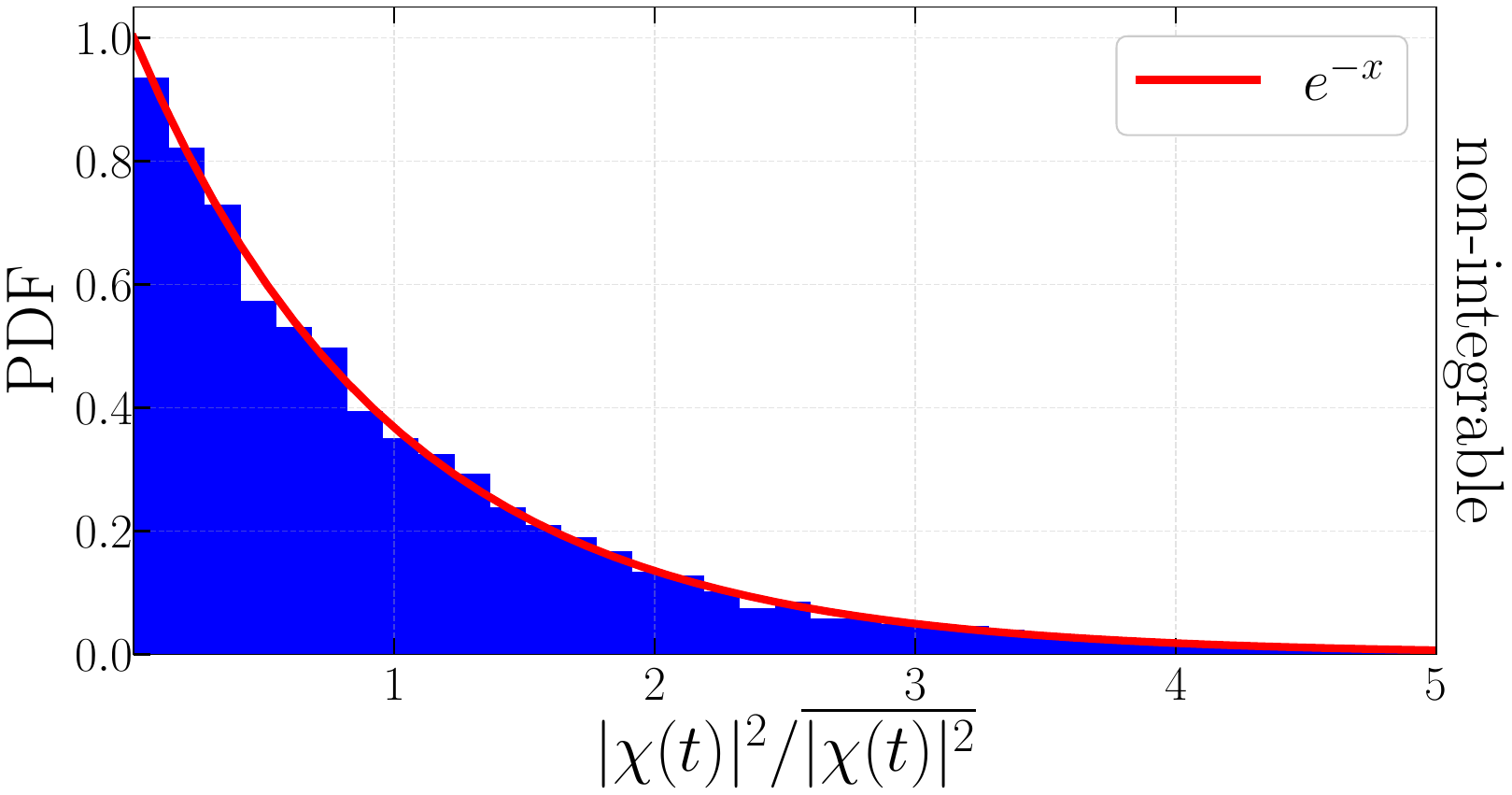}
\par\end{centering}
\caption{Probability distribution of the normalized Spectral Form Factor (SFF) for the XXZ + NNN chain spectrum at infinite temperature ($\rho=\1$) for system size $L = 8$. The distribution is obtained sampling in the time domain $[10^5,2 \cdot 10^5]$.  The prediction of Theorem~\ref{thm:CLT_lyapunovplus}, $e^{-x}$, is shown in red. \textit{Upper panel}: non-integrable spectrum with $\Delta = 0.1$ and $  \alpha = 0.1$. \textit{Lower panel}: Bethe Ansatz integrable spectrum with $\Delta = 0.1$ and $  \alpha = 0.0$.  
Notice the small system sizes used to perform the checks, and how well the exponential function captures the behavior.  
} \label{fig:firstmomentsXXZ}
\end{figure}

\subsection{Probability distribution of the spectral form factor $\vert \chi (t) \vert^2$} \label{appedix:numericalcheck}
\label{sub:pdf_SFF}
 In Fig.~\ref{fig:firstmomentsXXZ}, we show the full probability distribution of the normalized SFF, $\vert \chi (t) \vert^2/ \overline{\vert \chi (t) \vert^2}$ both in the non-integrable and Bethe-Ansatz integrable phases of the XXZ+NNN chain (right panels). It is apparent that in both the Bethe-Ansatz and in the non-inegrable phases, the behavior of the SFF is the one predicted by Theorem 2: $\vert \chi (t) \vert^2/ \overline{\vert \chi (t) \vert^2} \to \mathsf{Exp}(1)$ for increasing system size. 
 It is interesting to notice, as shown in Table~\ref{tab:Expectation-of-different}, that the BA integrable regime cannot be differentiated from fully non-integrable theories by the long-time behavior of the SFF.

In Fig.~\ref{fig:free-SFF} we  demonstrate that quasi-free theories instead,  exhibit a  behavior of the SFF predicted by Theorem 5, i.e.,~the SFF is lognormal and its properly normalized logarithm becomes Gaussian in the thermodynamic limit. More precisely, using the notation of Theorem 5, 
\[
\frac{\log\left(\left|\chi(t)\right|^{2}\right)}{\sqrt{\sum_{j=1}^{L_{B}}g_{j}^{2}}}\to\mathsf{N}\left(0,\frac{\pi^{2}}{3}\right). \label{eq:log-gaussian}
\]

To check Eq.~(\ref{eq:log-gaussian}) we perform numerics on the XY model given by the following Hamiltonian
\begin{equation}
    H_{XY}=-\sum_{i=1}^{L}\left[\left(\frac{1+\gamma}{2}\right)\sigma_{i}^{x}\sigma_{i+1}^{x}+\left(\frac{1-\gamma}{2}\right)\sigma_{i}^{y}\sigma_{i+1}^{y}+h\sigma_{i}^{z}\right]. \label{eq:XY_model}
\end{equation}
We use periodic boundary conditions on the fermions  (see e.g.~\cite{campos_venuti_equivalence_2010} for a discussion). Standard diagonalization brings the model to 
\[
H_{XY}=\sum_{k\in BZ}\Lambda_{k}\left(\eta_{k}^{\dagger}\eta_{k}-\frac{1}{2}\right),
\]
where the Brillouin zone is $BZ=\{k| k=2\pi n/L, \,n=0,1,\ldots,L-1 \} $ and the single particle dispersion reads 
\[
\Lambda_k = 2 \sqrt{\gamma^{2}+h^{2}+\left(1-\gamma^{2}\right)\cos\left(k\right)^{2}+2 h\cos\left(k\right)}, 
\]
so that the SFF reads (see the proof of Theorem 5)
\begin{align}
|\chi(t)|^2  & =\prod_{k=1}^{L}\left|1+e^{-it\Lambda_{k}}\right|^{2} \nonumber \\
 & =\prod_{j=1}^{L_{B}}\left(2\cos\left(t\epsilon_{j}/2\right)\right)^{2g_{j}}.
\end{align}

For general values of $h,\gamma$, the one particle spectrum $\Lambda_k$ is doubly degenerate so that $g_j=2, \, \forall j$, and $L_B=L/2$. At the point $h=\gamma=0$ the spectrum becomes $\Lambda_k = 2 \vert \cos(k) \vert$. In this case, one can prove that $\epsilon_k$ is independent over the rationals for $L$ a prime number, while it is expected not to be independent when $L$ is not prime \cite{campos_venuti_exact_2011}.  
For other values of  $h,\gamma$ we observe the behavior expected from Theorem 5, irrespective if $L$ is prime or not, indicating that the one particle spectrum is likely independent, see Fig.~\ref{fig:free-SFF} upper panels. For $h=\gamma=0$ the largest discrepancy from Eq.~(\ref{eq:log-gaussian}) is expected when $L$ is highly composite. So in the lower panels of Fig.~\ref{fig:free-SFF}  we show results for $L=128=2^7$ which shows a large discrepancy (left) with those of $L=127$ which is prime (right). See also \cite{Riddell_Bertini_2024} for analogous discussions on free Fermionic spectra. 
Note that, for free theories the large parameter in the CLT is $L=\log_2 (D)$, and the errors from applying the CLT at finite size are of the order of $O(L^{-1/2})$ as opposed to $O(D^{-1/2})$ for the non-integrable case. As a consequence, to check Theorem \ref{thm:CLT_lyapunovplus} for free models, we need to go to considerably larger sizes, while to check Theorem \ref{thm:free_fermion} in non-integrable models sizes of order of $L=8$ suffice.

\begin{figure}[t!]
\begin{centering}
\includegraphics[width=4cm]{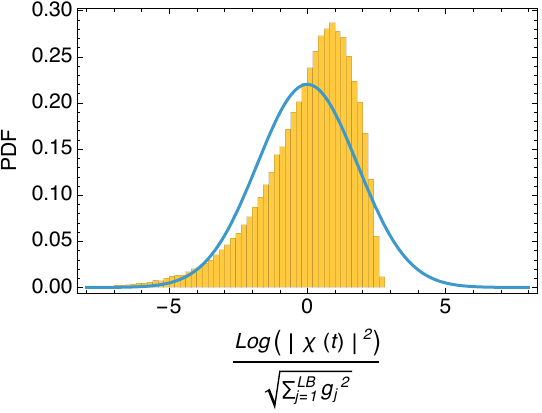}\includegraphics[width=4cm]{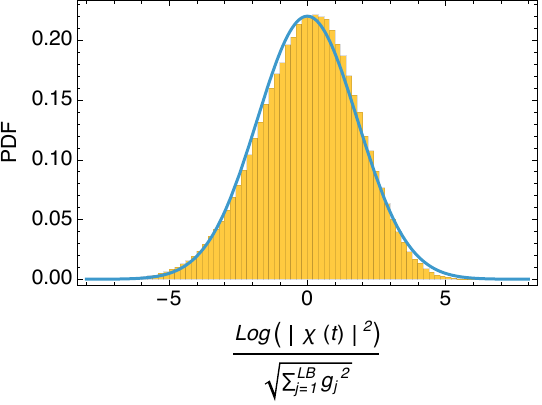}
\par\end{centering}
\begin{centering}
\includegraphics[width=4cm]{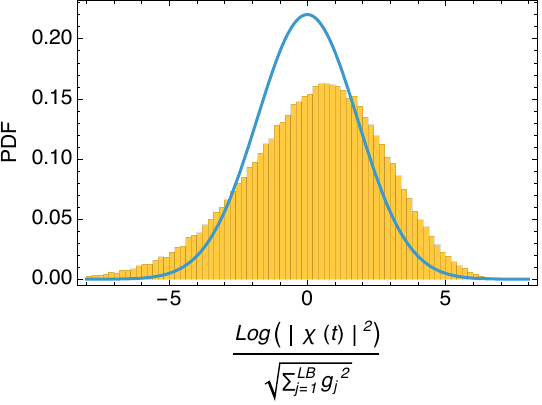}\includegraphics[width=4cm]{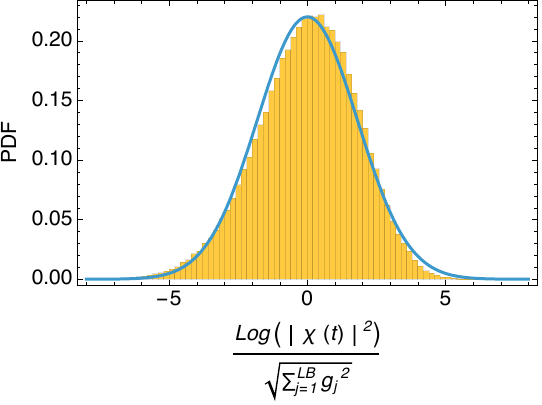}
\par\end{centering}
\caption{Check of Eq.~(\ref{eq:log-gaussian}) for the free XY model. Upper panels: $\left(h,\gamma\right)=\left(0.2,0.3\right)$
$L=8$ (left panel) and $L=170$ (right panel). Lower panels $\left(h,\gamma\right)=\left(0,0\right)$
$L=128$ (left panel) and $L=127$, prime (right panel). The histograms are obtained sampling $2\times 10^5$ time points uniformly in $[10^3,2\times 10^5]$. The continuous curve is the Gaussian $\mathsf{N}(0,\pi^2/3)$ Eq.~\ref{eq:log-gaussian}.
\label{fig:free-SFF}}
\end{figure}

\subsection{SFF distribution and moments in the SYK model}
\label{sub:SFF_SYK}
We now turn to a random (disordered) model and, in particular, we consider the generalized Sachdev-Ye-Kitaev (SYK-${k}$) model~\cite{sachdev1993gapless, KITP2015,Jasser_Odavic_Hamma_2025}.  The SYK-${k}$ model describes a ${k}$-body all-to-all interaction between $N_{\rm majo}$ Majorana Fermionic modes. The Hamiltonian reads
\begin{equation}
H^{{\rm SYK-}{k}} = (i)^{{k}/2} \sum\limits_{1 \le i_{1} < \hdots <  i_{{k}} \le N_{\rm majo}} J_{i_1 i_2 \hdots i_{{k}}} \chi_{i_1} \chi_{i_2} \hdots \chi_{i_{k}} \label{SYK}
\end{equation}
with ${k}$ being an even and positive integer number,  $\chi_i$  Majorana operators, where $N_{\rm majo} = 2 N$, with $N$ the number of qubit (spin) degrees of freedom hosting the Majorana modes. The couplings $J_{i_1, i_2, \dots, i_{k}}$ are identical, independent distributed $\rm (i.i.d.)$ Gaussian variables with mean and variance given by
\begin{equation}
\mathsf{E} [ J_{i_1, i_2, \dots, i_{k}} ]  = 0 \; , \; \; \; \mathsf{E} [J_{i_1, i_2, \dots, i_{k}}^2 ] = \frac{({k}-1)! J}{(N_{\rm majo})^{{k}-1}} \;. 
\label{variance}
\end{equation}
To perform exact diagonalization of the model and obtain its spectrum, we use the Jordan-Wigner transformation and map the Majorana operators into strings of Pauli operators~\cite{bravyi2002fermionic,sarosi2017ads}. For ${k} = 2$, the model constitutes a random quadratic model that is not chaotic~\cite{Liu_Chen_Balents_2018}, while for ${k} \ge 4$, the model is considered strongly chaotic. The spectrum of SYK-4 is non-degenerate when $ N_{\rm majo}\! \mod 8 = 0$ so that $ d_j = 1 \, \, \forall j=1,\ldots,D$, while it is entirely doubly degenerate, i.e.,~$d_j = 2, \, \, \forall j=1,\ldots,N_B$ when $N_{\rm majo}\! \mod 8 \neq 0$ due to a particle-hole symmetry~\cite{cotler_black_2017,CotlerErratum2018}. 
This implies that both Eqs.~(\ref{eq:generic_weights}) and (\ref{eq:Lyapunov_wiener}) are satisfied. The probability distribution of the eigenvalues is not known, but our numerical calculations and the computations in \cite{garcia-garcia_exact_2018} suggest that it is absolutely continuous with respect to the Lebesgue measure. This would imply that, besides the known degeneracies for $N_{\rm majo} \mod 8 \neq 0$ that we mentioned, with probability one, the spectrum is independent. This implies that, at infinite temperature $\rho=\1$, all the Theorems \ref{thm:CLT_lyapunovplus},   \ref{thm:Wiener}, and \ref{thm:moments}, apply to the SYK-4 model. At low enough temperature, the Lyapunov condition Eq.~(\ref{eq:generic_weights}) is not satisfied. According to Theorem~\ref{thm:CLT_lyapunovplus} the distribution of the random variable $Y_n$ is no longer Gaussian in the thermodynamic limit, and, correspondingly, the distribution of $|\chi(t)|^2/\overline{|\chi(t)|^2}$ is no longer exponential.

\begin{figure}
\begin{centering}
\includegraphics[width=8cm]{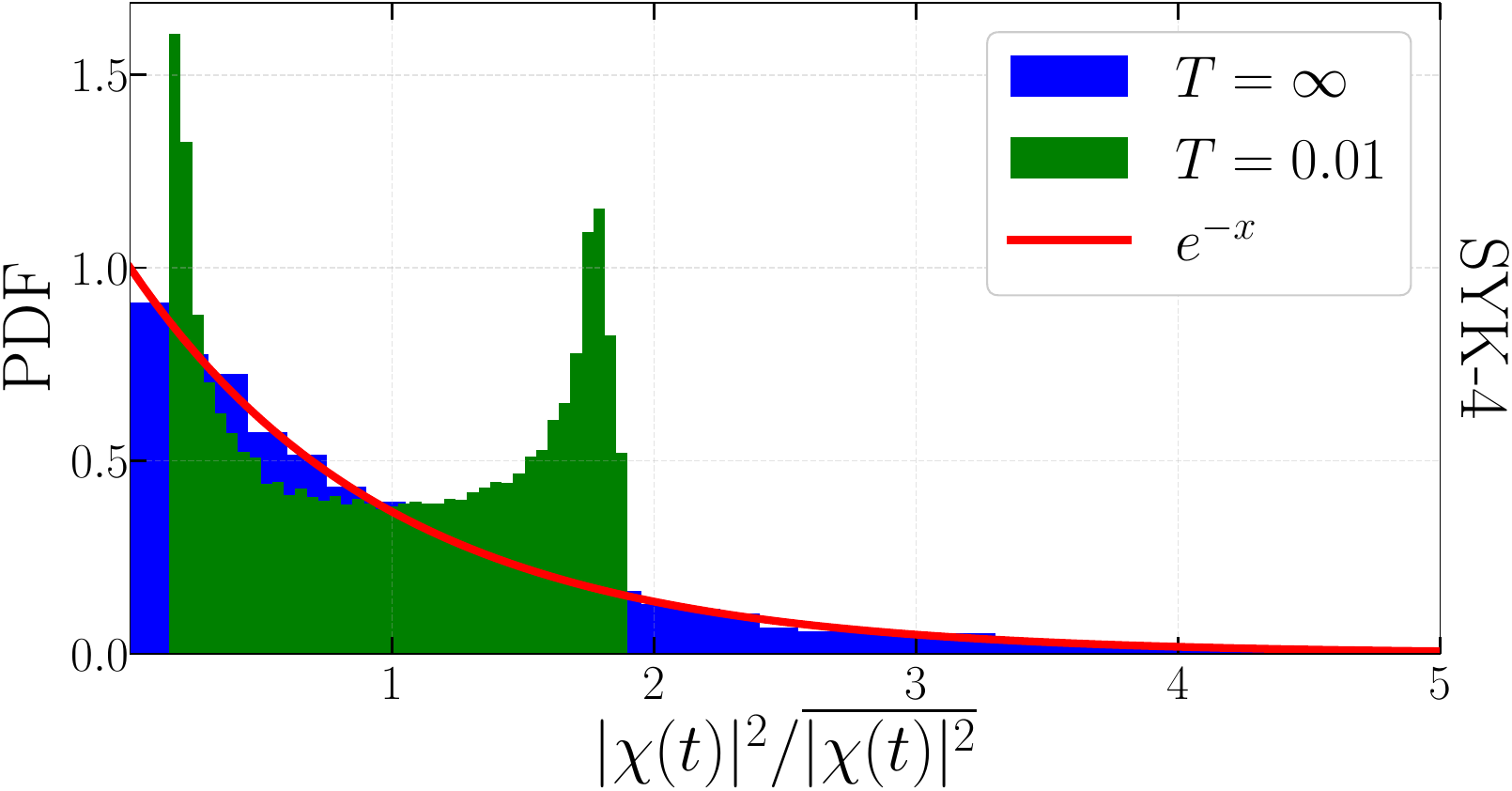}
\par\end{centering}
\caption{Probability density function (PDF) of the SFF at different temperatures for the SYK-4 model spectrum with $J = 1$ and $N_{\rm majo}=18$ Majorana modes.  Histograms of the SFF are obtained sampling $10^{5}$ time points uniformly distributed in  $[10^5, 2 \times 10^5]$ and  and for a single realization of the disorder couplings.}\label{fig:SYK4-1}
\end{figure}

In Fig.~\ref{fig:SYK4-1}, we show the full probability distribution of the normalized SFF both at infinite temperature and at very low temperature. It is apparent that in the $T=\infty$ case the distribution is $\mathsf{Exp}(1)$ which follows from Theorem~\ref{thm:CLT_lyapunovplus} while at $T=0.01$ the distribution is double-peaked signaling a breakdown of the CLT. 

In Table~\ref{tab:comparison} we show analogous results for the normalized moments of the SFF, $K_m$, defined as
\begin{equation}
K_{m}=\frac{\mathsf{E}_{x}\left[\overline{\left|\chi(t)\right|^{2m}}\right]}{\left(\mathsf{E}_{x}\left[\overline{\left|\chi(t)\right|^{2}}\right]\right)^{m}}=\frac{\mathsf{E}_{x}\left[I_{m}\right]}{\left(\mathsf{E}_{x}[I_{1}]\right)^{m}} \, ,
\end{equation}
where, to obtain better figures we also averaged over the random variables with $\mathsf{E}_{x}$. The ensemble average of formulae Eqs.~(\ref{eq:moments1}) and (\ref{eq:moments2}) give the correct moments while the Gaussian approximation $K_m \simeq m! $ fails at low temperature.

\begin{table}[t!]
\begin{tabular}{|c|c|c|c|c|}
\hline
\multirow{2}{*}{} & \multicolumn{2}{c|}{High Temperature $T = \infty$} & \multicolumn{2}{c|}{Low Temperature $T = 0.01$} \\
\cline{2-5}
                     & Gaussian ($\%$) & Exact ($\%$) & Gaussian ($\%$) & Exact ($\%$) \\
\hline
$K_2$ & $0.229 $ & $0.033 $ & $46.452$ & $0.011$  \\
\hline
$K_3$ & $1.018 $ & $0.434 $  & $79.754$& $0.007$ \\
\hline
$K_4$ & $1.484 $ & $0.316 $  & $93.936$& $0.025$\\
\hline
\end{tabular}\caption{Relative errors (in percentages) of the moments of the SFF of the SYK-4 model between the exact formulae (average of) Eqs.~(\ref{eq:recursion}) and the Gaussian approximation $K_m \simeq m!$. The numerical data are obtained by sampling $t$ uniformly in  $[10^5,2 \times 10^5]$, $10^{4}$ times and performing an additional ensemble average over 100 realizations for the SYK-4 model with $N_{\rm majo} = 18$ Majorana fermions. Increasing the number of realizations and time domain sampling leads to better agreement. The superior agreement of the exact expression is most evident for low temperatures.  }\label{tab:comparison}
\end{table}

\subsection{Evaluating the fractal dimension of the Brownian walker (or SFF) frontier}\label{sec:appfractral}

Intuitively, the fractal (or Hausdorff) dimension of a subset of the plane measures the roughness of the set. For example, for a smooth curve, the fractal dimension is $1$ while it is $2$ for a plane-filling curve \cite{falconer_geometry_1985}. 
One widely used approach for estimating the fractal dimension is the \textit{box-counting method} \cite{falconer_geometry_1985, Sokolovic_Mali_Odavic_Radosevic_Medvedeva_Botha_Shukrinov_Tekic_2017}. This technique involves overlaying a grid of square boxes of side length $\varepsilon$ over the domain containing the fractal and counting the number $N(\varepsilon)$ of boxes that intersect the fractal set. As $\varepsilon \to 0$, the fractal dimension $d_F$ is estimated by analyzing the scaling behavior of $N(\varepsilon)$, typically through a log-log regression
\[
d_F = \lim_{\varepsilon \to 0} \frac{\log N(\varepsilon)}{\log (1/\varepsilon)}.
\]
In practical implementations, this limit is approximated by evaluating $N(\varepsilon)$ at a range of decreasing box sizes and estimating the slope of the linear fit in a $\log$-$\log$ plot. 

We verified that our implementation accurately reproduces the exact known Hausdorff dimension for several fractals  with  dimensions in the range $0 \!< \! d_F \! < \! 2$. In general, there is only a definite window of scales over which a clear power-law appears. At finer scales the data have not sufficient resolution whereas at larger scales the power-law behavior did not set in yet. 
In our case, we found that such optimal window is sample dependent, making the estimation of the fractal dimension of the frontier more complex. 
To ensure statistical significance of our findings we preform the measurement of fractal dimension of the frontier on a ensemble of random walks (up to 20 different trajectories) for each of the considered regimes. Each member of the ensemble is generated by the same, exact, deterministic spectrum (in case of the disorder-free XXZ+NNN chain) but a different randomized value for the time parameter $t$ sampled uniformly at random in $t \in [1,2\times 10^5]$. 

Additionally, we note that isolating the fractal frontier in practice requires rasterizing numerical data into image format, a process that inherently reduces the precision of subsequent fractal dimension estimations. Extracting the frontier itself poses a significant technical challenge, as the random walker spans the continuous $\mathbb{R}^2$ plane. Standard convex hull algorithms, such as the gift wrapping algorithm~\cite{Leiserson_Rivest_Stein_2001}, are not well-suited for capturing rugged, scale-invariant structures. 
In this work we employed a combination of approaches using open-source image editing tools such as GIMP~\cite{GIMP}, along with image processing functionalities of the routine \textsf{ImageMeasurements} provided by the  \copyright Mathematica software. 

 In Fig.~\ref{fig:fractaldimension}, we illustrate the box-counting method for the  representative  pair of random-walks in Fig.~\ref{fig:mainplot} : one from the non-integrable (chaotic) case, and the other from the integrable XY chain with parameters $(h,\gamma) = (0.2, 0.3)$. These examples serve to visually demonstrate the qualitative differences between the two regimes. To ensure the reliability of our findings, the main text reports results averaged over multiple random-walks.

\begin{figure}[ht!]
\begin{centering}
\includegraphics[width=8cm]{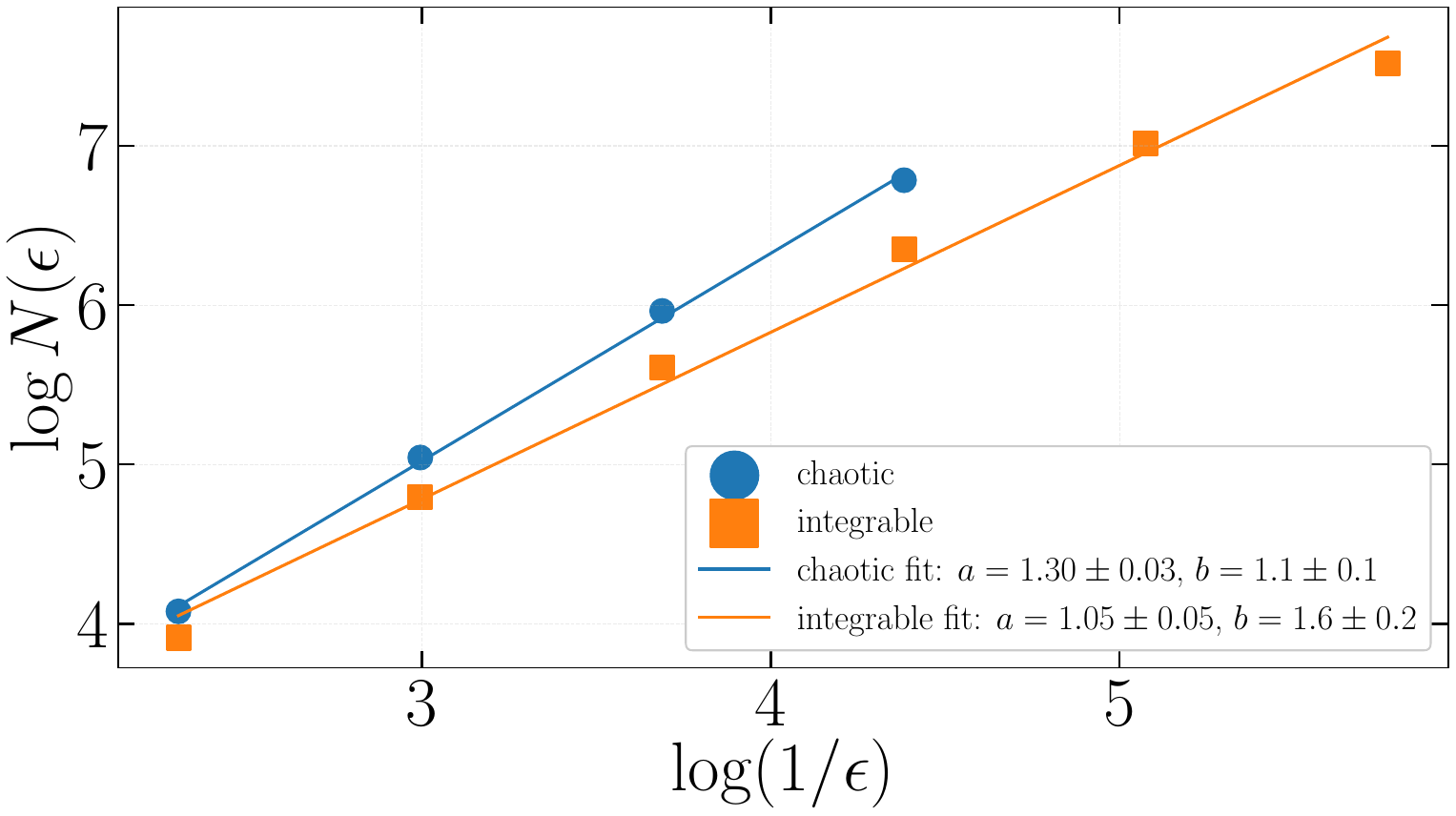}
\par\end{centering}
\caption{Box counting method and measurement of the fractal dimension for  representative fractals. Straight lines are fits to the linear function $f (x) = ax + b$. The data reflects the frontier fractal dimension of the data presented in Fig.~\ref{fig:mainplot}.  We consider $N = 14$ and the following parameters: integrable XY model spectrum with the Hamiltonian in Eq.~\eqref{eq:XY_model} with $(h,\gamma) = (0.2,0.3)$, chaotic spectrum for the Hamiltonian in Eq.~\eqref{XXZNNNHamiltonian} with $(\Delta, \alpha) = (0.5,0.4)$. The $\pm$ indicate the bounds related to the fitting errors. }\label{fig:fractaldimension}
\end{figure}

\clearpage

\bibliography{SFF_v5_refs}

@article{campos_venuti_unitary_2010,
	title = {Unitary equilibrations: {Probability} distribution of the {Loschmidt} echo},
	volume = {81},
	shorttitle = {Unitary equilibrations},
	url = {http://link.aps.org/doi/10.1103/PhysRevA.81.022113},
	doi = {10.1103/PhysRevA.81.022113},
	abstract = {Closed quantum systems evolve unitarily and therefore cannot converge in a strong sense to an equilibrium state starting out from a generic pure state. Nevertheless for large system size one observes temporal typicality. Namely, for the overwhelming majority of the time instants, the statistics of observables is practically indistinguishable from an effective equilibrium one. In this paper we consider the Loschmidt echo (LE) to study this sort of unitary equilibration after a quench. We draw several conclusions on general grounds and on the basis of an exactly solvable example of a quasifree system. In particular we focus on the whole probability distribution of observing a given value of the LE after waiting a long time. Depending on the interplay between the initial state and the quench Hamiltonian, we find different regimes reflecting different equilibration dynamics. When the perturbation is small and the system is away from criticality the probability distribution is Gaussian. However close to criticality the distribution function approaches a double-peaked, universal form.},
	number = {2},
	urldate = {2011-11-24},
	journal = {Phys. Rev. A},
	author = {Campos Venuti, Lorenzo and Zanardi, Paolo},
	month = feb,
	year = {2010},
	pages = {022113},
	file = {APS Snapshot:/Users/lcv/Zotero/storage/I6EHNKVT/e022113.html:text/html;Campos Venuti and Zanardi - 2010 - Unitary equilibrations Probability distribution o.pdf:/Users/lcv/Zotero/storage/TSH5DNXA/Campos Venuti and Zanardi - 2010 - Unitary equilibrations Probability distribution o.pdf:application/pdf},
}

@article{deutsch_quantum_1991,
	title = {Quantum statistical mechanics in a closed system},
	volume = {43},
	url = {http://link.aps.org/doi/10.1103/PhysRevA.43.2046},
	doi = {10.1103/PhysRevA.43.2046},
	abstract = {A closed quantum-mechanical system with a large number of degrees of freedom does not necessarily give time averages in agreement with the microcanonical distribution. For systems where the different degrees of freedom are uncoupled, situations are discussed that show a violation of the usual statistical-mechanical rules. By adding a finite but very small perturbation in the form of a random matrix, it is shown that the results of quantum statistical mechanics are recovered. Expectation values in energy eigenstates for this perturbed system are also discussed, and deviations from the microcanonical result are shown to become exponentially small in the number of degrees of freedom.},
	number = {4},
	urldate = {2012-11-26},
	journal = {Phys. Rev. A},
	author = {Deutsch, J. M.},
	month = feb,
	year = {1991},
	pages = {2046--2049},
	file = {APS Full Text PDF:/Users/lcv/Zotero/storage/4S8TQHFW/Deutsch - 1991 - Quantum statistical mechanics in a closed system.pdf:application/pdf},
}

@article{silva_statistics_2008,
	title = {Statistics of the {Work} {Done} on a {Quantum} {Critical} {System} by {Quenching} a {Control} {Parameter}},
	volume = {101},
	url = {http://link.aps.org/doi/10.1103/PhysRevLett.101.120603},
	doi = {10.1103/PhysRevLett.101.120603},
	abstract = {We study the statistics of the work done on a quantum critical system by quenching a control parameter in the Hamiltonian. We elucidate the relation between the probability distribution of the work and the Loschmidt echo, a quantity emerging usually in the context of dephasing. Using this connection we characterize the statistics of the work done on a quantum Ising chain by quenching locally or globally the transverse field. We show that for local quenches starting at criticality the probability distribution of the work displays an interesting edge singularity.},
	number = {12},
	urldate = {2012-11-26},
	journal = {Phys. Rev. Lett.},
	author = {Silva, Alessandro},
	month = sep,
	year = {2008},
	pages = {120603},
	file = {APS Full Text PDF:/Users/lcv/Zotero/storage/B5FED7C4/Silva - 2008 - Statistics of the Work Done on a Quantum Critical .pdf:application/pdf},
}

@book{Haake_2001, title={Quantum Signatures of Chaos}, ISBN={978-3-540-67723-9}, abstractNote={This textbook provides an excellent introduction to a new and rapidly developing field of research. The topics treated include a detailed exploration of the quantum aspects of nonlinear dynamics, quantum criteria to distinguish regular and irregular motion, antiunitary symmetries (generalized time reversal) and a thorough account of the quantum mechanics of dissipative systems. Each chapter is accompanied by a selection of problems which will help the student to test and deepen his/her understanding and to acquire an active command of the methods.The second edition is significantly expanded. Of the considerable theoretical progress lately achieved, the book focusses on the deeper statistical exploitation of level dynamics, improved control of semiclassical periodic-orbit expansions, and superanalytic techniques for dealing with various types of random matrices.}, publisher={Springer Science and Business Media}, author={Haake, Fritz}, year={2001}}

@article{campos_venuti_theory_2015,
	title = {Theory of temporal fluctuations in isolated quantum systems},
	volume = {29},
	issn = {0217-9792},
	url = {http://www.worldscientific.com/doi/abs/10.1142/S021797921530008X},
	doi = {10.1142/S021797921530008X},
	abstract = {When an isolated quantum system is driven out of equilibrium, expectation values of general observables start oscillating in time. This paper reviews the general theory of such temporal fluctuations. We first survey some results on the strength of such temporal fluctuations. For example temporal fluctuations are exponentially small in the system's volume for generic systems whereas they fall-off algebraically in integrable systems. We then concentrate on the so-called quench scenario where the system is driven out-of-equilibrium under the application of a sudden perturbation. For sufficiently small perturbations, temporal fluctuations of physical observables can be characterized in full generality and can be used as an effective tool to probe quantum criticality of the underlying model. In the off-critical region the distribution becomes Gaussian. Close to criticality the distribution becomes a universal function uniquely characterized by a single critical exponent, that we compute explicitly. This contrasts standard equilibrium quantum fluctuations for which the critical distribution depends on a numerable set of critical coefficients and is known only for limited examples. The possibility of using temporal fluctuations to determine pseudo-critical boundaries in optical lattice experiments is further reviewed.},
	number = {14},
	urldate = {2015-05-25},
	journal = {Int. J. Mod. Phys. B},
	author = {Campos Venuti, Lorenzo and Zanardi, Paolo},
	month = may,
	year = {2015},
	pages = {1530008},
	file = {Full Text PDF:/Users/lcv/Zotero/storage/6KP54UFQ/Venuti and Zanardi - 2015 - Theory of temporal fluctuations in isolated quantu.pdf:application/pdf;Snapshot:/Users/lcv/Zotero/storage/K4TVT98X/S021797921530008X.html:text/html},
}

@incollection{campos_venuti_theory_2015-1,
	series = {50th {Karpacz} {Winter} {School} of {Theoretical} {Physics}, {Karpacz}, {Poland}, 2 {\textendash} 9 {March} 2014},
	title = {Theory of {Temporal} {Fluctuations} in {Isolated} {Quantum} {Systems}},
	booktitle = {Quantum {Criticality} in {Condensed} {Matter}},
	publisher = {World Scientific Publishing Company},
	author = {Campos Venuti, Lorenzo},
	year = {2015},
}

@article{campos_venuti_equilibration_2013,
	title = {Equilibration times in clean and noisy systems},
	volume = {87},
	url = {http://link.aps.org/doi/10.1103/PhysRevA.87.032108},
	doi = {10.1103/PhysRevA.87.032108},
	abstract = {We study the equilibration dynamics of closed finite quantum systems and address the question of the time needed for the system to equilibrate. In particular, we focus on the scaling of the equilibration time Teq with the system size L. For clean systems, we give general arguments predicting Teq=O(L0) for clustering initial states, while for small quenches around a critical point we find Teq=O(L$\zeta$) where $\zeta$ is the dynamical critical exponent. We then analyze noisy systems where exponentially large time scales are known to exist. Specifically, we consider the tight-binding model with diagonal impurities and give numerical evidence that in this case Teq\~{}BeCL$\psi$ where B,C,$\psi$ are observable-dependent constants. Finally, we consider another noisy system whose evolution dynamics is randomly sampled from a circular unitary ensemble. Here, we are able to prove analytically that Teq=O(1), thus showing that noise alone is not sufficient for slow equilibration dynamics.},
	number = {3},
	urldate = {2014-10-25},
	journal = {Phys. Rev. A},
	author = {Campos Venuti, Lorenzo and Yeshwanth, Sunil and Haas, Stephan},
	month = mar,
	year = {2013},
	pages = {032108},
	file = {APS Snapshot:/Users/lcv/Zotero/storage/IGJ3SQ4M/PhysRevA.87.html:text/html;Full Text PDF:/Users/lcv/Zotero/storage/N4B247JC/Campos Venuti et al. - 2013 - Equilibration times in clean and noisy systems.pdf:application/pdf},
}

@article{schotte_tomonagas_1969,
	title = {Tomonaga's {Model} and the {Threshold} {Singularity} of {X}-{Ray} {Spectra} of {Metals}},
	volume = {182},
	url = {http://link.aps.org/doi/10.1103/PhysRev.182.479},
	doi = {10.1103/PhysRev.182.479},
	abstract = {Singularities near the threshold of the soft x-ray spectra of metals have been predicted by Mahan and have recently been calculated by Nozi{\`e}res et al. using the model of a localized core hole. We show that the singular behavior can be understood in terms of density waves of the conduction electrons which are excited when, in the absorption process, the core hole is created, providing an attractive potential for the conduction electrons. For the description of the conduction electrons in terms of density waves, Tomonaga's model is adopted.},
	number = {2},
	urldate = {2015-03-31},
	journal = {Phys. Rev.},
	author = {Schotte, K. D. and Schotte, U.},
	month = jun,
	year = {1969},
	pages = {479--482},
	file = {APS Snapshot:/Users/lcv/Zotero/storage/86W3NFIJ/PhysRev.182.html:text/html},
}

@article{campos_venuti_universal_2014,
	title = {Universal time fluctuations in near-critical out-of-equilibrium quantum dynamics},
	volume = {89},
	url = {http://link.aps.org/doi/10.1103/PhysRevE.89.022101},
	doi = {10.1103/PhysRevE.89.022101},
	abstract = {Out-of-equilibrium quantum systems display complex temporal patterns. Such time fluctuations are generically exponentially small in the system volume and therefore can be safely ignored in most of the cases. However, if one consider small quench experiments, time fluctuations can be greatly enhanced. We show that time fluctuations may become stronger than other forms of equilibrium quantum fluctuations if the quench is performed close to a critical point. For sufficiently relevant operators the full distribution function of dynamically evolving observable expectation values becomes a universal function uniquely characterized by the critical exponents and the boundary conditions. At regular points of the phase diagram and for nonsufficiently relevant operators the distribution becomes Gaussian. Our predictions are confirmed by an explicit calculation on the quantum Ising model.},
	number = {2},
	urldate = {2015-04-08},
	journal = {Phys. Rev. E},
	author = {Campos Venuti, Lorenzo and Zanardi, Paolo},
	month = feb,
	year = {2014},
	pages = {022101},
	file = {APS Snapshot:/Users/lcv/Zotero/storage/6IJGQ3TP/PhysRevE.89.html:text/html;Full Text PDF:/Users/lcv/Zotero/storage/DATCMH5U/Campos Venuti and Zanardi - 2014 - Universal time fluctuations in near-critical out-o.pdf:application/pdf},
}

@book{arnold_mathematical_1989,
	address = {New York},
	edition = {2nd edition},
	title = {Mathematical {Methods} of {Classical} {Mechanics}},
	isbn = {978-0-387-96890-2},
	publisher = {Springer},
	author = {Arnold, V. I.},
	year = {1989},
}

@article{maldacena_bound_2016,
	title = {A bound on chaos},
	volume = {2016},
	issn = {1029-8479},
	url = {https://link.springer.com/article/10.1007/JHEP08(2016)106},
	doi = {10.1007/JHEP08(2016)106},
	abstract = {We conjecture a sharp bound on the rate of growth of chaos in thermal quantum systems with a large number of degrees of freedom. Chaos can be diagnosed using an out-of-time-order correlation function closely related to the commutator of operators separated in time. We conjecture that the influence of chaos on this correlator can develop no faster than exponentially, with Lyapunov exponent $\lambda$ L <= 2$\pi$k B T/?. We give a precise mathematical argument, based on plausible physical assumptions, establishing this conjecture.},
	language = {},
	number = {8},
	urldate = {2018-06-12},
	journal = {J. High Energ. Phys.},
	author = {Maldacena, Juan and Shenker, Stephen H. and Stanford, Douglas},
	month = aug,
	year = {2016},
	pages = {106},
	file = {Full Text PDF:/Users/lcv/Zotero/storage/YW2YEGQB/Maldacena et al. - 2016 - A bound on chaos.pdf:application/pdf;Snapshot:/Users/lcv/Zotero/storage/P5FH88E8/JHEP08(2016)106.html:text/html},
}

@article{marklof_spectral_1998,
	title = {Spectral {Form} {Factors} of {Rectangle} {Billiards}},
	volume = {199},
	issn = {1432-0916},
	url = {https://doi.org/10.1007/s002200050498},
	doi = {10.1007/s002200050498},
	abstract = {The Berry{\textendash}Tabor conjecture asserts that local statistical measures of the eigenvalues $\lambda$jof a {\textquotedblleft}generic{\textquotedblright} integrable quantum system coincide with those of a Poisson process. We prove that, in the case of a rectangle billiard with random ratio of sides, the sum behaves for $\tau$ random and N large like a random walk in the complex plane with a non-Gaussian limit distribution. The expectation value of the distribution is zero; its variance, which is essentially the average pair correlation function, is one, in accordance with the Berry{\textendash}Tabor conjecture, but all higher moments (>= 4) diverge. The proof of the existence of the limit distribution uses the mixing property of a dynamical system defined on a product of hyperbolic surfaces. The Berry{\textendash}Tabor conjecture and the existence of the limit distribution for a fixed generic rectangle are related to an equidistribution conjecture for long horocycles on this product space.},
	number = {1},
	urldate = {2026-01-10},
	journal = {Comm Math Phys},
	author = {Marklof, Jens},
	month = dec,
	year = {1998},
	keywords = {Quantum System, Complex Plane, Correlation Function, Form Factor, Random Walk},
	pages = {169--202},
	file = {PDF:/Users/lcv/Zotero/storage/JRM2986M/Marklof - 1998 - Spectral Form Factors of Rectangle Billiards.pdf:application/pdf},
}

@article{aurich_temporal_1999,
	title = {Temporal quantum chaos},
	volume = {13},
	issn = {0217-9792},
	url = {https://www.worldscientific.com/doi/10.1142/S0217979299002459},
	doi = {10.1142/S0217979299002459},
	abstract = {We study the long-time behavior of bound quantum systems whose classical dynamics is chaotic and put forward two conjectures. Conjecture A states that the autocorrelation function C(t)={\textless}$\Psi$(0){\textbar}$\Psi$(t){\textgreater} of a delocalized initial state {\textbar}$\Psi$(0){\textgreater} shows characteristic fluctuations, which we identify with a universal signature of temporal quantum chaos. For example, for the (appropriately normalized) value distribution of S{\textasciitilde}{\textbar}C(t){\textbar} we predict the distribution P(S)=($\pi$/2)Se-$\pi$S2/4. Conjecture B gives the best possible upper bound for a generalized Weyl sum and is related to the extremely large recurrence times in temporal quantum chaos.  Numerical tests carried out for numerous chaotic systems confirm nicely the two conjectures and thus provide strong evidence for temporal quantum chaos.},
	number = {18},
	urldate = {2026-01-10},
	journal = {Int. J. Mod. Phys. B},
	publisher = {World Scientific Publishing Co.},
	author = {Aurich, R. and Steiner, F.},
	month = jul,
	year = {1999},
	pages = {2361--2369},
	file = {PDF:/Users/lcv/Zotero/storage/XF44L5P4/Aurich and Steiner - 1999 - Temporal quantum chaos.pdf:application/pdf},
}

@article{leboeuf_riemannium_2001,
	title = {The {Riemannium}},
	volume = {6},
	issn = {15603547, 14684845},
	url = {http://rcd.ics.org.ru/RD2001v006n02ABEH000170},
	doi = {10.1070/RD2001v006n02ABEH000170},
	number = {2},
	urldate = {2026-01-10},
	journal = {REG CHAOT DYN},
	author = {Leboeuf, P and Monastra, A G. and Bohigas, O},
	year = {2001},
	pages = {205},
	file = {PDF:/Users/lcv/Zotero/storage/MRWAKLFB/Leboeuf et al. - 2001 - The Riemannium.pdf:application/pdf},
}

@article{bohigas_characterization_1984,
	title = {Characterization of {Chaotic} {Quantum} {Spectra} and {Universality} of {Level} {Fluctuation} {Laws}},
	volume = {52},
	url = {https://link.aps.org/doi/10.1103/PhysRevLett.52.1},
	doi = {10.1103/PhysRevLett.52.1},
	abstract = {It is found that the level fluctuations of the quantum Sinai's billiard are consistent with the predictions of the Gaussian orthogonal ensemble of random matrices. This reinforces the belief that level fluctuation laws are universal.},
	number = {1},
	urldate = {2026-01-29},
	journal = {Phys. Rev. Lett.},
	publisher = {American Physical Society},
	author = {Bohigas, O. and Giannoni, M. J. and Schmit, C.},
	month = jan,
	year = {1984},
	pages = {1--4},
	file = {APS Snapshot:/Users/lcv/Zotero/storage/7FPBLWYE/PhysRevLett.52.html:text/html;Full Text PDF:/Users/lcv/Zotero/storage/DWMMXBVW/Bohigas et al. - 1984 - Characterization of Chaotic Quantum Spectra and Universality of Level Fluctuation Laws.pdf:application/pdf},
}

@article{berry_level_1977,
	title = {Level clustering in the regular spectrum},
	volume = {356},
	issn = {0080-4630},
	url = {https://doi.org/10.1098/rspa.1977.0140},
	doi = {10.1098/rspa.1977.0140},
	abstract = {In the regular spectrum of an f-dimensional system each energy level can be labelled with f quantum numbers originating in f constants of the classical motion. Levels with very different quantum numbers can have similar energies. We study the classical limit of the distribution P(S) of spacings between adjacent levels, using a scaling transformation to remove the irrelevant effects of the varying local mean level density. For generic regular systems P(S) = e-s , characteristic of a Poisson process with levels distributed at random. But for systems of harmonic oscillators, which possess the non-generic property that the {\textquoteleft}energy contours{\textquoteright} in action space are flat, P(S) does not exist if the oscillator frequencies are commensurable, and is peaked about a non-zero value of S if the frequencies are incommensurable, indicating some regularity in the level distribution; the precise form of P(S) depends on the arithmetic nature of the irrational frequency ratios. Numerical experiments on simple two-dimensional systems support these theoretical conclusions.},
	number = {1686},
	urldate = {2026-01-29},
	journal = {Proc. A},
	author = {Berry, Michael Victor and Tabor, M.},
	month = sep,
	year = {1977},
	pages = {375--394},
	file = {PDF:/Users/lcv/Zotero/storage/P7LEMF57/Berry and Tabor - 1977 - Level clustering in the regular spectrum.pdf:application/pdf;Snapshot:/Users/lcv/Zotero/storage/4WEZWTSL/rspa.1977.html:text/html},
}

@article{cotler_out--time-order_2018,
	title = {Out-of-time-order operators and the butterfly effect},
	volume = {396},
	issn = {0003-4916},
	url = {https://www.sciencedirect.com/science/article/pii/S000349161830191X},
	doi = {10.1016/j.aop.2018.07.020},
	abstract = {Out-of-time-order (OTO) operators have recently become popular diagnostics of quantum chaos in many-body systems. The usual way they are introduced is via a quantization of classical Lyapunov growth, which measures the divergence of classical trajectories in phase space due to the butterfly effect. However, it is not obvious how exactly they capture the sensitivity of a quantum system to its initial conditions beyond the classical limit. In this paper, we analyze sensitivity to initial conditions in the quantum regime by recasting OTO operators for many-body systems using various formulations of quantum mechanics. Notably, we utilize the Wigner phase space formulation to derive an ?-expansion of the OTO operator for spatial degrees of freedom, and a large spin 1/s-expansion for spin degrees of freedom. We find in each case that the leading term is the Lyapunov growth for the classical limit of the system and argue that quantum corrections become dominant at around the scrambling time, which is also when we expect the OTO operator to saturate. We also express the OTO operator in terms of propagators and see from a different point of view how it is a quantum generalization of the divergence of classical trajectories.},
	urldate = {2026-01-29},
	journal = {Annals of Physics},
	author = {Cotler, Jordan S. and Ding, Dawei and Penington, Geoffrey R.},
	month = sep,
	year = {2018},
	keywords = {Out-of-time-ordered correlators, Phase space, Quantum chaos, Scrambling, Semiclassical approximation},
	pages = {318--333},
	file = {ScienceDirect Snapshot:/Users/lcv/Zotero/storage/CMQYFXW8/S000349161830191X.html:text/html;Submitted Version:/Users/lcv/Zotero/storage/GAGDK3CJ/Cotler et al. - 2018 - Out-of-time-order operators and the butterfly effect.pdf:application/pdf},
}

@article{suntajs_spectral_2021,
	series = {Special issue on {Philip} {W}. {Anderson}},
	title = {Spectral properties of three-dimensional {Anderson} model},
	volume = {435},
	issn = {0003-4916},
	url = {https://www.sciencedirect.com/science/article/pii/S0003491621000750},
	doi = {10.1016/j.aop.2021.168469},
	abstract = {The three-dimensional Anderson model represents a paradigmatic model to understand the Anderson localization transition. In this work we first review some key results obtained for this model in the past 50 years, and then study its properties from the perspective of modern numerical approaches. Our main focus is on the quantitative comparison between the level sensitivity statistics and the level statistics. While the former studies the sensitivity of Hamiltonian eigenlevels upon inserting a magnetic flux, the latter studies the properties of unperturbed eigenlevels. We define two versions of dimensionless conductance, the first corresponding to the width of the level curvature distribution relative to the mean level spacing, and the second corresponding to the ratio of the Heisenberg time and the Thouless time obtained from the spectral form factor. We show that both conductances look remarkably similar around the localization transition, in particular, they predict a nearly identical critical point consistent with other well-established measures of the transition. We then study some further properties of those quantities: for level curvatures, we discuss particular similarities and differences between the width of the level curvature distribution and the characteristic energy studied by Edwards and Thouless (1972) in their pioneering work, in which the hopping at one lattice edge is changed from periodic to antiperiodic boundary conditions. In the context of the spectral form factor, we show that at the critical point it enters a broad time-independent regime, in which its value is consistent with the level compressibility obtained from the level variance. Finally, we test the scaling solution of the average level spacing ratio in the crossover regime using the cost function minimization approach introduced recently in {\v S}untajs et~al. (2020). The latter approach seeks for the optimal scaling solution in the vicinity of the crossing point, while at the same time it allows for the crossing point to drift due to finite-size corrections. We find that the extracted transition point and the scaling coefficient agree with those from the literature to high numerical accuracy.},
	urldate = {2024-03-26},
	journal = {Annals of Physics},
	author = {{\v S}untajs, J. and Prosen, T. and Vidmar, L.},
	month = dec,
	year = {2021},
	keywords = {Anderson model, Level curvatures, Level sensitivity statistics, Level statistics, Localization transition, Spectral form factor},
	pages = {168469},
	file = {ScienceDirect Snapshot:/Users/lcv/Zotero/storage/GRHYH4YF/S0003491621000750.html:text/html;Submitted Version:/Users/lcv/Zotero/storage/L3PCEITB/{\v S}untajs et al. - 2021 - Spectral properties of three-dimensional Anderson .pdf:application/pdf},
}

@article{roy_random_2020,
	title = {Random matrix spectral form factor in kicked interacting fermionic chains},
	volume = {102},
	url = {https://link.aps.org/doi/10.1103/PhysRevE.102.060202},
	doi = {10.1103/PhysRevE.102.060202},
	abstract = {We study quantum chaos and spectral correlations in periodically driven (Floquet) fermionic chains with long-range two-particle interactions, in the presence and absence of particle-number conservation [???(1)] symmetry. We analytically show that the spectral form factor precisely follows the prediction of random matrix theory in the regime of long chains, and for timescales that exceed the so-called Thouless time which scales with the size ?? as ???(??2), or ???(??0), in the presence, or absence, of ???(1) symmetry, respectively. Using a random phase assumption which essentially requires a long-range nature of the interaction, we demonstrate that the Thouless time scaling is equivalent to the behavior of the spectral gap of a classical Markov chain, which is in the continuous-time (Trotter) limit generated, respectively, by a gapless ????????, or gapped ????????, spin-1/2 chain Hamiltonian.},
	number = {6},
	urldate = {2025-05-19},
	journal = {Phys. Rev. E},
	author = {Roy, Dibyendu and Prosen, Toma{\v z}},
	month = dec,
	year = {2020},
	pages = {060202},
	file = {APS Snapshot:/Users/lcv/Zotero/storage/VGULF99U/PhysRevE.102.html:text/html;Full Text:/Users/lcv/Zotero/storage/5PG4UU7F/Roy and Prosen - 2020 - Random matrix spectral form factor in kicked interacting fermionic chains.pdf:application/pdf},
}

@misc{kumar_leading_2025,
	title = {Leading and beyond leading-order spectral form factor in chaotic quantum many-body systems across all {Dyson} symmetry classes},
	url = {http://arxiv.org/abs/2502.04152},
	doi = {10.48550/arXiv.2502.04152},
	abstract = {We show the emergence of random matrix theory (RMT) spectral correlations in the chaotic phase of generic periodically kicked interacting quantum many-body systems by analytically calculating spectral form factor (SFF), \$K(t)\$, up to two leading orders in time, \$t\$. We explicitly consider the presence or absence of time reversal (\${\textbackslash}mathcal\{T\}\$) symmetry to investigate all three Dyson's symmetry classes. Our derivation only assumes random phase approximation to enable ensemble average. For \${\textbackslash}mathcal\{T\}\$-invariant systems with \${\textbackslash}mathcal\{T\}{\textasciicircum}2=1\$, we show that beyond the Thouless time \$t{\textasciicircum}*\$, the SFF takes the form \$K(t){\textbackslash}simeq 2t-2t{\textasciicircum}2/{\textbackslash}mathcal\{N\}\$ up to second order in time, where \${\textbackslash}mathcal\{N\}\$ is the Hilbert space dimension. This is identical to the result from circular orthogonal ensemble of RMT. In the absence of \${\textbackslash}mathcal\{T\}\$-symmetry, we show that \$K(t){\textbackslash}simeq t\$ beyond \$t{\textasciicircum}*\$, and there is no universal term in the second order, unlike the \${\textbackslash}mathcal\{T\}{\textasciicircum}2=1\$ case, in agreement with the result of circular unitary ensemble. For \${\textbackslash}mathcal\{T\}\$-invariant systems with \${\textbackslash}mathcal\{T\}{\textasciicircum}2=-1\$, we show that \$K(t){\textbackslash}simeq 2t+2t{\textasciicircum}2/{\textbackslash}mathcal\{N\}\$ up to two orders in time beyond \$t{\textasciicircum}*\$, in agreement with the result of circular symplectic ensemble. In all three cases, the system-size, \$L\$, scaling of \$t{\textasciicircum}*\$ is determined by eigenvalues of a doubly stochastic matrix \${\textbackslash}mathcal\{M\}\$. For strongly interacting fermionic chains, \${\textbackslash}mathcal\{M\}\$ is \$SU(2)\$ invariant in all three cases, leading to \$t{\textasciicircum}*{\textbackslash}propto L{\textasciicircum}2\$ in the presence of \$U(1)\$ symmetry. In the absence of \$U(1)\$ symmetry, we find \$t{\textasciicircum}*{\textbackslash}propto L{\textasciicircum}0\$, due to gapped non-degenerate second-largest eigenvalue of \${\textbackslash}mathcal\{M\}\$ or \$t{\textasciicircum}*{\textbackslash}propto {\textbackslash}ln(L)\$ due to gapped second-largest eigenvalue with degeneracy \${\textbackslash}propto L{\textasciicircum}{\textbackslash}zeta\$. Our calculation of SFF is plausible in higher space dimensions as well, where similar system-size scalings of \$t{\textasciicircum}*\$ can be obtained.},
	urldate = {2025-05-20},
	publisher = {arXiv},
	author = {Kumar, Vijay and Prosen, Toma{\v z} and Roy, Dibyendu},
	month = feb,
	year = {2025},
	note = {arXiv:2502.04152 [cond-mat]},
	keywords = {Condensed Matter - Statistical Mechanics, Mathematical Physics, Mathematics - Mathematical Physics, Nonlinear Sciences - Chaotic Dynamics, Quantum Physics},
	file = {Preprint PDF:/Users/lcv/Zotero/storage/5SH47LPQ/Kumar et al. - 2025 - Leading and beyond leading-order spectral form factor in chaotic quantum many-body systems across al.pdf:application/pdf;Snapshot:/Users/lcv/Zotero/storage/KA3D38RV/2502.html:text/html},
}

@article{roy_spectral_2022,
	title = {Spectral form factor in a minimal bosonic model of many-body quantum chaos},
	volume = {106},
	url = {https://link.aps.org/doi/10.1103/PhysRevE.106.024208},
	doi = {10.1103/PhysRevE.106.024208},
	abstract = {We study spectral form factor in periodically kicked bosonic chains. We consider a family of models where a Hamiltonian with the terms diagonal in the Fock space basis, including random chemical potentials and pairwise interactions, is kicked periodically by another Hamiltonian with nearest-neighbor hopping and pairing terms. We show that, for intermediate-range interactions, the random phase approximation can be used to rewrite the spectral form factor in terms of a bistochastic many-body process generated by an effective bosonic Hamiltonian. In the particle-number conserving case, i.e., when pairing terms are absent, the effective Hamiltonian has a non-Abelian ??????(1,1) symmetry, resulting in universal quadratic scaling of the Thouless time with the system size, irrespective of the particle number. This is a consequence of degenerate symmetry multiplets of the subleading eigenvalue of the effective Hamiltonian and is broken by the pairing terms. In such a case, we numerically find a nontrivial systematic system-size dependence of the Thouless time, in contrast to a related recent study for kicked fermionic chains.},
	number = {2},
	urldate = {2025-05-20},
	journal = {Phys. Rev. E},
	author = {Roy, Dibyendu and Mishra, Divij and Prosen, Toma{\v z}},
	month = aug,
	year = {2022},
	pages = {024208},
	file = {APS Snapshot:/Users/lcv/Zotero/storage/HRKKZ4AA/PhysRevE.106.html:text/html;Submitted Version:/Users/lcv/Zotero/storage/XGBCL2H2/Roy et al. - 2022 - Spectral form factor in a minimal bosonic model of many-body quantum chaos.pdf:application/pdf},
}

@article{chan_spectral_2021,
	title = {Spectral {Lyapunov} exponents in chaotic and localized many-body quantum systems},
	volume = {3},
	url = {https://link.aps.org/doi/10.1103/PhysRevResearch.3.023118},
	doi = {10.1103/PhysRevResearch.3.023118},
	abstract = {We consider the spectral statistics of the Floquet operator for disordered, periodically driven spin chains in their quantum chaotic and many-body localized (MBL) phases. The spectral statistics are characterized by the traces of powers ?? of the Floquet operator, and our approach hinges on the fact that for integer ?? in systems with local interactions, these traces can be re-expressed in terms of products of dual transfer matrices, each representing a spatial slice of the system. We focus on properties of the dual transfer matrix products as represented by a spectrum of Lyapunov exponents, which we call spectral Lyapunov exponents. In particular, we examine the features of this spectrum that distinguish chaotic and MBL phases. The transfer matrices can be block diagonalized using time-translation symmetry, and so the spectral Lyapunov exponents are classified according to a momentum in the time direction. For large ?? we argue that the leading Lyapunov exponents in each momentum sector tend to zero in the chaotic phase, while they remain finite in the MBL phase. These conclusions are based on results from three complementary types of calculation. We find exact results for the chaotic phase by considering a Floquet random quantum circuit with on-site Hilbert space dimension ?? in the large-?? limit. In the MBL phase, we show that the spectral Lyapunov exponents remain finite by systematically analyzing models of noninteracting systems, weakly coupled systems, and local integrals of motion. Numerically, we compute the Lyapunov exponents for a Floquet random quantum circuit and for the kicked Ising model in the two phases. As an additional result, we calculate exactly the higher-point spectral form factors (hpSFFs) in the large-?? limit and show that the generalized Thouless time scales logarithmically in system size for all hpSFFs in the large-?? chaotic phase.},
	number = {2},
	urldate = {2025-05-08},
	journal = {Phys. Rev. Res.},
	author = {Chan, Amos and De Luca, Andrea and Chalker, J. T.},
	month = may,
	year = {2021},
	pages = {023118},
	file = {APS Snapshot:/Users/lcv/Zotero/storage/GVP57KZZ/PhysRevResearch.3.html:text/html;Full Text PDF:/Users/lcv/Zotero/storage/SHWDF7N2/Chan et al. - 2021 - Spectral Lyapunov exponents in chaotic and localized many-body quantum systems.pdf:application/pdf},
}

@article{campos_venuti_equivalence_2010,
	title = {Equivalence between {XY} and dimerized models},
	volume = {81},
	url = {http://link.aps.org/doi/10.1103/PhysRevA.81.060101},
	doi = {10.1103/PhysRevA.81.060101},
	abstract = {The spin-1/2 chain with XY anisotropic coupling in the plane and the XX isotropic dimerized chain are shown to be equivalent in the bulk. For finite systems, we prove that the equivalence is exact in given parity sectors, after taking care of the precise boundary conditions. The proof is given constructively by finding unitary transformations that map the models onto each other. Moreover, we considerably generalized our mapping and showed that even in the case of fully site-dependent couplings the XY chain can be mapped onto an XX model. This result has potential application in the study of disordered systems.},
	number = {6},
	urldate = {2011-11-24},
	journal = {Phys. Rev. A},
	author = {Campos Venuti, Lorenzo and Roncaglia, Marco},
	month = jun,
	year = {2010},
	pages = {060101},
	file = {APS Snapshot:/Users/lcv/Zotero/storage/ZR6KSDKF/e060101.html:text/html;Campos Venuti and Roncaglia - 2010 - Equivalence between XY and dimerized models.pdf:/Users/lcv/Zotero/storage/NRSRTNG2/Campos Venuti and Roncaglia - 2010 - Equivalence between XY and dimerized models.pdf:application/pdf},
}

@article{Riddell_Bertini_2024, title={Rationally independent free fermions with local hopping}, volume={110}, ISSN={2470-0045, 2470-0053}, url={http://arxiv.org/abs/2404.12100}, DOI={10.1103/PhysRevE.110.064101}, abstractNote={Rationally independent free fermions are those where sums of single-particle energies multiplied by arbitrary rational coefficients vanish only if the coefficients are all zero. This property guaranties that they have no degeneracies in the many-body spectrum and gives them relaxation properties more similar to those of generic systems. Using classic results from number theory we provide minimal examples of rationally independent free fermion models for every system size in one dimension. This is accomplished by considering a free fermion model with a chemical potential, and hopping terms corresponding to all the divisors of the number of sites, each one with an incommensurate complex amplitude. We further discuss the many-body spectral statistics for these models and show that local probes -- like the ratio of consecutive level spacings -- look very similar to what is expected for the Poisson statistics. We however demonstrate that free fermion models can never have Poisson statistics with an analysis of the moments of the spectral form factor.}, number={6}, journal={Physical Review E}, author={Riddell, Jonathon and Bertini, Bruno}, year={2024}, month=dec, pages={064101} }

@article{cotler_black_2017,
	title = {Black holes and random matrices},
	volume = {2017},
	issn = {1029-8479},
	url = {https://doi.org/10.1007/JHEP05(2017)118},
	doi = {10.1007/JHEP05(2017)118},
	abstract = {We argue that the late time behavior of horizon fluctuations in large anti-de Sitter (AdS) black holes is governed by the random matrix dynamics characteristic of quantum chaotic systems. Our main tool is the Sachdev-Ye-Kitaev (SYK) model, which we use as a simple model of a black hole. We use an analytically continued partition function {\textbar}Z($\beta$ + it){\textbar}2 as well as correlation functions as diagnostics. Using numerical techniques we establish random matrix behavior at late times. We determine the early time behavior exactly in a double scaling limit, giving us a plausible estimate for the crossover time to random matrix behavior. We use these ideas to formulate a conjecture about general large AdS black holes, like those dual to 4D super-Yang-Mills theory, giving a provisional estimate of the crossover time. We make some preliminary comments about challenges to understanding the late time dynamics from a bulk point of view.},
	number = {5},
	urldate = {2024-05-21},
	journal = {J. High Energ. Phys.},
	author = {Cotler, Jordan S. and Gur-Ari, Guy and Hanada, Masanori and Polchinski, Joseph and Saad, Phil and Shenker, Stephen H. and Stanford, Douglas and Streicher, Alexandre and Tezuka, Masaki},
	month = may,
	year = {2017},
	keywords = {AdS-CFT Correspondence, 1/N Expansion, Field Theories in Lower Dimensions, Random Systems},
	pages = {118},
	file = {Full Text PDF:/Users/lcv/Zotero/storage/SQQZLRDV/Cotler et al. - 2017 - Black holes and random matrices.pdf:application/pdf},
}

@article{borwein_arithmetic_2011,
	title = {Some arithmetic properties of short random walk integrals},
	volume = {26},
	issn = {1572-9303},
	url = {https://doi.org/10.1007/s11139-011-9325-y},
	doi = {10.1007/s11139-011-9325-y},
	abstract = {We study the moments of the distance traveled by a walk in the plane with unit steps in random directions. While this historically interesting random walk is well understood from a modern probabilistic point of view, our own interest is in determining explicit closed forms for the moment functions and their arithmetic values at integers when only a small number of steps is taken. As a consequence of a more general evaluation, a closed form is obtained for the average distance traveled in three steps. This evaluation, as well as its proof, rely on explicit combinatorial properties, such as recurrence equations of the even moments (which are lifted to functional equations). The corresponding general combinatorial and analytic features are collected and made explicit in the case of 3 and 4 steps. Explicit hypergeometric expressions are given for the moments of a 3-step and 4-step walk and a general conjecture for even length walks is made.},
	number = {1},
	urldate = {2024-05-28},
	journal = {Ramanujan J},
	author = {Borwein, Jonathan M. and Nuyens, Dirk and Straub, Armin and Wan, James},
	month = oct,
	year = {2011},
	keywords = {05A10, 33C20, 60G50, Analytic continuation, High-dimensional integration, Hypergeometric functions, Random walks},
	pages = {109--132},
	file = {Full Text PDF:/Users/lcv/Zotero/storage/8NIYJ4H6/Borwein et al. - 2011 - Some arithmetic properties of short random walk in.pdf:application/pdf},
}

@article{oliviero_random_2021,
	title = {Random matrix theory of the isospectral twirling},
	volume = {10},
	issn = {2542-4653},
	url = {https://scipost.org/10.21468/SciPostPhys.10.3.076},
	doi = {10.21468/SciPostPhys.10.3.076},
	abstract = {SciPost Journals Publication Detail SciPost Phys. 10, 076 (2021) Random matrix theory of the isospectral twirling},
	number = {3},
	urldate = {2024-06-05},
	journal = {SciPost Physics},
	author = {Oliviero, Salvatore Francesco Emanuele and Leone, Lorenzo and Caravelli, Francesco and Hamma, Alioscia},
	month = mar,
	year = {2021},
	pages = {076},
	file = {Full Text PDF:/Users/lcv/Zotero/storage/LAAIMMUA/Oliviero et al. - 2021 - Random matrix theory of the isospectral twirling.pdf:application/pdf},
}

@article{prange_spectral_1997,
	title = {The {Spectral} {Form} {Factor} {Is} {Not} {Self}-{Averaging}},
	volume = {78},
	url = {https://link.aps.org/doi/10.1103/PhysRevLett.78.2280},
	doi = {10.1103/PhysRevLett.78.2280},
	abstract = {The form factor, ???(??), is the spectral statistic which best displays nonuniversal quasiclassical deviations from random matrix theory. Recent estimations of ???(??) for a single spectrum found interesting new effects of this type. It was supposed that ???(??) is self-averaging and thus did not require an ensemble average. We here argue that this supposition sometimes fails and that for many important systems an ensemble average is essential to see detailed properties of ???(??). In other systems, notably the nontrivial zeros of Riemann zeta function, it will be possible to see the nonuniversal properties by an analysis of a single spectrum.},
	number = {12},
	urldate = {2024-06-06},
	journal = {Phys. Rev. Lett.},
	author = {Prange, R. E.},
	month = mar,
	year = {1997},
	pages = {2280--2283},
	file = {APS Snapshot:/Users/lcv/Zotero/storage/C4W6UYPN/PhysRevLett.78.html:text/html;Submitted Version:/Users/lcv/Zotero/storage/GAZ53LLZ/Prange - 1997 - The Spectral Form Factor Is Not Self-Averaging.pdf:application/pdf},
}

@article{flack_statistics_2020,
	title = {Statistics of the spectral form factor in the self-dual kicked {Ising} model},
	volume = {2},
	url = {https://link.aps.org/doi/10.1103/PhysRevResearch.2.043403},
	doi = {10.1103/PhysRevResearch.2.043403},
	abstract = {We compute the full probability distribution of the spectral form factor in the self-dual kicked Ising model by providing an exact lower bound for each moment and verifying numerically that the latter is saturated. We show that at long enough times the probability distribution agrees exactly with the prediction of random-matrix theory if one identifies the appropriate ensemble of random matrices. We find that this ensemble is not the circular orthogonal one{\textemdash}composed of symmetric random unitary matrices and associated with time-reversal-invariant evolution operators{\textemdash}but is an ensemble of random matrices on a more restricted symmetric space [depending on the parity of the number of sites this space is either ??????(??)/???(??) or ???(2???)/???(??){\texttimes}???(??)]. Even if the latter ensembles yield the same averaged spectral form factor as the circular orthogonal ensemble, they show substantially enhanced fluctuations. This behavior is due to a recently identified additional antiunitary symmetry of the self-dual kicked Ising model.},
	number = {4},
	urldate = {2024-07-18},
	journal = {Phys. Rev. Res.},
	author = {Flack, Ana and Bertini, Bruno and Prosen, Toma{\v z}},
	month = dec,
	year = {2020},
	pages = {043403},
	file = {APS Snapshot:/Users/lcv/Zotero/storage/CLKZB87Z/PhysRevResearch.2.html:text/html;Full Text PDF:/Users/lcv/Zotero/storage/5C9SW7HQ/Flack et al. - 2020 - Statistics of the spectral form factor in the self.pdf:application/pdf},
}

@article{cotler_chaos_2017,
	title = {Chaos, complexity, and random matrices},
	volume = {2017},
	issn = {1029-8479},
	url = {https://doi.org/10.1007/JHEP11(2017)048},
	doi = {10.1007/JHEP11(2017)048},
	abstract = {Chaos and complexity entail an entropic and computational obstruction to describing a system, and thus are intrinsically difficult to characterize. In this paper, we consider time evolution by Gaussian Unitary Ensemble (GUE) Hamiltonians and analytically compute out-of-time-ordered correlation functions (OTOCs) and frame potentials to quantify scrambling, Haar-randomness, and circuit complexity. While our random matrix analysis gives a qualitatively correct prediction of the late-time behavior of chaotic systems, we find unphysical behavior at early times including an \$\$ {\textbackslash}mathcal\{O\}(1) \$\$scrambling time and the apparent breakdown of spatial and temporal locality. The salient feature of GUE Hamiltonians which gives us computational traction is the Haar-invariance of the ensemble, meaning that the ensemble-averaged dynamics look the same in any basis. Motivated by this property of the GUE, we introduce k-invariance as a precise definition of what it means for the dynamics of a quantum system to be described by random matrix theory. We envision that the dynamical onset of approximate k-invariance will be a useful tool for capturing the transition from early-time chaos, as seen by OTOCs, to late-time chaos, as seen by random matrix theory.},
	language = {},
	number = {11},
	urldate = {2024-07-19},
	journal = {J. High Energ. Phys.},
	author = {Cotler, Jordan and Hunter-Jones, Nicholas and Liu, Junyu and Yoshida, Beni},
	month = nov,
	year = {2017},
	keywords = {AdS-CFT Correspondence, Random Systems, Black Holes, Matrix Models},
	pages = {48},
	file = {Full Text PDF:/Users/lcv/Zotero/storage/UU4XVTHH/Cotler et al. - 2017 - Chaos, complexity, and random matrices.pdf:application/pdf},
}

@article{berry_semiclassical_1985,
	title = {Semiclassical theory of spectral rigidity},
	volume = {400},
	url = {https://royalsocietypublishing.org/doi/10.1098/rspa.1985.0078},
	doi = {10.1098/rspa.1985.0078},
	abstract = {The spectral rigidity ?(L) of a set of quantal energy levels is the mean square deviation of the spectral staircase from the straight line that best fits it over a range of L mean level spacings. In the semiclassical limit (?{\textrightarrow}0), formulae are obtained giving ?(L) as a sum over classical periodic orbits. When L <<Lmax, where Lmax {\textasciitilde}?-(N-1) for a system of N freedoms, ?(L) is shown to display the following universal behaviour as a result of properties of very long classical orbits: if the system is classically integrable (all periodic orbits filling tori), ?(L)= 1/5L (as in an uncorrelated (Poisson) eigenvalue sequence); if the system is classically chaotic (all periodic orbits isolated and unstable) and has no symmetry, ?(L) = In L/2$\pi$2+Dif 1<< L<< Lmax (as in the gaussian unitary ensemble of random-matrix theory); if the system is chaotic and has time-reversal symmetry, ?(L) = In L/$\pi$2 + Eif 1 << L << Lmax (as in the gaussian orthogonal ensemble). When L >> Lmax, ?(L) saturates non-universally at a value, determined by short classical orbits, of order ?{\textendash}(N{\textendash}1) for integrable systems and In (?-1) for chaotic systems. These results are obtained by using the periodic-orbit expansion for the spectral density, together with classical sum rules for the intensities of long orbits and a semiclassical sum rule restricting the manner in which their contributions interfere. For two examples ?(L) is studied in detail: the rectangular billiard (integrable), and the Riemann zeta function (assuming its zeros to be the eigenvalues of an unknown quantum system whose unknown classical limit is chaotic).},
	number = {1819},
	urldate = {2025-04-25},
	journal = {Proceedings of the Royal Society of London. A. Mathematical and Physical Sciences},
	author = {Berry, Michael Victor},
	month = jan,
	year = {1985},
	pages = {229--251},
	file = {PDF:/Users/lcv/Zotero/storage/ZN3E8CGD/Berry - 1997 - Semiclassical theory of spectral rigidity.pdf:application/pdf},
}

@article{liu_spectral_2018,
	title = {Spectral form factors and late time quantum chaos},
	volume = {98},
	url = {https://link.aps.org/doi/10.1103/PhysRevD.98.086026},
	doi = {10.1103/PhysRevD.98.086026},
	abstract = {This is a collection of notes about spectral form factors of standard ensembles in random matrix theory, written for the practical usage of the current study of late time quantum chaos. More precisely, we consider the Gaussian unitary ensemble, the Gaussian orthogonal ensemble, the Gaussian symplectic ensemble, the Wishart-Laguerre unitary ensemble, the Wishart-Laguerre orthogonal ensemble, and the Wishart-Laguerre symplectic ensemble. These results and their physics applications cover a threefold classification of late time quantum chaos in terms of spectral form factors.},
	number = {8},
	urldate = {2025-04-25},
	journal = {Phys. Rev. D},
	author = {Liu, Junyu},
	month = oct,
	year = {2018},
	pages = {086026},
	file = {APS Snapshot:/Users/lcv/Zotero/storage/RCT8GE2Z/PhysRevD.98.html:text/html;Full Text PDF:/Users/lcv/Zotero/storage/NR4D2CRS/Liu - 2018 - Spectral form factors and late time quantum chaos.pdf:application/pdf},
}

@article{suntajs_quantum_2020,
	title = {Quantum chaos challenges many-body localization},
	volume = {102},
	url = {https://link.aps.org/doi/10.1103/PhysRevE.102.062144},
	doi = {10.1103/PhysRevE.102.062144},
	abstract = {Characterizing states of matter through the lens of their ergodic properties is a fascinating new direction of research. In the quantum realm, the many-body localization (MBL) was proposed to be the paradigmatic ergodicity breaking phenomenon, which extends the concept of Anderson localization to interacting systems. At the same time, random matrix theory has established a powerful framework for characterizing the onset of quantum chaos and ergodicity (or the absence thereof) in quantum many-body systems. Here we numerically study the spectral statistics of disordered interacting spin chains, which represent prototype models expected to exhibit MBL. We study the ergodicity indicator ??=log10?(??H/??Th), which is defined through the ratio of two characteristic many-body time scales, the Thouless time ??Th and the Heisenberg time ??H, and hence resembles the logarithm of the dimensionless conductance introduced in the context of Anderson localization. We argue that the ergodicity breaking transition in interacting spin chains occurs when both time scales are of the same order, ??Th???H, and ?? becomes a system-size independent constant. Hence, the ergodicity breaking transition in many-body systems carries certain analogies with the Anderson localization transition. Intriguingly, using a Berezinskii-Kosterlitz-Thouless correlation length we observe a scaling solution of ?? across the transition, which allows for detection of the crossing point in finite systems. We discuss the observation that scaled results in finite systems by increasing the system size exhibit a flow towards the quantum chaotic regime.},
	number = {6},
	urldate = {2025-02-19},
	journal = {Phys. Rev. E},
	author = {{\v S}untajs, Jan and Bon{\v c}a, Janez and Prosen, Toma{\v z} and Vidmar, Lev},
	month = dec,
	year = {2020},
	pages = {062144},
	file = {APS Snapshot:/Users/lcv/Zotero/storage/JVKDWX52/PhysRevE.102.html:text/html;Full Text:/Users/lcv/Zotero/storage/5WBEISZJ/{\v S}untajs et al. - 2020 - Quantum chaos challenges many-body localization.pdf:application/pdf},
}

@article{prosen_ergodic_1999,
	title = {Ergodic properties of a generic nonintegrable quantum many-body system in the thermodynamic limit},
	volume = {60},
	url = {https://link.aps.org/doi/10.1103/PhysRevE.60.3949},
	doi = {10.1103/PhysRevE.60.3949},
	abstract = {We study a generic but simple nonintegrable quantum many-body system of locally interacting particles, namely, a kicked-parameter (??,??) model of spinless fermions on a one-dimensional lattice (equivalent to a kicked Heisenberg XX -?? chain of 1/2 spins). The statistical properties of the dynamics (quantum ergodicity and quantum mixing) and the nature of quantum transport in the thermodynamic limit are considered as the kick parameters (which control the degree of nonintegrability) are varied. We find and demonstrate ballistic transport and nonergodic, nonmixing dynamics (implying infinite conductivity at all temperatures) in the integrable regime of zero or very small kick parameters, and more generally and importantly, also in the nonintegrable regime of intermediate values of kicked parameters, whereas only for sufficiently large kick parameters do we recover quantum ergodicity and mixing implying normal (diffusive) transport. We propose an order parameter (charge stiffness ??) which controls the phase transition from nonmixing and nonergodic dynamics (ordered phase, ?? {\textgreater}0) to mixing and ergodic dynamics (disordered phase, ?? =0) in the thermodynamic limit. Furthermore, we find exponential decay of time correlation functions in the regime of mixing dynamics. The results are obtained consistently within three different numerical and analytical approaches: (i) time evolution of a finite system and direct computation of time correlation functions, (ii) full diagonalization of finite systems and statistical analysis of stationary data, and (iii) algebraic construction of quantum invariants of motion of an infinite system, in particular the time-averaged observables.},
	number = {4},
	urldate = {2025-02-24},
	journal = {Phys. Rev. E},
	author = {Prosen, Toma{\v z}},
	month = oct,
	year = {1999},
	pages = {3949--3968},
	file = {APS Snapshot:/Users/lcv/Zotero/storage/D27ZS8W8/PhysRevE.60.html:text/html;Submitted Version:/Users/lcv/Zotero/storage/WJ8BTQ8H/Prosen - 1999 - Ergodic properties of a generic nonintegrable quantum many-body system in the thermodynamic limit.pdf:application/pdf},
}

@article{lagrange_theorie_1781,
	title = {Th{\'e}orie des variations s{\'e}culaires des {\'e}l{\'e}ments des plan{\`e}tes},
	volume = {V5},
	journal = {Gauthier-Villars, Paris},
	author = {Lagrange, J. L.},
	year = {1781},
	note = {{\OE}uvres, V5, Gauthier-Villars, Paris 1781, pp.125-207,},
	pages = {125--207},
}

@article{kunz_probability_1999,
	title = {The probability distribution of the spectral form factor in random matrix theory},
	volume = {32},
	issn = {0305-4470},
	url = {https://dx.doi.org/10.1088/0305-4470/32/11/011},
	doi = {10.1088/0305-4470/32/11/011},
	abstract = {We determine the probability distribution of the spectral form factor from random matrix theory in the orthogonal and unitary case. We show that it is an exponential one, parametrized by the average value of this quantity.},
	language = {},
	number = {11},
	urldate = {2025-02-27},
	journal = {J. Phys. A: Math. Gen.},
	author = {Kunz, Herv{\'e}},
	month = mar,
	year = {1999},
	pages = {2171},
	file = {PDF:/Users/lcv/Zotero/storage/3P2Q56BP/Kunz - 1999 - The probability distribution of the spectral form factor in random matrix theory.pdf:application/pdf},
}

@article{borovkov_estimates_1981,
	title = {On {Estimates} of the {Rate} of {Convergence} in the {Invariance} {Principle} for {Banach} {Spaces}},
	volume = {25},
	issn = {0040-585X},
	url = {https://epubs.siam.org/doi/10.1137/1125087},
	doi = {10.1137/1125087},
	number = {4},
	urldate = {2026-01-27},
	journal = {Theory Probab. Appl.},
	publisher = {Society for Industrial and Applied Mathematics},
	author = {Borovkov, A. A. and Sakhanenko, A. I.},
	month = jan,
	year = {1981},
	pages = {721--731},
}

@inproceedings{borovkov_rate_1983,
	address = {Berlin, Heidelberg},
	title = {On the rate of convergence in invariance principle},
	isbn = {978-3-540-38701-5},
	doi = {10.1007/BFb0072903},
	booktitle = {Probability {Theory} and {Mathematical} {Statistics}},
	publisher = {Springer},
	author = {Borovkov, A. A. and Sahanenko, A. I.},
	editor = {Prokhorov, Jurii V. and It{\^o}, Kiyosi},
	year = {1983},
	pages = {59--66},
}

@article{borovkov_estimates_1972,
	title = {On estimates in the principle of invariance},
	url = {https://www.mathnet.ru/php/archive.phtml?wshow=paper&jrnid=dan&paperid=37173&option_lang=eng},
	urldate = {2026-01-28},
	journal = {Dokl. Akad. Nauk SSSR},
	author = {Borovkov, A. A.},
	year = {1972},
	pages = {1037--1039},
	file = {A.~A.~Borovkov, {\textquotedblleft}On estimates in the principle of invariance{\textquotedblright}, Dokl. Akad. Nauk SSSR, 206\:5 (1972), 1037{\textendash}1039:/Users/lcv/Zotero/storage/W732SB5B/archive.html:text/html},
}

@book{falconer_geometry_1985,
	address = {Cambridge},
	series = {Cambridge {Tracts} in {Mathematics}},
	title = {The {Geometry} of {Fractal} {Sets}},
	isbn = {978-0-521-33705-2},
	url = {https://www.cambridge.org/core/books/geometry-of-fractal-sets/7ECAB3C918C66E62AB673246B2CDE6FA},
	abstract = {This book contains a rigorous mathematical treatment of the geometrical aspects of sets of both integral and fractional Hausdorff dimension. Questions of local density and the existence of tangents of such sets are studied, as well as the dimensional properties of their projections in various directions. In the case of sets of integral dimension the dramatic differences between regular 'curve-like' sets and irregular 'dust like' sets are exhibited. The theory is related by duality to Kayeka sets (sets of zero area containing lines in every direction). The final chapter includes diverse examples of sets to which the general theory is applicable: discussions of curves of fractional dimension, self-similar sets, strange attractors, and examples from number theory, convexity and so on. There is an emphasis on the basic tools of the subject such as the Vitali covering lemma, net measures and Fourier transform methods.},
	urldate = {2025-02-27},
	publisher = {Cambridge University Press},
	author = {Falconer, K. J.},
	year = {1985},
	doi = {10.1017/CBO9780511623738},
	file = {Snapshot:/Users/lcv/Zotero/storage/PLZCLHF6/7ECAB3C918C66E62AB673246B2CDE6FA.html:text/html},
}

@article{lawler_dimension_2001,
	title = {The {Dimension} of the {Planar} {Brownian} {Frontier} is 4/3},
	volume = {8},
	issn = {1945-001X},
	url = {https://link.intlpress.com/JDetail/1806606468776357890},
	doi = {10.4310/MRL.2001.v8.n4.a1},
	abstract = {International Press of Boston - publishers of scholarly mathematical and scientific journals and books},
	number = {4},
	urldate = {2025-02-27},
	journal = {Math. Res. Lett.},
	author = {Lawler, Gregory F. and Schramm, Oded and Werner, Wendelin},
	year = {2001},
	pages = {401--411},
	file = {Full Text PDF:/Users/lcv/Zotero/storage/6XIJYITN/Lawler et al. - The Dimension of the Planar Brownian Frontier is 43.pdf:application/pdf},
}

@book{mandelbrot_fractal_1982,
	address = {San Francisco},
	edition = {2nd prt. edition},
	title = {The {Fractal} {Geometry} of {Nature}},
	isbn = {978-0-7167-1186-5},
	abstract = {Clouds are not spheres, mountains are not cones, and lightening does not travel in a straight line. The complexity of nature's shapes differs in kind, not merely degree, from that of the shapes of ordinary geometry, the geometry of fractal shapes. Now that the field has expanded greatly with many active researchers, Mandelbrot presents the definitive overview of the origins of his ideas and their new applications. The Fractal Geometry of Nature is based on his highly acclaimed earlier work, but has much broader and deeper coverage and more extensive illustrations.},
	language = {English},
	publisher = {Times Books},
	author = {Mandelbrot, Benoit B.},
	month = jan,
	year = {1982},
}

@article{pearson_problem_1905,
	title = {The Problem of the Random Walk},
	volume = {72},
	copyright = {1905 Springer Nature Limited},
	issn = {1476-4687},
	url = {https://www.nature.com/articles/072342a0},
	doi = {10.1038/072342a0},
	abstract = {I HAVE to thank several correspondents for assistance in this matter. Mr. G. J. Bennett finds that my case of n = 3 can really be solved by elliptic integrals, and, of course, Lord Rayleigh{\textquoteright}s solution for n very large is most valuable, and may very probably suffice for the purposes I have immediately in view. I ought to have known it, but my reading of late years has drifted into other channels, and one does not expect to find the first stage in a biometric problem provided in a memoir on sound. From the purely mathematical standpoint, it would still be very interesting to have a solution for n comparatively small. The sections through the axis of Lord Rayleigh{\textquoteright}s frequency surface for n large are simply the {\textquotedblleft}cocked hat{\textquotedblright} or normal curve of errors type; for n = 2 or 3 they do not resemble this form at all. For n=2, for example, the sections are of the form of a double U, thus UU, the whole being symmetrical about the centre vertical corresponding to r = 0, but each U itself being asymmetrical. The system has three vertical asymptotes. It would be interesting to see how the multiplicity of types for n small passes over into the normal curve of errors when n is made large.},
	number = {1867},
	urldate = {2025-03-05},
	journal = {Nature},
	author = {Pearson, Karl},
	month = aug,
	year = {1905},
	keywords = {Humanities and Social Sciences, multidisciplinary, Science},
	pages = {342--342},
	file = {Full Text PDF:/Users/lcv/Zotero/storage/ADIS6ET9/Pearson - 1905 - The Problem of the Random Walk.pdf:application/pdf;pearson1905:/Users/lcv/Zotero/storage/R7S6HAYB/pearson1905.pdf:application/pdf},
}

@article{rayleigh_problem_1905,
	title = {The Problem of the Random Walk},
	volume = {72},
	copyright = {1905 Springer Nature Limited},
	issn = {1476-4687},
	url = {https://www.nature.com/articles/072318a0},
	doi = {10.1038/072318a0},
	abstract = {THIS problem, proposed by Prof. Karl Pearson in the current number of NATURE, is the same as that of the composition of n iso-periodic vibrations of unit amplitude and of phases distributed at random, considered in Phil. Mag., x., p. 73, 1880; xlvii., p. 246, 1899; ({\textquotedblleft}Scientific Papers,{\textquotedblright} i., p. 491, iv., p. 370). If n be very great, the probability sought is},
	number = {1866},
	urldate = {2025-03-05},
	journal = {Nature},
	author = {Rayleigh},
	month = aug,
	year = {1905},
	keywords = {Humanities and Social Sciences, multidisciplinary, Science},
	pages = {318--318},
	file = {PDF:/Users/lcv/Zotero/storage/QG5WFK9Y/Rayleigh - 1905 - The Problem of the Random Walk.pdf:application/pdf},
}

@book{bhatia_MatrixAnalysis_1977,
  title = {Matrix {{Analysis}}},
  author = {Bhatia, Rajendra},
  year = {1997},
  volume = {169},
  publisher = {{Springer New York}},
  doi = {10.1007/978-1-4612-0653-8}
}

@book{parisi_statistical_1998,
	address = {Reading, Mass},
	title = {Statistical {Field} {Theory}},
	isbn = {978-0-7382-0051-4},
	abstract = {Specifically written to introduce researchers and advanced students to the modern developments in statistical mechanics and field theory, this book?s leitmotiv is functional integration and its application to different areas of physics. The book acts as both an introduction to and a lucid overview of the major problems in statistical field theory.},
	publisher = {CRC Press},
	author = {Parisi, Giorgio},
	year = {1998},
}

@book{whittaker_course_2021,
	address = {Cambridge},
	edition = {5},
	title = {A {Course} of {Modern} {Analysis}},
	isbn = {978-1-316-51893-9},
	url = {https://www.cambridge.org/core/books/course-of-modern-analysis/B2DDAE32B565419FA452C51FA03F6F3D},
	publisher = {Cambridge University Press},
	author = {Whittaker, E. T. and Watson, G. N.},
	editor = {Moll, Victor H.},
	year = {2021},
	doi = {10.1017/9781009004091},
	file = {Snapshot:/Users/lcv/Zotero/storage/8X2ZJPPC/B2DDAE32B565419FA452C51FA03F6F3D.html:text/html},
}

@book{billingsley_probability_2012,
	title = {Probability and Measure},
	isbn = {9781118341919},
	publisher = {John Wiley \& Sons},
	author = {Billingsley, Patrick},
	month = jan,
	year = {2012},
	keywords = {Mathematics / Algebra / General, Mathematics / Probability \& Statistics / General}
}

@article{caceres_spectral_2022,
	title = {Spectral form factor in sparse {SYK} models},
	volume = {2022},
	issn = {1029-8479},
	url = {https://doi.org/10.1007/JHEP08(2022)236},
	doi = {10.1007/JHEP08(2022)236},
	abstract = {We investigate the spectral form factor of the sparse Sachdev-Ye-Kitaev model. We use numerical methods to establish that at intermediate times the connected part of the spectral form factor is the dominant one. These connected contributions arise from fluctuations around the disconnected geometry, not from a new saddle point. A similar effect was previously conjectured in SYK but required a value of N out of reach of current numerical simulations.},
	language = {},
	number = {8},
	urldate = {2025-03-17},
	journal = {J. High Energ. Phys.},
	author = {C{\'a}ceres, Elena and Misobuchi, Anderson and Raz, Amir},
	month = aug,
	year = {2022},
	keywords = {2D Gravity, Holography and Condensed Matter Physics (AdS/CMT), Random Systems},
	pages = {236},
	file = {Full Text PDF:/Users/lcv/Zotero/storage/4B7YKFK5/C{\'a}ceres et al. - 2022 - Spectral form factor in sparse SYK models.pdf:application/pdf},
}

@article{choi_supersymmetric_2023,
	title = {Supersymmetric {Spectral} {Form} {Factor} and {Euclidean} {Black} {Holes}},
	volume = {131},
	url = {https://link.aps.org/doi/10.1103/PhysRevLett.131.151602},
	doi = {10.1103/PhysRevLett.131.151602},
	abstract = {The late-time behavior of spectral form factor (SFF) encodes the inherent discreteness of a quantum system, which should be generically nonvanishing. We study an index analog of the microcanonical spectrum form factor in four-dimensional ?? =4 super Yang-Mills theory. In the large ?? limit and at large enough energy, the most dominant saddle corresponds to the black hole in the AdS bulk. This gives rise to the slope that decreases exponentially for a small imaginary chemical potential, which is a natural analog of an early time. We find that the {\textquotedblleft}late-time{\textquotedblright} behavior is governed by the multicut saddles that arise in the index matrix model, which are nonperturbatively subdominant at early times. These saddles become dominant at late times, preventing the SFF from decaying. These multicut saddles correspond to the orbifolded Euclidean black holes in the AdS bulk, therefore giving the geometrical interpretation of the {\textquotedblleft}ramp{\textquotedblright}. Our analysis is done in the standard AdS/CFT setting without ensemble average or wormholes.},
	number = {15},
	urldate = {2025-03-17},
	journal = {Phys. Rev. Lett.},
	author = {Choi, Sunjin and Kim, Seok and Song, Jaewon},
	month = oct,
	year = {2023},
	pages = {151602},
	file = {APS Snapshot:/Users/lcv/Zotero/storage/EMHVV629/PhysRevLett.131.html:text/html;Full Text PDF:/Users/lcv/Zotero/storage/ISFUX8D8/Choi et al. - 2023 - Supersymmetric Spectral Form Factor and Euclidean Black Holes.pdf:application/pdf},
}

@article{winer_hydrodynamic_2022,
	title = {Hydrodynamic {Theory} of the {Connected} {Spectral} form {Factor}},
	volume = {12},
	url = {https://link.aps.org/doi/10.1103/PhysRevX.12.021009},
	doi = {10.1103/PhysRevX.12.021009},
	abstract = {One manifestation of quantum chaos is a random-matrix-like fine-grained energy spectrum. Prior to the inverse level spacing time, random matrix theory predicts a {\textquotedblleft}ramp{\textquotedblright} of increasing variance in the connected part of the spectral form factor. However, in realistic quantum chaotic systems, the finite-time dynamics of the spectral form factor is much richer, with the pure random matrix ramp appearing only at sufficiently late time. In this article, we present a hydrodynamic theory of the connected spectral form factor prior to the inverse level spacing time. We first derive a general formula for the spectral form factor of a system with almost-conserved sectors in terms of return probabilities and spectral form factors within each sector. Next we argue that the theory of fluctuating hydrodynamics can be adapted from the usual Schwinger-Keldysh contour to the periodic time setting needed for the spectral form factor, and we show explicitly that the general formula is recovered in the case of energy diffusion. We also initiate a study of interaction effects in this modified hydrodynamic framework and show how the Thouless time defined as the time required for the spectral form factor to approach the pure random matrix result is controlled by the slow hydrodynamics modes. We then extend the formalism to Floquet systems, where a ramp is expected but with a different coefficient, and we derive a crossover formula from the Hamiltonian ramp to the Floquet ramp when the Floquet drive is weak. Taken together, these results establish an effective field theory of chaotic spectral correlations which predicts the random matrix ramp at late time and computes corrections to it at earlier times.},
	number = {2},
	urldate = {2025-03-17},
	journal = {Phys. Rev. X},
	author = {Winer, Michael and Swingle, Brian},
	month = apr,
	year = {2022},
	pages = {021009},
	file = {APS Snapshot:/Users/lcv/Zotero/storage/34AEGJQ5/PhysRevX.12.html:text/html;Full Text PDF:/Users/lcv/Zotero/storage/QIF5LVHD/Winer and Swingle - 2022 - Hydrodynamic Theory of the Connected Spectral form Factor.pdf:application/pdf},
}

@preprint{legramandi_moments_2024,
	title = {The moments of the spectral form factor in {SYK}},
	url = {http://arxiv.org/abs/2412.18737},
	doi = {10.48550/arXiv.2412.18737},
	abstract = {In chaotic quantum systems the spectral form factor exhibits a universal linear ramp and plateau structure with superimposed erratic oscillations. The mean signal and the statistics of the noise can be probed by the moments of the spectral form factor, also known as higher-point spectral form factors. We identify saddle points in the SYK model that describe the moments during the ramp region. Perturbative corrections around the saddle point indicate that SYK mimics random matrix statistics for the low order moments, while large deviations for the high order moments arise from fluctuations near the edge of the spectrum. The leading correction scales inversely with the number of random parameters in the SYK Hamiltonian and is amplified in a sparsified version of the SYK model, which we study numerically, even in regimes where a linear ramp persists. Finally, we study the \$q=2\$ SYK model, whose spectral form factor exhibits an exponential ramp with increased noise. These findings reveal how deviations from random matrix universality arise in disordered systems and motivate their interpretation from a bulk gravitational perspective.},
	urldate = {2025-03-17},
	publisher = {arXiv},
	author = {Legramandi, Andrea and Talwar, Neil},
	month = dec,
	year = {2024},
	keywords = {Condensed Matter - Disordered Systems and Neural Networks, Condensed Matter - Statistical Mechanics, Condensed Matter - Strongly Correlated Electrons, High Energy Physics - Theory, Quantum Physics},
	file = {Preprint PDF:/Users/lcv/Zotero/storage/MNTVIJWQ/Legramandi and Talwar - 2024 - The moments of the spectral form factor in SYK.pdf:application/pdf;Snapshot:/Users/lcv/Zotero/storage/5UT72I9V/2412.html:text/html},
}

@article{prosen_time_1998,
	title = {Time {Evolution} of a {Quantum} {Many}-{Body} {System}: {Transition} from {Integrability} to {Ergodicity} in the {Thermodynamic} {Limit}},
	volume = {80},
	shorttitle = {Time {Evolution} of a {Quantum} {Many}-{Body} {System}},
	url = {https://link.aps.org/doi/10.1103/PhysRevLett.80.1808},
	doi = {10.1103/PhysRevLett.80.1808},
	abstract = {Numerical evidence is given for nonergodic (nonmixing) behavior, exhibiting ideal transport, of a simple nonintegrable many-body quantum system in the thermodynamic limit, namely, the kicked ??- ?? model of spinless fermions on a ring. However, for sufficiently large kick parameters ?? and ?? we recover quantum ergodicity, and normal transport, which can be described by random matrix theory.},
	number = {9},
	urldate = {2025-03-19},
	journal = {Phys. Rev. Lett.},
	author = {Prosen, Toma{\v z}},
	month = mar,
	year = {1998},
	pages = {1808--1811},
	file = {APS Snapshot:/Users/lcv/Zotero/storage/EU4FKZKF/PhysRevLett.80.html:text/html;PDF:/Users/lcv/Zotero/storage/95SF72ZM/Prosen - 1998 - Time Evolution of a Quantum Many-Body System Transition from Integrability to Ergodicity in the The.pdf:application/pdf;Submitted Version:/Users/lcv/Zotero/storage/M5EWCGQH/Prosen - 1998 - Time Evolution of a Quantum Many-Body System Transition from Integrability to Ergodicity in the The.pdf:application/pdf},
}

@preprint{altland_statistics_2025,
	title = {Statistics of the {Random} {Matrix} {Spectral} {Form} {Factor}},
	url = {http://arxiv.org/abs/2503.21386},
	doi = {10.48550/arXiv.2503.21386},
	abstract = {The spectral form factor of random matrix theory plays a key role in the description of disordered and chaotic quantum systems. While its moments are known to be approximately Gaussian, corrections subleading in the matrix dimension, \$D\$, have recently come to attention, with conflicting results in the literature. In this work, we investigate these departures from Gaussianity for both circular and Gaussian ensembles. Using two independent approaches -- sine-kernel techniques and supersymmetric field theory -- we identify the form factor statistics to next leading order in a \$D{\textasciicircum}\{-1\}\$ expansion. Our sine-kernel analysis highlights inconsistencies with previous studies, while the supersymmetric approach backs these findings and suggests an understanding of the statistics from a complementary perspective. Our findings fully agree with numerics. They are presented in a pedagogical way, highlighting new pathways (and pitfalls) in the study of statistical signatures at next leading order, which are increasingly becoming important in applications.},
	urldate = {2025-04-09},
	publisher = {arXiv},
	author = {Altland, Alex and Divi, Francisco and Micklitz, Tobias and Pappalardi, Silvia and Rezaei, Maedeh},
	month = mar,
	year = {2025},
	keywords = {Condensed Matter - Disordered Systems and Neural Networks, High Energy Physics - Theory, Quantum Physics},
	file = {Preprint PDF:/Users/lcv/Zotero/storage/AA6BUC6F/Altland et al. - 2025 - Statistics of the Random Matrix Spectral Form Factor.pdf:application/pdf;Snapshot:/Users/lcv/Zotero/storage/YBQHBRP5/2503.html:text/html},
}

@article{brezin_spectral_1997,
	title = {Spectral form factor in a random matrix theory},
	volume = {55},
	url = {https://link.aps.org/doi/10.1103/PhysRevE.55.4067},
	doi = {10.1103/PhysRevE.55.4067},
	abstract = {In the theory of disordered systems the spectral form factor S($\tau$), the Fourier transform of the two-level correlation function with respect to the difference of energies, is linear for $\tau${\textless}$\tau$c and constant for $\tau${\textgreater}$\tau$c. Near zero and near $\tau$c it exhibits oscillations which have been discussed in several recent papers. In problems of mesoscopic fluctuations and quantum chaos a comparison is often made with a random matrix theory. It turns out that, even in the simplest Gaussian unitary ensemble, these oscillations have not yet been studied there. For random matrices, the two-level correlation function $\rho$($\lambda$1,$\lambda$2) exhibits several well-known universal properties in the large-N limit. Its Fourier transform is linear as a consequence of the short-distance universality of $\rho$($\lambda$1,$\lambda$2). However the crossover near zero and $\tau$c requires one to study these correlations for finite N. For this purpose we use an exact contour-integral representation of the two-level correlation function which allows us to characterize these crossover oscillatory properties. This representation is then extended to the case in which the Hamiltonian is the sum of a deterministic part H0 and of a Gaussian random potential V. Finally, we consider the extension to the time-dependent case.},
	number = {4},
	urldate = {2025-04-14},
	journal = {Phys. Rev. E},
	author = {Br{\'e}zin, E. and Hikami, S.},
	month = apr,
	year = {1997},
	pages = {4067--4083},
	file = {APS Snapshot:/Users/lcv/Zotero/storage/WTWGDCRR/PhysRevE.55.html:text/html;Submitted Version:/Users/lcv/Zotero/storage/UNJYXJBY/Br{\'e}zin and Hikami - 1997 - Spectral form factor in a random matrix theory.pdf:application/pdf},
}

@article{garcia-garcia_exact_2018,
	title = {{Exact moments of the Sachdev-Ye-Kitaev model up to order $1/N^2$ }},
	volume = {2018},
	issn = {1029-8479},
	url = {https://doi.org/10.1007/JHEP04(2018)146},
	doi = {10.1007/JHEP04(2018)146},
	abstract = {We analytically evaluate the moments of the spectral density of the q-body Sachdev-Ye-Kitaev (SYK) model, and obtain order 1/N2 corrections for all moments, where N is the total number of Majorana fermions. To order 1/N, moments are given by those of the weight function of the Q-Hermite polynomials. Representing Wick contractions by rooted chord diagrams, we show that the 1/N2 correction for each chord diagram is proportional to the number of triangular loops of the corresponding intersection graph, with an extra grading factor when q is odd. Therefore the problem of finding 1/N2 corrections is mapped to a triangle counting problem. Since the total number of triangles is a purely graph-theoretic property, we can compute them for the q = 1 and q = 2 SYK models, where the exact moments can be obtained analytically using other methods, and therefore we have solved the moment problem for any q to 1/N2 accuracy. The moments are then used to obtain the spectral density of the SYK model to order 1/N2. We also obtain an exact analytical result for all contraction diagrams contributing to the moments, which can be evaluated up to eighth order. This shows that the Q-Hermite approximation is accurate even for small values of N.},
	number = {4},
	urldate = {2025-04-14},
	journal = {J. High Energ. Phys.},
	author = {Garc{\'i}a-Garc{\'i}a, Antonio M. and Jia, Yiyang and Verbaarschot, Jacobus J. M.},
	month = apr,
	year = {2018},
	keywords = {1/N Expansion, Matrix Models, Random Systems},
	pages = {146},
	file = {Full Text PDF:/Users/lcv/Zotero/storage/KNY57KEV/Garc{\'i}a-Garc{\'i}a et al. - 2018 - Exact moments of the Sachdev-Ye-Kitaev model up to order 1N2.pdf:application/pdf},
}

@misc{KITP2015,
  author = {Alexei Kitaev},
  title = {Proceedings of the {Kavli Institute for Theoretical Physics} {- KITP}},
  year = {2015},
  note = {\url{http://online.kitp.ucsb.edu/online/entangled15/kitaev/}, \url{http://online.kitp.ucsb.edu/online/entangled15/kitaev2/}}
}

@article{sachdev1993gapless, title={{Gapless spin-fluid ground state in a random quantum Heisenberg magnet}}, volume={70}, url={https://link.aps.org/doi/10.1103/PhysRevLett.70.3339}, DOI={10.1103/PhysRevLett.70.3339}, abstractNote={We examine the spin-S quantum Heisenberg magnet with Gaussian-random, infinite-range exchange interactions. The quantum-disordered phase is accessed by generalizing to SU(M) symmetry and studying the large M limit. For large S the ground state is a spin glass, while quantum fluctuations produce a spin-fluid state for small S. The spin-fluid phase is found to be generically gapless—the average, zero temperature, local dynamic spin susceptibility obeys χ¯(ω)∼ln(1/‖ω‖)+i(π/2)sgn(ω) at low frequencies.}, number={21}, journal={Physical Review Letters}, publisher={American Physical Society}, author={Sachdev, Subir and Ye, Jinwu}, year={1993}, month=may, pages={3339-3342} }

@article{bravyi2002fermionic, title={Fermionic Quantum Computation}, volume={298}, ISSN={0003-4916}, url={https://www.sciencedirect.com/science/article/pii/S0003491602962548}, DOI={10.1006/aphy.2002.6254}, abstractNote={We define a model of quantum computation with local fermionic modes (LFMs)—sites which can be either empty or occupied by a fermion. With the standard correspondence between the Foch space of m LFMs and the Hilbert space of m qubits, simulation of one fermionic gate takes O(m) qubit gates and vice versa. We show that using different encodings, the simulation cost can be reduced to O(log m) and a constant, respectively. Nearest neighbors fermionic gates on a graph of bounded degree can be simulated at a constant cost. A universal set of fermionic gates is found. We also study computation with Majorana fermions which are basically halves of LFMs. Some connection to qubit quantum codes is made.}, number={1}, journal={Annals of Physics}, author={Bravyi, Sergey B. and Kitaev, Alexei}, year={2002}, month=may, pages={210-226} }

@inbook{sarosi2017ads, title={{AdS$_{2}$ holography and the SYK model}}, volume={323}, url={https://pos.sissa.it/323/001}, DOI={10.22323/1.323.0001}, abstractNote={These are lecture notes based on a series of lectures presented at the XIII Modave Summer School in Mathematical physics aimed at PhD students and young postdocs. The goal is to give an introduction to some of the recent developments in understanding holography in two bulk dimensions, and its connection to microscopics of near extremal black holes. The first part reviews the motivation to study, and the problems (and their interpretations) with holography for AdS$_2$ spaces. The second part is about the Jackiw-Teitelboim theory and nearly-AdS$_2$ spaces. The third part introduces the Sachdev-Ye-Kitaev model, reviews some of the basic calculations and discusses what features make the model exciting.}, booktitle={Proceedings of XIII Modave Summer School in Mathematical Physics - PoS (Modave2017)}, author={Sarosi, Gabor}, year={2018}, month=mar, pages={001} }

@article{Liu_Chen_Balents_2018, title={{Quantum entanglement of the Sachdev-Ye-Kitaev models}}, volume={97}, url={https://link.aps.org/doi/10.1103/PhysRevB.97.245126}, DOI={10.1103/PhysRevB.97.245126}, abstractNote={The Sachdev-Ye-Kitaev (SYK) model is a quantum-mechanical model of fermions interacting with ��-body random couplings. For ��=2, it describes free particles and is nonchaotic in the many-body sense, while for ��>2 it is strongly interacting and exhibits many-body chaos. In this work we study the entanglement entropy (EE) of the SYK⁢�� models for a bipartition of �� real or complex fermions into subsystems containing 2⁢�� real/�� complex fermions and ��−2⁢��/��−�� fermions in the remainder. For the free model SYK⁢2, we obtain an analytic expression for the EE, derived from the ��-Jacobi random matrix ensemble. Furthermore, we use the replica trick and path-integral formalism to show that the EE is maximal when one subsystem is small, i.e., ��≪��, for arbitrary ��. We also demonstrate that the EE for the SYK4 model is noticeably smaller than the Page value when the two subsystems are comparable in size, i.e., ��/�� is ��⁡(1). Finally, we explore the EE for a model with both SYK2 and SYK4 interactions and find a crossover from SYK2 (low-temperature) to SYK4 (high-temperature) behavior as we vary energy.}, number={24}, journal={Physical Review B}, publisher={American Physical Society}, author={Liu, Chunxiao and Chen, Xiao and Balents, Leon}, year={2018}, month=jun, pages={245126} }

@article{campos_venuti_exact_2011,
	title = {Exact {Infinite}-{Time} {Statistics} of the {Loschmidt} {Echo} for a {Quantum} {Quench}},
	volume = {107},
	url = {http://link.aps.org/doi/10.1103/PhysRevLett.107.010403},
	doi = {10.1103/PhysRevLett.107.010403},
	abstract = {The equilibration dynamics of a closed quantum system is encoded in the long-time distribution function of generic observables. In this Letter we consider the Loschmidt echo generalized to finite temperature, and show that we can obtain an exact expression for its long-time distribution for a closed system described by a quantum XY chain following a sudden quench. In the thermodynamic limit the logarithm of the Loschmidt echo becomes normally distributed, whereas for small quenches in the opposite, quasicritical regime, the distribution function acquires a universal double-peaked form indicating poor equilibration. These findings, obtained by a central limit theorem-type result, extend to completely general models in the small-quench regime.},
	number = {1},
	urldate = {2011-11-24},
	journal = {Phys. Rev. Lett.},
	author = {Campos Venuti, Lorenzo and Jacobson, N. Tobias and Santra, Siddhartha and Zanardi, Paolo},
	month = jul,
	year = {2011},
	pages = {010403},
	file = {APS Snapshot:/Users/lcv/Zotero/storage/8PDGE8MT/e010403.html:text/html;Campos Venuti et al. - 2011 - Exact Infinite-Time Statistics of the Loschmidt Ec.pdf:/Users/lcv/Zotero/storage/Z2JSNKDF/Campos Venuti et al. - 2011 - Exact Infinite-Time Statistics of the Loschmidt Ec.pdf:application/pdf},
}

@article{Borwein_Straub_Wan_Zudilin_Zagier_2012, title={Densities of Short Uniform Random Walks}, volume={64}, ISSN={0008-414X, 1496-4279}, url={https://www.cambridge.org/core/journals/canadian-journal-of-mathematics/article/densities-of-short-uniform-random-walks/6027921D1B5966D8C201271067E61696}, DOI={10.4153/CJM-2011-079-2}, abstractNote={We study the densities of uniform random walks in the plane. A special focus is on the case of short walks with three or four steps and, less completely, those with five steps. As one of the main results, we obtain a hypergeometric representation of the density for four steps, which complements the classical elliptic representation in the case of three steps. It appears unrealistic to expect similar results for more than five steps. New results are also presented concerning the moments of uniform random walks and, in particular, their derivatives. Relations with Mahler measures are discussed.}, number={5}, journal={Canadian Journal of Mathematics}, author={Borwein, Jonathan M. and Straub, Armin and Wan, James and Zudilin, Wadim and Zagier, Don}, year={2012}, month=oct, pages={961–990}}

@article{Pozsgay_2021, title={Weak integrability breaking and level spacing distribution}, volume={11}, ISSN={2542-4653}, url={https://scipost.org/SciPostPhys.11.2.037}, DOI={10.21468/SciPostPhys.11.2.037}, abstractNote={SciPost Journals Publication Detail SciPost Phys. 11, 037 (2021) Weak integrability breaking and level spacing distribution}, number={2}, journal={SciPost Physics}, author={Sz\'{a}sz-Schagrin, D\'{a}vid and Pozsgay, Bal\'{a}zs and Tak\'{a}cs, G\'{a}bor}, year={2021}, month=aug, pages={037} }

@article{Jasser_Odavic_Hamma_2025, title={Stabilizer entropy and entanglement complexity in the {Sachdev}-{Ye}-{Kitaev} model}, volume={112}, DOI={10.1103/rz86-47h3}, abstractNote={The Sachdev-Ye-Kitaev (SYK) model is of paramount importance for the understanding of both strange metals and a microscopic theory of two-dimensional gravity. We study the interplay between stabilizer Rényi entropy and entanglement entropy in both the ground state and highly excited states of the SYK−4+SYK−2 model, interpolating the highly chaotic four-body interactions model with the integrable two-body interactions one. The interplay between these quantities is also assessed through universal statistics of the entanglement spectrum and its antiflatness. We find that SYK-4 is indeed characterized by a complex pattern of both entanglement and nonstabilizer resources, while SYK-2 is nonuniversal and not complex. We discuss the fragility and robustness of these features depending on the interpolation parameter.}, number={17}, journal={Physical Review B}, publisher={American Physical Society}, author={Jasser, Barbara and Odavić, Jovan and Hamma, Alioscia}, year={2025}, month=nov, pages={174204} }

@book{Leiserson_Rivest_Stein_2001, address={Cambridge, Mass.}, edition={2nd edition}, title={Introduction to Algorithms, Second Edition}, ISBN={978-0-262-03293-3}, abstractNote={The first edition won the award for Best 1990 Professional and Scholarly Book in Computer Science and Data Processing by the Association of American Publishers.There are books on algorithms that are rigorous but incomplete and others that cover masses of material but lack rigor. Introduction to Algorithms combines rigor and comprehensiveness.The book covers a broad range of algorithms in depth, yet makes their design and analysis accessible to all levels of readers. Each chapter is relatively self-contained and can be used as a unit of study. The algorithms are described in English and in a pseudocode designed to be readable by anyone who has done a little programming. The explanations have been kept elementary without sacrificing depth of coverage or mathematical rigor.The first edition became the standard reference for professionals and a widely used text in universities worldwide. The second edition features new chapters on the role of algorithms, probabilistic analysis and randomized algorithms, and linear programming, as well as extensive revisions to virtually every section of the book. In a subtle but important change, loop invariants are introduced early and used throughout the text to prove algorithm correctness. Without changing the mathematical and analytic focus, the authors have moved much of the mathematical foundations material from Part I to an appendix and have included additional motivational material at the beginning.}, author={Leiserson, Charles E. and Rivest, Ronald L. and Stein, Clifford}, editor={Cormen, Thomas H.}, year={2001} }

@software{GIMP,
  author = {{The GIMP Development Team}},
  title = {GNU Image Manipulation Program (GIMP), Version 3.0.2. Community, Free Software (license GPLv3)},
  year = {2025},
  url = {https://gimp.org/},
}

@article{PhysRevA.110.062427,
  title = {Doped stabilizer states in many-body physics and where to find them},
  author = {Gu, Andi and Oliviero, Salvatore F. E. and Leone, Lorenzo},
  journal = {Phys. Rev. A},
  volume = {110},
  issue = {6},
  pages = {062427},
  numpages = {7},
  year = {2024},
  month = {Dec},
  publisher = {American Physical Society},
  doi = {10.1103/PhysRevA.110.062427},
  url = {https://link.aps.org/doi/10.1103/PhysRevA.110.062427}
}

@preprint{verrill_sums_2004,
    title = {Sums of squares of binomial coefficients, with applications to {Picard}-{Fuchs} equations},
    url = {http://arxiv.org/abs/math/0407327},
    doi = {10.48550/arXiv.math/0407327},
    abstract = {For a fixed integer N, and fixed numbers b\_1,...,b\_N, we consider sequences, the nth term (a\_n) of which is the sum of the squares of the terms in the expansion of (b\_1 + ... + b\_N){\textasciicircum}n. In the case all b\_i=1, we give a formula for a recurrence relation for the a\_n. Otherwise we give an algorithm for finding a recurrence relation. As an application, we give the Picard-Fuchs equations for certain families of elliptic curves.},
    urldate = {2025-05-01},
    publisher = {arXiv},
    author = {Verrill, H. A.},
    month = jul,
    year = {2004},
    keywords = {Mathematics - Algebraic Geometry, Mathematics - Combinatorics},
}

@article{verrill_root_1996,
    title = {Root lattices and pencils of varieties},
    volume = {36},
    issn = {0023-608X},
    url = {https://projecteuclid.org/journals/journal-of-mathematics-of-kyoto-university/volume-36/issue-2/Root-lattices-and-pencils-of-varieties/10.1215/kjm/1250518557.full},
    doi = {10.1215/kjm/1250518557},
    abstract = {Kyoto Journal of Mathematics},
    number = {2},
    urldate = {2025-05-01},
    journal = {Journal of Mathematics of Kyoto University},
    author = {Verrill, H. A.},
    month = jan,
    year = {1996},
    keywords = {14C22, 14D05, 14J28, 14M25},
    pages = {423--446},
}

@Article{e23081073,
AUTHOR = {Leone, Lorenzo and Oliviero, Salvatore F. E. and Hamma, Alioscia},
TITLE = {Isospectral Twirling and Quantum Chaos},
JOURNAL = {Entropy},
VOLUME = {23},
YEAR = {2021},
NUMBER = {8},
ARTICLE-NUMBER = {1073},
URL = {https://www.mdpi.com/1099-4300/23/8/1073},
PubMedID = {34441214},
ISSN = {1099-4300},
ABSTRACT = {We show that the most important measures of quantum chaos, such as frame potentials, scrambling, Loschmidt echo and out-of-time-order correlators (OTOCs), can be described by the unified framework of the isospectral twirling, namely the Haar average of a k-fold unitary channel. We show that such measures can then always be cast in the form of an expectation value of the isospectral twirling. In literature, quantum chaos is investigated sometimes through the spectrum and some other times through the eigenvectors of the Hamiltonian generating the dynamics. We show that thanks to this technique, we can interpolate smoothly between integrable Hamiltonians and quantum chaotic Hamiltonians. The isospectral twirling of Hamiltonians with eigenvector stabilizer states does not possess chaotic features, unlike those Hamiltonians whose eigenvectors are taken from the Haar measure. As an example, OTOCs obtained with Clifford resources decay to higher values compared with universal resources. By doping Hamiltonians with non-Clifford resources, we show a crossover in the OTOC behavior between a class of integrable models and quantum chaos. Moreover, exploiting random matrix theory, we show that these measures of quantum chaos clearly distinguish the finite time behavior of probes to quantum chaos corresponding to chaotic spectra given by the Gaussian Unitary Ensemble (GUE) from the integrable spectra given by Poisson distribution and the Gaussian Diagonal Ensemble (GDE).},
DOI = {10.3390/e23081073}
}

@preprint{mcdonough2025bridgingclassicalquantuminformation,
      title={Bridging Classical and Quantum Information Scrambling with the Operator Entanglement Spectrum}, 
      author={Ben T. McDonough and Claudio Chamon and Justin H. Wilson and Thomas Iadecola},
      year={2025},
      eprint={2505.05575},
      archivePrefix={arXiv},
      primaryClass={quant-ph},
      url={https://arxiv.org/abs/2505.05575}, 
}

@article{delCampo_Molina-Vilaplana_Sonner_2017, title={Scrambling the spectral form factor: Unitarity constraints and exact results}, volume={95}, DOI={10.1103/PhysRevD.95.126008}, abstractNote={Quantum speed limits set an upper bound to the rate at which a quantum system can evolve, and as such can be used to analyze the scrambling of information. To this end, we consider the survival probability of a thermofield double state under unitary time evolution which is related to the analytic continuation of the partition function. We provide an exponential lower bound to the survival probability with a rate governed by the inverse of the energy fluctuations of the initial state. Further, we elucidate universal features of the nonexponential behavior at short and long times of evolution that follow from the analytic properties of the survival probability and its Fourier transform, both for systems with a continuous and for systems with a discrete energy spectrum. We find the spectral form factor in a number of illustrative models; notably, we obtain the exact answer in the Gaussian unitary ensemble for any �� with excellent agreement with recent numerical studies. We also discuss the relationship of our findings to models of black hole information loss, such as the Sachdev-Ye-Kitaev model dual to AdS2, as well as higher-dimensional versions of AdS/CFT.}, number={12}, journal={Physical Review D}, publisher={American Physical Society}, author={del Campo, A. and Molina-Vilaplana, J. and Sonner, J.}, year={2017}, month=jun, pages={126008} }

@article{Cornelius_Xu_Saxena_Chenu_delCampo_2022, title={Spectral Filtering Induced by Non-Hermitian Evolution with Balanced Gain and Loss: Enhancing Quantum Chaos}, volume={128}, DOI={10.1103/PhysRevLett.128.190402}, abstractNote={The dynamical signatures of quantum chaos in an isolated system are captured by the spectral form factor, which exhibits as a function of time a dip, a ramp, and a plateau, with the ramp being governed by the correlations in the level spacing distribution. While decoherence generally suppresses these dynamical signatures, the nonlinear non-Hermitian evolution with balanced gain and loss (BGL) in an energy-dephasing scenario can enhance manifestations of quantum chaos. In the Sachdev-Ye-Kitaev model and random matrix Hamiltonians, BGL increases the span of the ramp, lowering the dip as well as the value of the plateau, providing an experimentally realizable physical mechanism for spectral filtering. The chaos enhancement due to BGL is optimal over a family of filter functions that can be engineered with fluctuating Hamiltonians.}, number={19}, journal={Physical Review Letters}, publisher={American Physical Society}, author={Cornelius, Julien and Xu, Zhenyu and Saxena, Avadh and Chenu, Aurélia and del Campo, Adolfo}, year={2022}, month=may, pages={190402} }

@article{MatsoukasRoubeas_Beau_Santos_delCampo_2023, title={Unitarity breaking in self-averaging spectral form factors}, volume={108}, DOI={10.1103/PhysRevA.108.062201}, abstractNote={The complex Fourier transform of the two-point correlator of the energy spectrum of a quantum system is known as the spectral form factor (SFF). It constitutes an essential diagnostic tool for phases of matter and quantum chaos. In black hole physics, it describes the survival probability (fidelity) of a thermofield double state under unitary time evolution. However, detailed properties of the SFF of isolated quantum systems with generic spectra are smeared out by large temporal fluctuations, whose minimization requires disorder or time averages. This requirement holds for any system size, that is, the SFF is non-self-averaging. Exploiting the fidelity-based interpretation of this quantity, we prove that using filters and disorder and time averages of the SFF involve unitarity breaking, i.e., open quantum dynamics described by a quantum channel that suppresses quantum noise. Specifically, averaging over Hamiltonian ensembles, time averaging, and frequency filters can be described by the class of mixed-unitary quantum channels in which information loss can be recovered. Frequency filters are associated with a time-continuous master equation generalizing energy dephasing. We also discuss the use of eigenvalue filters. They are linked to non-Hermitian Hamiltonian evolution without quantum jumps, whose long-time behavior is described by a Hamiltonian deformation. We show that frequency and energy filters make the SFF self-averaging at long times.}, number={6}, journal={Physical Review A}, publisher={American Physical Society}, author={Matsoukas-Roubeas, Apollonas S. and Beau, Mathieu and Santos, Lea F. and del Campo, Adolfo}, year={2023}, month=dec, pages={062201} }

@article{Xu_Chenu_Prosen_delCampo_2021, title={Thermofield dynamics: Quantum chaos versus decoherence}, volume={103}, DOI={10.1103/PhysRevB.103.064309}, abstractNote={Quantum chaos imposes universal spectral signatures that govern the thermofield dynamics of a many-body system in isolation. The fidelity between the initial and time-evolving thermofield double states exhibits as a function of time a decay, dip, ramp, and plateau. Sources of decoherence give rise to a nonunitary evolution and result in information loss. Energy dephasing gradually suppresses quantum noise fluctuations and the dip associated with spectral correlations. Decoherence further delays the appearance of the dip and shortens the span of the linear ramp associated with chaotic behavior. The interplay between signatures of quantum chaos and information loss is determined by the competition among the decoherence, dip, and plateau characteristic times, as demonstrated in the stochastic Sachdev-Ye-Kitaev model.}, number={6}, journal={Physical Review B}, publisher={American Physical Society}, author={Xu, Zhenyu and Chenu, Aurelia and Prosen, Tomaž and del Campo, Adolfo}, year={2021}, month=feb, pages={064309} }

@article{Sokolovic_Mali_Odavic_Radosevic_Medvedeva_Botha_Shukrinov_Tekic_2017, title={Devil’s staircase and the absence of chaos in the dc- and ac-driven overdamped {Frenkel}-{Kontorova} model}, volume={96}, DOI={10.1103/PhysRevE.96.022210}, abstractNote={The devil’s staircase structure arising from the complete mode locking of an entirely nonchaotic system, the overdamped dc+ac driven Frenkel-Kontorova model with deformable substrate potential, was observed. Even though no chaos was found, a hierarchical ordering of the Shapiro steps was made possible through the use of a previously introduced continued fraction formula. The absence of chaos, deduced here from Lyapunov exponent analyses, can be attributed to the overdamped character and the Middleton no-passing rule. A comparative analysis of a one-dimensional stack of Josephson junctions confirmed the disappearance of chaos with increasing dissipation. Other common dynamic features were also identified through this comparison. A detailed analysis of the amplitude dependence of the Shapiro steps revealed that only for the case of a purely sinusoidal substrate potential did the relative sizes of the steps follow a Farey sequence. For nonsinusoidal (deformed) potentials, the symmetry of the Stern-Brocot tree, depicting all members of particular Farey sequence, was seen to be increasingly broken, with certain steps being more prominent and their relative sizes not following the Farey rule.}, number={2}, journal={Physical Review E}, author={Sokolović, I. and Mali, P. and Odavić, J. and Radošević, S. and Medvedeva, S. Yu. and Botha, A. E. and Shukrinov, Yu. M. and Tekić, J.}, year={2017}, month=aug, pages={022210} }

@article{CotlerErratum2018, title={Erratum to: Black holes and random matrices}, volume={2018}, ISSN={1029-8479}, DOI={10.1007/JHEP09(2018)002}, abstractNote={We have found a minor normalization error in some of the plots in this paper. This error has no effect on the qualitative or quantitative conclusions of the paper.}, number={9}, journal={Journal of High Energy Physics}, author={Cotler, Jordan S. and Gur-Ari, Guy and Hanada, Masanori and Polchinski, Joseph and Saad, Phil and Shenker, Stephen H. and Stanford, Douglas and Streicher, Alexandre and Tezuka, Masaki}, year={2018}, month=sept, pages={2} }

@misc{zenodo,
  title        = {Zenodo repository https://zenodo.org/records/18630121},
  year         = "2026",
  publisher    = {Zenodo},
  url          = {https://zenodo.org/records/18630121}
}

@book{Goldstein_Safko_2002, address={San Francisco}, title={Classical mechanics}, ISBN={0-201-31611-0}, archiveLocation={avec moi}, abstractNote={This revision retains the best features of the Second Edition while including new material reflecting advances in nonlinear dynamics, chaos, and fractal geometries.}, publisher={Addison Wesley}, author={Goldstein, Herbert and Safko, John L}, year={2002}, language={English} }

@misc{Weisstein_EulerMascheroni,
  author       = {Weisstein, Eric W.},
  title        = {Euler-Mascheroni constant},
  howpublished = {From \emph{MathWorld}---A Wolfram Web Resource, 
                  \url{https://mathworld.wolfram.com/Euler-MascheroniConstant.html}},
  year         = {2026},
  note         = {Accessed March 18, 2026}
}

\end{document}